\newtheorem{theorem}{Theorem}[section]
\newtheorem{lemma}[theorem]{Lemma} 
\newtheorem{remark}[theorem]{Remark}
\newtheorem{definition}[theorem]{Definition}
\newtheorem{proposition}[theorem]{Proposition}
\newcommand{\diag}{\text{diag}}
\newcommand{\trace}{\text{Tr}\,}
\newcommand{\C}{\mathbb{C}}
\newcommand{\Id}{\text{Id}}
\newcommand{\E}{\mathbb{E}}
\newcommand{\real}{\text{Re}}
\newcommand{\Bnorm}[1]{\Vert  #1 \Vert_{B_1}}
\newcommand{\wBnorm}[1]{\Vert #1 \Vert_{B_{1, w}}}
\newcommand{\innerproduct}[1]{ \langle #1 \rangle }
\newcommand{\fmend}{f \left( X_{\ell} \right)}
\newcommand{\inff}{\underset{f \in \mathcal{F}}{\inf}}
\newcommand{\supf}{\underset{f \in \mathcal{F}}{\sup}}
\newcommand{\dcone}[1]{\mathcal{K}_{\ast}  \left(  #1 \right) }
\begin{document}
\title{On the convex geometry of blind deconvolution and matrix completion\thanks{The results of this paper have been presented in part at the 52nd Annual Asilomar Conference on Signals, Systems, and Computers, October 28-31, 2018, Pacific Grove, USA \cite{asilomar2018}}}
\author{ Felix Krahmer\thanks{Dept. of Mathematics, Technische Universit\"at M\"unchen}, Dominik St\"oger\thanks{Dept. of Electrical and Computer Engineering, University of Southern California}}
\maketitle
\thispagestyle{plain}
\pagestyle{plain}
\begin{abstract}
Low-rank matrix recovery from structured measurements has been a topic of intense study in the last decade and many important problems like matrix completion and blind deconvolution have been formulated in this framework. An important benchmark method to solve these problems is to minimize the nuclear norm, a convex proxy for the rank. A common approach to establish recovery guarantees for this convex program relies on the construction of a so-called approximate dual certificate. However, this approach provides only limited insight in various respects. Most prominently, the noise bounds exhibit seemingly suboptimal dimension factors. In this paper we take a novel, more geometric viewpoint to analyze both the matrix completion and the blind deconvolution scenario. We find that for both these applications the dimension factors in the noise bounds are not an artifact of the proof, but the problems are intrinsically badly conditioned. We show, however, that bad conditioning only arises for very small noise levels: Under mild assumptions that include many realistic noise levels we derive near-optimal error estimates for blind deconvolution under adversarial noise.
\end{abstract}

%\noindent \textbf{\textit{Keywords:}} Convex relaxation, matrix completion, blind deconvolution, noise robustness, nuclear norm minimization, adversarial noise

\section{Introduction}\label{section:introduction}
A number of recent works have explored the observation that various ill-posed inverse problems in signal processing, imaging, and machine learning can be naturally formulated as the task of recovering a low-rank matrix $X_0 \in \mathbb{C}^{n_1 \times n_2}$ from an underdetermined system of structured linear measurements
\begin{equation*}
y= \mathcal{A} \left(X_0 \right) + e \in \mathbb{C}^m,
\end{equation*}
where $\mathcal{A}: \mathbb{C}^{n_1 \times n_2} \rightarrow \mathbb{C}^m $ is a linear map and $e \in \mathbb{C}^m$, $\Vert e \Vert \le \tau$, represents additive noise. Such problems include, for example, matrix completion \cite{candes2009exact}, phase retrieval \cite{candes2013phaselift}, blind deconvolution \cite{ahmed2014blind}, robust PCA \cite{candes2011robust}, and demixing \cite{troppdemixing}. In this paper, we aim to analyze the worst case scenario, that is, we do not make any assumptions on the noise except for the bound on its Euclidean norm (this scenario is sometimes referred to as {\em adversarial noise}, as it allows for noise specifically designed to be most harmful in a given situation). A natural first  approach to recover $X_0$ that remains an important benchmark is to solve the semidefinite program
\begin{align*}
\text{minimize} \quad &\Vert X \Vert_{\ast}\\
\text{subject to} \quad  &\Vert y - \mathcal{A} \left(X\right) \Vert \le \tau,
\end{align*}
where $\Vert \cdot \Vert_{\ast}$ denotes the nuclear norm, i.e., the sum of the singular values. Recovery guarantees have been shown under the assumption that the measurement operator $ \mathcal{A}$ possesses a certain degree of randomness. To establish such guarantees various proof strategies have been proposed, including approaches via the restricted isometry property \cite{recht2010guaranteed,liu2011universal}, descent cone analysis \cite{chandrasekaran2012convex}, and so-called approximate dual certificates \cite{gross2010quantum,gross2011recovering}. While the latter approach remains state of the art for many structured problems including the highly relevant problems of randomized blind deconvolution and matrix completion, it seemingly has some disadvantages. 
%A common proof strategy to establish such guarantees in a number of scenarios, in particular for random measurements with structure imposed by applications,  has been to show the existence of a so-called approximate dual certificate. 
%However, although this approach often ensures exact recovery guarantees for a near-optimal amount of measurements it has several drawbacks. 
Most prominently, the resulting recovery guarantees take the form 
\begin{equation}\label{ineq:additionalfactor}
\Vert \hat{X} - X_0 \Vert_F \lesssim \sqrt{n_1} \tau,
\end{equation}
where $\hat{X}$ denotes a minimizer of the semidefinite program above and $\Vert \cdot \Vert_F $ denotes the Frobenius norm, whereas under comparable normalization, the first two approaches, when applicable, give rise to superior recovery guarantees of the form
\begin{equation*}
\Vert \hat{X} - X_0 \Vert_F \lesssim \tau.
\end{equation*}
%However, measurement operators, are however unrealistic in most scenarios of practical interest. 
Before this paper it was open whether the additional dimension scaling factor in (\ref{ineq:additionalfactor}) is a proof artifact. Similarly, for randomized blind deconvolution one of the coherence terms appearing in the result was believed to arise only from the proof technique (cf.~\cite[Remark 2]{ling2017blind}).\\

Another drawback of proceeding via an approximate dual certificate is that it gives only limited insight into geometric properties of the problems such as the null-space property \cite{cohendahmendevore}, which is also an important ingredient for the study of some more efficient non-convex algorithms \cite{irlsfornasier,irlskuemmerle}.\\ %In certain cases this approach via dual certificates also seems to produce proof artifacts for the coherence constraints (see, e.g., \cite[Section 2.3]{jung2018blind}).\\

Approaches via descent cone analysis \cite{chandrasekaran2012convex}, in contrast, provide much more geometric insight. %In recent years this approach has gained popularity in analyzing convex relaxations for low-rank matrix recovery and its vector-case analogue compressed sensing \cite{candes2006compressed,donoho2006compressed},  also stronger results than the approach via dual certificates. 
The underlying idea of such approaches is to study the minimum conic singular value defined by
\begin{equation*}
\lambda_{\min} \left( \mathcal{A}, \mathcal{K}  \right) := \underset{Z \in \mathcal{K}\backslash \left\{0\right\} }{\inf} \frac{\Vert \mathcal{A} \left(Z\right)\Vert}{\Vert Z \Vert_F}
\end{equation*}
for $\mathcal{K}$ the descent cone of the underlying atomic norm -- the nuclear norm in case of low-rank matrix recovery. For a more detailed review of this approach including a precise definition of the descent cone we refer to Section \ref{section:descentcone} below. Through the study of the minimum conic singular value many superior results were obtained for low-rank recovery problems, most importantly in the context of phase retrieval \cite{kueng2016low,kueng2017low}. Furthermore, minimum conic singular values can also help to understand certain nonlinear measurement models \cite{plannonlinear}. \\

For all these reasons, it would be desirable to apply this approach also for matrix completion and blind deconvolution. A challenge that one faces, however, is that for both problems one cannot hope to recover all low-rank matrices; rather, only matrices that satisfy certain coherence constraints are admissible (cf.~the discussion in \cite[Section 5.4]{tropp2015convex}). In this article we address this challenge, providing the first geometric analysis of these problems. We find that the dimensional factors appearing in the error bounds are the true scaling of the minimum conic singular value and hence intrinsically relate to the underlying geometry. Nevertheless for blind deconvolution, near-optimal recovery is possible, if the noise level is not too small.

\subsection{Organization of the paper and our contribution}
In Section \ref{section:background} we will review blind deconvolution, matrix completion, as well as some techniques related to descent cone analysis. In Section \ref{section:mainresults} we will present the main results of this paper. Theorems~\ref{thm:mainresult} and \ref{thm:mainresultMC} establish that for both blind deconvolution and matrix completion, nuclear norm minimization is intrinsically ill-conditioned.
In contrast, Theorem~\ref{thm:stabilityBD} provides a near-optimal error bound for blind deconvolution when the noise level is not too small, implying that the conditioning problems only take effect for very small noise levels. The upper bounds for the minimum conic singular value which are the main ingredients of Theorems~\ref{thm:mainresult} and \ref{thm:mainresultMC} are derived in Section \ref{section:proofconic}. In Section \ref{section:stability} we prove the stability results for blind deconvolution.\\
We believe that not only our results, but also the proof techniques and geometric insights in this manuscript will be of general interest and help to obtain further understanding of low-rank matrix recovery models, in particular under coherence constraints. We discuss interesting directions for future research in Section \ref{section:outlook}.

\section{Background and related work}\label{section:background}

\subsection{Blind deconvolution}\label{section:backgroundBD}
Blind deconvolution problems arise in a number of different areas in science and engineering such as astronomy, imaging, and communications. The goal is to recover both an unknown signal and an unknown kernel from their convolution. In this paper we work with the circular convolution, which is defined by
\begin{equation*}
w\ast x:= \left(\sum_{j=1}^L w_j x_{k-j} \right)_{k=1}^L,
\end{equation*}
where the index difference $k-j$ is considered modulo $L$. Without further assumptions on $w$ and $x$ this bilinear map is far from injective. Consequently, it is crucial to impose structural constraints on both $w$ and $x$. Arguably, the simplest such model is given by linear constraints, that is, both $w$ and $x$ are constrained to known subspaces.  %Let us explain how blind deconvolution can be modeled as a low-rank matrix recovery problem \cite{ahmed2014blind}. 
Such a model is reasonable in many applications. In wireless communication, for example, it makes sense to assume that the channel behaviour is dominated by the most direct paths and for the signal $x$ a subspace model can be enforced by embedding the message via a suitable coding map into a higher-dimensional space before transmission.

 The first rigorous recovery guarantees for such a model were derived by Ahmed, Recht, and Romberg \cite{ahmed2014blind}. More precisely, they assume that $w=Bh$, where $B\in \mathbb{C}^{L\times K}$ is a fixed, deterministic matrix such that $B^*B=Id_{K}$ (i.e., $B$ is an isometry) and they model $ x=C\overline{m_0} $, where $\overline{m}_0$ denotes the complex-conjugate of $m_0$. Here, the matrix $C\in \mathbb{C}^{L \times K}$ is a random matrix, whose entries are independent and identically distributed with circular symmetric normal distribution $ \mathcal{CN} \left(0, \frac{1}{\sqrt{L}} \right) $. In this paper we also adopt this model. 
 
Using the well-known fact that the Fourier transform diagonalizes the circular convolution one can rewrite 
\begin{equation*}
 w \ast x = \sqrt{L} F^* \diag \left(Fw\right) Fx,
\end{equation*}
where $F \in \mathbb{C}^{L \times L}$ denotes the normalized, unitary discrete Fourier matrix, and
\begin{equation*}
 \widehat{w \ast x} := F \left( w\ast x \right)  = \sqrt{L}  \diag\left(FBh_0\right)FC\overline{m}_0.
\end{equation*}
Denoting by $b_{\ell}$ the $\ell$th row of the matrix $\overline{FB}$, and by $c_{\ell}$ the $\ell$th row of the matrix $\sqrt{L} FC$, one observes that
\begin{equation*}
\left( \widehat{w \ast x} \right)_{\ell}= b^*_{\ell} h_0 m^*_0 c_{\ell} = \trace \left( h_0 m^*_0 c_{\ell} b^*_{\ell} \right) = \innerproduct{ b_{\ell} c^*_{\ell}, h_0 m^*_0 }_F.
\end{equation*}
Furthermore, because of the rotation invariance of the circular symmetric normal distribution all the entries of the vectors $\left\{ c_{\ell} \right\}_{\ell=1}^L $ are (jointly) independent and identically distributed with distribution $ \mathcal{CN} \left(0,1\right) $. Noting that the expression $\innerproduct{h_0 m^*_0, b_{\ell} c^*_{\ell} }_F $ is linear in $h_0 m^*_0$, Ahmed, Recht, and Romberg \cite{ahmed2014blind}  defined the operator $ \mathcal{A}: \mathbb{C}^{K \times N} \rightarrow \mathbb{C}^L $ by
\begin{equation}\label{def:Ablinddeconv}
\left(\mathcal{A} \left(X\right) \right) \left(\ell\right) := \innerproduct{b_{\ell} c_{\ell}^*, X}_F
\end{equation}
obtaining the measurement model
\begin{equation*}\label{def:yBD}
y=\widehat{w \ast x} +e= \mathcal{A} \left(X_0\right) + e,
\end{equation*}
where $e \in \mathbb{C}^L $ is additive noise and $X_0 = h_0 m^*_0 $. The goal is then to determine $h_0$ and $m_0$ from $y\in \mathbb{C}^{L}$ up to the inherent scaling ambiguity, or, equivalently, to find the rank-one matrix $X_0= h_0 m^*_0$.\\

For $e=0$, among all solutions of the equation $y= \mathcal{A} \left(X_0 \right) $, the matrix $X_0 $ is the one with the smallest rank. For this reason, Ahmed, Recht, and Romberg \cite{ahmed2014blind} suggested minimizing a natural proxy for the rank, the nuclear norm $\Vert \cdot \Vert_{\ast} $, defined as the sum of the singular values of a matrix.
\begin{equation}\label{opt:SDP}
\begin{split}
\text{minimize } \quad  &\Vert X \Vert_{\ast} \\
\text{subject to} \quad & \Vert  \mathcal{A} \left(X\right) -y  \Vert \le \tau .
\end{split}
\end{equation}
Here $ \tau >0 $ is an a priori bound for the noise level, that is, we assume that $\Vert e \Vert \le \tau $. % In this paper all results will require that $\Vert e \Vert \le \tau $.\\%, which is the typical assumption in the compressed sensing literature.\\
%In \cite{ahmed2014blind} also performance guarantees for the SDP (\ref{opt:SDP}) have been shown.
For this semidefinite program, they establish the following recovery guarantee.
\begin{theorem}[\cite{ahmed2014blind}]\label{thm:ahmed}
Consider measurements of the form $y= \mathcal{A} \left(h_0 m^*_0\right) +e $ for $h_0 \in \mathbb{C}^K$, $m_0 \in \mathbb{C}^N $, $ e \in \mathbb{C}^L$,  and $\mathcal{A}$ as defined in (\ref{def:Ablinddeconv}). Assume that $\Vert e \Vert \le \tau $ and
	\begin{equation*}
	L/\log^3 L \gtrsim K\mu_{\max}^2 + N \max \left\{ \mu^2_{h_0};  \tilde{\mu}^2_{h_0}  \right\} .
	\end{equation*}
	Then with probability exceeding $ 1- \mathcal{O} \left( L^{-1}  \right) $ every minimizer $ \hat{X} $ of the SDP (\ref{opt:SDP}) satisfies
	\begin{equation}\label{ineq:noisebound}
	\Vert \hat{X} - h_0 m^*_0 \Vert_F \lesssim   \sqrt{K+N}   \tau.
	\end{equation}
\end{theorem}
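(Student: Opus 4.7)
The plan is to follow the approximate dual certificate strategy pioneered by Gross and adapted to this setting by Ahmed, Recht, and Romberg. Let $T = \{h_0 v^* + u m_0^*: u \in \mathbb{C}^K, v \in \mathbb{C}^N\}$ denote the tangent space at $X_0 = h_0 m_0^*$ to the variety of rank-one matrices, with orthogonal projection $\mathcal{P}_T$, and write $E := h_0 m_0^* / (\Vert h_0 \Vert \Vert m_0 \Vert)$ for the canonical subgradient of $\Vert \cdot \Vert_{\ast}$ at $X_0$ lying in $T$. The proof combines a local isometry on $T$, a golfing-based construction of an approximate dual certificate, and a convex-analytic consolidation.

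For the local isometry, a matrix Bernstein inequality applied to the sum of independent rank-one operators $\mathcal{P}_T (b_\ell c_\ell^*)(b_\ell c_\ell^*)^* \mathcal{P}_T$ yields $\Vert \mathcal{P}_T \mathcal{A}^* \mathcal{A} \mathcal{P}_T - \mathcal{P}_T\Vert \le 1/2$ with high probability. The coherence parameters $\mu_{\max}, \mu_{h_0}, \tilde{\mu}_{h_0}$ enter through the almost-sure and variance bounds on the summands; together with the hypothesis $L/\log^3 L \gtrsim K\mu_{\max}^2 + N\max\{\mu_{h_0}^2, \tilde{\mu}_{h_0}^2\}$ they deliver the concentration, and in particular $\Vert H_T\Vert_F \le \sqrt{2}\,\Vert \mathcal{A}(H_T)\Vert$ for every $H \in \mathbb{C}^{K\times N}$.

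For the dual certificate, construct $Y = \mathcal{A}^*(\lambda)$ via golfing: partition $\{1,\ldots,L\}$ into $P \asymp \log L$ disjoint batches $\Gamma_p$ of equal size, let $\mathcal{A}_p$ denote the restriction of $\mathcal{A}$ to $\Gamma_p$, and recursively set $W_0 := E$, $W_p := W_{p-1} - \mathcal{P}_T \mathcal{A}_p^* \mathcal{A}_p W_{p-1}$, and $\lambda := \sum_{p=1}^P \mathcal{A}_p(W_{p-1})$. Two concentration estimates are required in each batch: a contraction $\Vert W_p \Vert_F \le \tfrac{1}{2} \Vert W_{p-1}\Vert_F$ from the local isometry restricted to $\Gamma_p$, and an operator-norm bound $\Vert \mathcal{P}_{T^\perp} \mathcal{A}_p^* \mathcal{A}_p W_{p-1}\Vert \lesssim 2^{-p}$ via matrix Bernstein applied to the random rank-one matrices $b_\ell c_\ell^* \langle b_\ell c_\ell^*, W_{p-1}\rangle$. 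Telescoping yields $\Vert \mathcal{P}_T Y - E \Vert_F \lesssim L^{-1}$ and $\Vert \mathcal{P}_{T^\perp} Y\Vert \le 1/2$, with $\Vert \lambda\Vert$ bounded.

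Finally, set $H := \hat{X} - X_0$. Feasibility gives $\Vert \mathcal{A}(H)\Vert \le 2\tau$, and combining the optimality $\Vert X_0 + H\Vert_{\ast} \le \Vert X_0\Vert_{\ast}$ with the subgradient inequality and the identity $\langle E, H\rangle = \langle \lambda, \mathcal{A}(H)\rangle - \langle \mathcal{P}_T Y - E, H_T\rangle - \langle \mathcal{P}_{T^\perp}Y, H_{T^\perp}\rangle$ forces $\Vert H_{T^\perp}\Vert_{\ast} \lesssim \Vert \lambda\Vert\tau + L^{-1}\Vert H_T\Vert_F$. Substituting into $\Vert H_T\Vert_F \lesssim \Vert \mathcal{A}(H_T)\Vert \le 2\tau + \Vert \mathcal{A}\Vert_{F\to 2}\Vert H_{T^\perp}\Vert_{\ast}$ (using that $H_T$ has rank at most two and $\Vert \cdot \Vert_F \le \Vert \cdot \Vert_{\ast}$), solving the resulting two-variable linear system, and summing $\Vert H_T\Vert_F$ and $\Vert H_{T^\perp}\Vert_F$ yields $\Vert H\Vert_F \lesssim \sqrt{K+N}\,\tau$, where the $\sqrt{K+N}$ factor arises as the product $\Vert \mathcal{A}\Vert_{F\to 2}\cdot \Vert \lambda\Vert$ in the sampling regime. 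The principal obstacle is the golfing construction: because each iterate $W_p$ depends on the preceding random batches, every Bernstein step must be executed conditionally on $\sigma(\Gamma_1,\ldots,\Gamma_{p-1})$, and tracking the propagation of the coherence parameters $\mu_{h_0}, \tilde{\mu}_{h_0}$ through the $W_p$ calls for weighted matrix norms that simultaneously govern the contraction on $T$ and the operator-norm growth off $T$.
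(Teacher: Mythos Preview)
This theorem is not proved in the paper: it is stated as a cited result from \cite{ahmed2014blind} (Ahmed, Recht, and Romberg) and serves only as background to motivate the authors' own contributions. The paper contains no proof of Theorem~\ref{thm:ahmed} to compare against.

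That said, your sketch is a faithful outline of the original Ahmed--Recht--Romberg argument: the local isometry on the tangent space $T$ via matrix Bernstein, the golfing construction of the approximate dual certificate $Y$ with $\Vert \mathcal{P}_T Y - E\Vert_F$ small and $\Vert \mathcal{P}_{T^\perp} Y\Vert \le 1/2$, and the convex-analytic consolidation are exactly the ingredients of their proof. Your remark that the coherence parameter $\tilde{\mu}_{h_0}$ enters through the weighted norms needed to control the golfing iterates is also on target; as this paper notes in Section~\ref{section:backgroundBD}, $\tilde{\mu}_{h_0}$ is precisely ``a technical term corresponding to a partition that is constructed as a part of the proof of Theorem~\ref{thm:ahmed}.'' So your proposal is essentially correct as a summary of the proof in the cited reference, but there is nothing in the present paper to compare it to.
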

Here $ \mu^2_{\max} $ and $ \mu^2_{h_0}$ are coherence parameters, which are defined via 
\begin{equation*}
\mu^2_{\max}:= \frac{L}{K} \underset{\ell \in \left[L\right]}{\max} \ \Vert b_{\ell} \Vert^2.
\end{equation*}
and
\begin{equation*}
\mu^2_{h_0}:= \frac{ L}{\Vert h_0 \Vert^2} \underset{\ell \in \left[L\right]}{\max} \ \vert \innerproduct{b_{\ell}, h_0} \vert^2 .
\end{equation*} 
The third coherence factor $\tilde{\mu}_{h_0}$ is a technical term corresponding to a partition that is constructed as a part of the proof of Theorem \ref{thm:ahmed}, which is based on the Golfing Scheme \cite{gross2011recovering}.\\

To put the impact of the coherence factors into perspective, observe that if all vectors $b_{\ell}$ have the same $ \ell_2$-norm, one obtains that $ \mu_{\max}=1 $; this will be the case, for example, when $B$ is a low-frequency Fourier matrix, as it appears for applications in wireless communication. The second coherence factor always satisfies $1\le \mu^2_{h_0} \le K \mu^2_{\max}$. If $ \mu_{h_0} $ is smaller, this indicates that the mass $\Vert h_0 \Vert^2 = \sum_{\ell=1}^{L} \vert \innerproduct{b_{\ell}, h_0} \vert^2 $ is distributed fairly evenly among $\vert \innerproduct{b_{\ell},h_0} \vert $. For example, if $ \mu_{h_0} =1 $, then $\vert \innerproduct{b_{\ell}, h_0} \vert = \frac{1}{\sqrt{L}} \Vert h_0 \Vert $ for all $\ell \in \left[L\right] $. Numerical simulations in \cite{ahmed2014blind} confirm that many $h_0$ corresponding to large $\mu_{h_0}$ show worse performance, indicating that this factor may be necessary.\\
The last coherence factor $ \tilde{\mu}_{h_0}$, in contrast, will no longer appear in our result below, which is why we refrain from detailed discussion.  We refer the interested reader to \cite[Remark 2.1]{ling2017blind} and \cite[Section 2.3]{jung2018blind} for details.\\
For generic $h_0$ the parameters $\mu_{h_0} $ and $ \tilde{\mu}_{h_0}$ are reasonably small. For example, if $h_0$ is chosen from the uniform distribution on the sphere, one can show that with high probability $\mu_{h_0} = \mathcal{O} \left( \sqrt{\log L} \right) $.\\  %We expect the term $ \mu_{2,h} $ to be an artifact of the proof technique and could potentially be removed by a different proof technique.  \\
%We refer the interested reader to \cite[Remark 2.1]{ling2017blind} and \cite[Section 2.3]{jung2018blind} for details.
For the noiseless case, i.e., $\tau=0$, Theorem \ref{thm:ahmed} yields exact recovery, and the required sample complexity $ L/\log^3 L \gtrsim K+N$ is optimal up to logarithmic factors, as the number of degrees of freedom is $K+N-1 $ (see \cite{kech2017optimal} for an exact identifiability analysis based on algebraic geometry.) However, if there is noise, the bound for the reconstruction error scales with  $\sqrt{K+N}$, in contrast to other measurement scenarios such as low-rank matrix recovery from Gaussian measurements (see, e.g., \cite{chandrasekaran2012convex}).\\ %As we will see, it is intrinsically related to the different geometric natures of the \\

Let us comment on some related work. The foundational paper \cite{ahmed2014blind} has triggered a number of follow-up works on the problem of randomized blind deconvolution. A first line of works extended the result to recovering signals from their superposition $\sum_{i=1}^{r} w_i \ast x_i $, a problem often referred to as blind demixing \cite{ling2017blind,jung2018blind}. Another line of works investigated non-convex (gradient-descent based) algorithms \cite{li2018rapid,ma2017implicit,huang2018blind}, which have the advantage that they are computationally less expensive, as they operate in the natural parameter space. It has been shown that they require a near-optimal number of measurements for recovery. For such an algorithm, \cite{li2018rapid} derived near-optimal noise-bounds for a Gaussian noise model. However, as in this paper, we focus on the scenario of adversarial noise (instead of random noise) the resulting guarantees are not comparable to ours below.

\subsection{Matrix completion}
The matrix completion problem of reconstructing a low-rank matrix $X_0 \in \mathbb{R}^{n_1 \times n_2}$ (we assume that w.l.o.g.\ $ n_1 \ge n_2 $) from only a part of its entries arises in many different applications such as in collaborative filtering \cite{collabfiltering} and multiclass learning \cite{argyriou2008convex}. For this reason one could observe a flurry of work on this problem in the last decade, and we will only be able to give a very selective overview of this topic. The precise sampling model that we consider is that $m$ entries of $X_0$ are sampled uniformly at random with replacement. Denoting by $e_i$ the standard coordinate vectors in $ \mathbb{R}^{n_1} $ and $ \mathbb{R}^{n_2}$, respectively, the corresponding measurement operator $ \mathcal{A}: \mathbb{R}^{n_1 \times n_2}  \rightarrow \mathbb{R}^m$ can be written as
\begin{equation}\label{equ:MCopdefinition}
\mathcal{A} \left(X\right) \left(i\right):= \sqrt{ \frac{n_1 n_2}{m}} \innerproduct{X,e_{a_i}e^*_{b_i}}_F,
\end{equation}
where $\left(a_i,b_i\right) \in \left[n_1\right] \times \left[n_2\right]$ is chosen uniformly at random for each $ i \in \left[m\right] $ (and independently from all other measurements). The scaling factor $ \sqrt{ \frac{n_1 n_2}{m} } $ in the definition of the measurement operator $\mathcal{A} $ is chosen to ensure that $ \mathbb{E} \left[ \Vert \mathcal{A} \left( X \right) \Vert^2  \right] = \|X\|_F^2 $. (Some other papers on matrix completion choose a different scaling. We have chosen this normalization because in this way the results for the matrix completion problem can be better compared to those for the blind deconvolution scenario.) Alternative sampling models analyzed in other works include sampling a subset $ \Omega$ uniformly from $\left[n_1\right] \times \left[n_2\right] $ (i.e., without replacement, see, e.g., \cite{candes2010power}), or sampling using random selectors.\\ %However, as explained in \cite[Section 3]{recht2011simpler} one does not expect that this model change has a significant impact on the results. We will come back to this question in Remark INSERT. \\

Again we aim to recover $X_0$ from noisy observations $ y = \mathcal{A} \left(X_0\right) + e $, with a noise vector $ e\in \mathbb{R}^m $ that satisfies $\|e\|\leq \tau$ via the SDP 
\begin{equation}\label{SDP_MC}
\begin{split}
\text{minimize } \quad  &\Vert X \Vert_{\ast} \\
\text{subject to} \quad & \Vert  \mathcal{A} \left(X\right) -y  \Vert \le \tau. 
\end{split}
\end{equation}
For matrix completion, this approach has first been studied in \cite{candes2009exact}.\\

It is well known that similarly to the blind deconvolution problem, some incoherence assumptions are necessary to allow for successful recovery. Indeed, suppose that $X_0 = e_1 e^*_1 $. Then, if $m\ll n_1 n_2 $ with high probability it holds that $ \mathcal{A} \left(X_0\right) =0 $ and one cannot hope to recover $X_0$. To avoid such special cases, one needs to ensure that the mass of the Frobenius norm of $X_0$ is spread out over all entries rather evenly. If $U\Sigma V^T$ is the singular value decomposition of  the rank-$r$ matrix $X_0$ (with $\Sigma \in \mathbb{R}^{r \times r}$), then this property is captured by the following coherence parameters \cite{gross2011recovering} 
\begin{align*}
%\mu^2 \left( X_0 \right) := \max \left\{ \mu_1^2 \left(U\right), \mu_1^2 \left(V\right)  \right\}.
\mu \left( U \right)&:= \sqrt{\frac{n_1}{r}} \underset{i \in \left[n_1\right]}{\max} \Vert U^* e_i \Vert \\
\mu \left( V \right)&:= \sqrt{\frac{n_2}{r}} \underset{i \in \left[n_2\right]}{\max} \Vert V^* e_i \Vert. 
\end{align*}
%Here  for $U \in \mathbb{R}^{n_1 \times r} $ and $ V \in \mathbb{R}^{n_2 \times r}$ arising from the singular value decomposition $X_0 = U\Sigma V^*  $ the two terms on the right hand side are given by
%\begin{align*}
%\mu_1 \left(X_0\right)& := \sqrt{\frac{n_1 n_2}{r}}  \Vert UV^* \Vert_{\ell_{\infty}}\\
%\mu_1 \left( U \right)&:= \sqrt{\frac{n_1}{r}} \underset{i \in \left[n_1\right]}{\max} \Vert U^* e_i \Vert \\
%\mu_1 \left( V \right)&:= \sqrt{\frac{n_2}{r}} \underset{i \in \left[n_2\right]}{\max} \Vert V^* e_i \Vert. 
%\end{align*}
For these coherence parameters, a series of works \cite{candes2009exact,candes2010power,gross2011recovering,recht2011simpler,chen_coherenceoptimal,ding2018leave} lead to the following recovery guarantee for the noiseless scenario.
\begin{theorem}[\cite{ding2018leave}]\label{theorem:grossMC}
Consider measurements of the form $ y = \mathcal{A} \left(X_0\right)$, where $X_0\in \mathbb{R}^{n_1 \times n_2}$  is a rank-$r$ matrix with singular value decomposition $ X_0 = U\Sigma V^T $ and $\mathcal{A}$ is given by (\ref{equ:MCopdefinition}). Assume  that
	\begin{equation*}
	m \ge C  \max \left\{ \mu^2 \left(U\right); \mu^2 \left(V\right)   \right\}  r n_1 \log \left(n_1\right) \log \left[   r  \left( \mu \left(U\right) + \mu \left(V\right) \right) \right].
	\end{equation*}
	Then with probability at least $1 - \mathcal{O} \left(  n^{-1}_1 \right) $ the matrix $X_0$ is the unique minimizer of the SDP (\ref{SDP_MC}) with $\tau=0$.
	%it holds for all $e \in \mathbb{R}^m$ such that $ \Vert e \Vert \le \tau $ that
	%\begin{equation*}
	%\Vert \hat{X} - X_0 \Vert_F \lesssim  \tau \sqrt{n_1}.
	%\end{equation*}
\end{theorem}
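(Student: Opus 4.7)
The plan is to apply the approximate dual certificate framework for nuclear norm minimization. Let $T \subset \mathbb{R}^{n_1 \times n_2}$ denote the tangent space at $X_0$ to the manifold of rank-$r$ matrices, and write $\mathcal{P}_T$ for the orthogonal projection onto $T$. I would verify two classical sufficient conditions for $X_0$ to be the unique minimizer of (\ref{SDP_MC}) with $\tau = 0$: first, a local well-conditioning estimate $\|\mathcal{P}_T \mathcal{A}^* \mathcal{A} \mathcal{P}_T - \mathcal{P}_T\|_{T \to T} \le 1/2$, and second, the existence of an approximate dual certificate $Y$ in the range of $\mathcal{A}^*$ satisfying $\|\mathcal{P}_T(Y) - UV^*\|_F \le c/n_1$ together with $\|\mathcal{P}_{T^\perp}(Y)\|_{\mathrm{op}} < 1/2$. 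Standard KKT-type arguments then promote these two conditions to exact recovery.

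The first condition is a matrix Chernoff bound applied to $\frac{n_1 n_2}{m}\sum_{i=1}^m \mathcal{P}_T(e_{a_i} e_{b_i}^*) \otimes \mathcal{P}_T(e_{a_i} e_{b_i}^*)$, whose expectation equals $\mathcal{P}_T$. Each summand has operator norm $O(\mu^2 r n_1 / m)$ because the coherence assumption yields $\|\mathcal{P}_T(e_a e_b^*)\|_F^2 \lesssim \mu^2 r / n_2$ uniformly in $(a,b)$, with $\mu^2 := \max\{\mu^2(U), \mu^2(V)\}$. The noncommutative Bernstein inequality then delivers the desired bound as soon as $m \gtrsim \mu^2 r n_1 \log n_1$.

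For the dual certificate I would use Gross's Golfing Scheme. Partition the $m$ samples into $L \asymp \log[r(\mu(U) + \mu(V))]$ independent batches with corresponding sampling operators $\mathcal{A}_1, \dots, \mathcal{A}_L$, set $W_0 = UV^*$, and iterate
\begin{equation*}
W_k = W_{k-1} - \mathcal{P}_T \mathcal{A}_k^* \mathcal{A}_k(W_{k-1}), \qquad Y = \sum_{k=1}^{L} \mathcal{A}_k^* \mathcal{A}_k(W_{k-1}),
\end{equation*}
so that $\mathcal{P}_T(Y) = UV^* - W_L$. Per-batch concentration provides the contraction $\|W_k\|_F \le \tfrac{1}{2}\|W_{k-1}\|_F$ together with a matching decay in the coherence-weighted mixed norms that govern the size of $\mathcal{A}_k^* \mathcal{A}_k(W_{k-1})$ off the tangent space. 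After $L$ rounds $\|W_L\|_F$ falls below the required threshold, and summing matrix Bernstein estimates for the increments $\mathcal{P}_{T^\perp}\mathcal{A}_k^*\mathcal{A}_k(W_{k-1})$ yields $\|\mathcal{P}_{T^\perp}(Y)\|_{\mathrm{op}} < 1/2$.

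The main obstacle, and the reason the bound of \cite{ding2018leave} is sharper than earlier golfing analyses such as \cite{candes2009exact, recht2011simpler, chen_coherenceoptimal}, is extracting only a single $\log n_1$ together with the mild $\log[r(\mu(U)+\mu(V))]$ coming from the number of batches. A naive control of the coherence-weighted $\ell_{\infty,2}$ norm of $W_k$ uses a union bound over all $n_1$ rows and $n_2$ columns at every iteration, which accumulates superfluous $\log$-factors. The leave-one-out decoupling technique removes this loss: for each row index $i$ one introduces an auxiliary iterate $W_k^{(i)}$ driven by samples that are conditionally independent of all measurements touching the $i$-th row, bounds the perturbation $W_k - W_k^{(i)}$ by a short contraction argument, and then applies a sharp scalar concentration inequality to the $i$-th slice of $W_k^{(i)}$ with no union bound. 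Propagating this decoupling cleanly through all $L$ levels of the golfing recursion, while verifying that the accumulated leave-one-out perturbations never inflate the coherence bounds that the next iteration depends on, is the delicate technical heart of the proof.
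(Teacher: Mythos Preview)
The paper does not prove this theorem at all: it is quoted verbatim as a result from \cite{ding2018leave} (see the sentence preceding the theorem and the citation in its header), so there is no in-paper argument to compare against. Your sketch is a reasonable outline of the approach in the cited reference---golfing scheme plus leave-one-out decoupling to shave the extra logarithmic factors---but for the purposes of this paper the statement is simply imported from the literature and no proof is expected or given here.
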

As for blind deconvolution, this result has been shown using an approximate dual certificate. In \cite{candes2010matrix} this result has been generalized to the case of adversarial noise, showing that with high probability the minimizer $\hat{X}$ of (\ref{SDP_MC}) satisfies
\begin{equation}\label{ineq:errorcandesplan}
\Vert \hat{X} - X_0 \Vert_F \lesssim  \tau \sqrt{n_2} ,
\end{equation}
whenever $ m \gtrsim n_1 \text{polylog}\ n_1 $.
%\footnote{The required sample complexity in \cite{candes2010power} is worse than the one in Theorem \ref{theorem:grossMC}. That is because this paper was written before \cite{gross2011recovering} appeared. Using the approximate dual certificate construction of David Gross \cite{gross2011recovering} one could obtain the error bound also for the same sample complexity as in \cite{gross2011recovering}.}
As in the blind deconvolution framework, this error bound differs from the case of full Gaussian measurements as discussed, for example, in \cite{chandrasekaran2012convex}, and also from oracle estimates \cite[Section III.B]{candes2011robust} by a dimensional scaling factor, which will be addressed in this paper. %Again, so far it was not clear whether this factor $\sqrt{n_2} $ is necessary or whether it is a proof artifact. Also compared to oracle estimates  this seems to be suboptimal. In this paper we will show that the appearance of the factor $\sqrt{n_2}$ is indeed a consequence of the geometry of the problem.\\

Also random noise models for matrix completion have been studied in a number of works. In particular, we would like to mention \cite{koltchinskii2011nuclear,negahban2012restricted}, which derive near-optimal rates (both in sample size and estimation error)  for matrix completion under subexponential noise with a slightly different nuclear-norm penalized estimator than the one we consider as long as the noise-level is not too small. Similar bounds have also been obtained in \cite{klopp} using an estimator, which is closer to the one in this work. 

 Apart from convex methods also many nonconvex algorithms have been proposed and analysed, for example a number of variants of gradient descent (see, e.g., \cite{optspace1,Jain_alternatingminimization,hardt2014MC,sun2016guaranteed,ge2016matrix,irlsfornasier,irlskuemmerle,ma2017implicit}).  Arguably the strongest result for matrix completion under adversarial noise has been shown in \cite{optspace1,optspace2}. These works propose a non-convex algorithm based on Riemannian optimization and show that if the number of measurements is larger than $ r^2 n_1 \text{polylog} \left( n_1 \right) $ the true matrix can be reconstructed up to an estimation error superior to the one in \cite{candes2010matrix}. Namely for $\kappa$ denoting the condition number of the matrix $X_0$ they show that the output $\hat{X}$ of their algorithm satisfies (in our notation)
\begin{equation}\label{ineq:boundmontanari}
\Vert \hat{X} -X_0 \Vert \lesssim \kappa^2 \sqrt{rm} \Vert e \Vert_{\infty},
\end{equation}
provided the noise level is below a certain, small threshold that scales with the smallest singular value of $X_0$. For error vectors $e$ that are spread out evenly and matrices that are well conditioned, one has that $\sqrt{m}\|e\|_\infty \approx \|e\|_2 $, so this bound is superior to (\ref{ineq:errorcandesplan}) in the sense that the scaling factors that appear only scale with the rank $r$ and not the dimension. It should be noted though that in contrast to nuclear norm minimization the underlying algorithm requires precise knowledge of the true rank of the matrix to be recovered.\\

Just before completion of this manuscript, Chen et al.\ \cite{chenconvex} bridged convex and nonconvex approaches, using nonconvex methods to analyze a convex recovery scheme. Their results provide near optimal recovery guarantees for the matrix completion problem via nuclear norm minimization under a subgaussian random noise model for a much larger range of admissible noise levels than the aforementioned works. More precisely, the proof is based on the observation that in their scenario the minimizer of the convex problem is very close to an approximate critical point of a non-convex gradient based method. This allows them to transfer existing stability results \cite{ma2017implicit} for non-convex optimization to the convex problem. However, the required sample complexity scales suboptimally in the rank $r$ of the matrix and similarly to (\ref{ineq:boundmontanari}), the error bound depends on the condition number $\kappa$.

\subsection{Descent cone analysis}\label{section:descentcone}
In recent years a number of works have studied low-rank matrix recovery and compressed sensing via a descent cone analysis. This approach has been pioneered for $\ell_1$-norm minimization in \cite{rudelson08} and for more general (atomic)  norms in \cite{chandrasekaran2012convex}. Here the descent cone of a norm at a point $X_0 \in \mathbb{C}^{K \times N} $ is the set of all possible directions $Z \in \mathbb{C}^{K \times N} $ such that the norm does not increase. For the nuclear norm, this leads to the following definition.
\begin{definition}
	For any matrix $X_0 \in \mathbb{C}^{K \times N}$ define its descent cone $ \dcone{X_0} $ by
	\begin{equation*}
	 \dcone{X_0} := \left\{  Z \in \mathbb{C}^{K \times N} : \ \Vert X_0+ \varepsilon Z \Vert_{\ast} \le \Vert X_0 \Vert_{\ast} \ \text{for some } \varepsilon>0  \right\}.
	\end{equation*}
\end{definition}
To understand its relevance for recovery guarantees assume for a moment that we are  in the noiseless scenario, i.e., $\tau =0$ and $ e = 0$. Then the matrix $X_0 \in \mathbb{C}^{K \times N}$ is the unique minimizer the semidefinite program (\ref{opt:SDP}), if and only if the null space of $ \mathcal{A}$ does not intersect the descent cone $  \dcone{X_0} $. In the case of noise, the constraint $\Vert y-\mathcal{A} \left(X_0\right) \Vert \le \tau $ in the SDPs (\ref{opt:SDP}) and (\ref{SDP_MC}) defines a region around $X_0 + \text{ker} \mathcal{A} $, i.e., the affine subspace consistent with the observed measurements in the noiseless scenario. The intersection of this region with the set of all signals that have a smaller nuclear norm than the ground truth $X_0$ is the set of feasible solutions that are preferred to $X_0$. The following quantity for a matrix $X_0$, which is often referred to as minimum conic singular value, quantifies the size of this intersection
%This motivates the definition of the following quantity for a matrix $X_0$, which is sometimes referred to as minimum conic singular value
\begin{equation*}
\lambda_{\min} \left( \mathcal{A},  \dcone{X_0}  \right) := \underset{Z \in  \dcone{X_0}  \setminus \left\{ 0 \right\}}{\inf} \frac{\Vert \mathcal{A} \left(Z\right) \Vert}{\Vert Z \Vert_F}.
\end{equation*}
If $\lambda_{\min} \left( \mathcal{A} ,   \dcone{X_0}  \right) $ becomes larger, this intersection becomes smaller, which translates into stronger recovery guarantees. The following theorem confirms this intuition.

\begin{theorem}\label{thm:chandrasekaran}\cite[Proposition 2.2]{chandrasekaran2012convex}
Let $\mathcal A: \mathbb{C}^{n_1\times n_2}\rightarrow \C^m$ be a linear operator and assume that $y= \mathcal{A} \left(X_0\right) + e $ with $ \Vert e \Vert \le \tau $. Then any minimizer $ \hat{X} $ of the SDP (\ref{opt:SDP}) satisfies
\begin{equation*}
\Vert \hat{X} - X_0 \Vert_F \le \frac{2 \tau}{\lambda_{\min} \left(\mathcal{A},  \dcone{X_0} \right) }.
\end{equation*}
\end{theorem}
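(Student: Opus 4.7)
The plan is to show that the error vector $\hat X - X_0$ lies in the descent cone $\dcone{X_0}$, that its image under $\mathcal{A}$ has norm at most $2\tau$, and then invoke the definition of $\lambda_{\min}$ to conclude. This reduces the problem to two elementary observations: optimality of $\hat X$ yields membership in the descent cone, and the triangle inequality together with feasibility of both $X_0$ and $\hat X$ controls $\|\mathcal{A}(\hat X - X_0)\|$.

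First I would verify that $X_0$ itself is feasible for the SDP~(\ref{opt:SDP}), since $\|y - \mathcal{A}(X_0)\| = \|e\| \le \tau$. Combining this with the fact that $\hat X$ is a minimizer gives $\|\hat X\|_{\ast} \le \|X_0\|_{\ast}$. Setting $Z = \hat X - X_0$ and $\varepsilon = 1$ in the definition of $\dcone{X_0}$, one immediately gets $\|X_0 + Z\|_{\ast} = \|\hat X\|_{\ast} \le \|X_0\|_{\ast}$, so $Z \in \dcone{X_0}$. The case $\hat X = X_0$ is trivial since then the bound holds with zero on the left-hand side, so I may assume $Z \ne 0$.

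Next I would bound $\|\mathcal{A}(Z)\|$ using feasibility of both points: since $\|y - \mathcal{A}(\hat X)\| \le \tau$ and $\|y - \mathcal{A}(X_0)\| \le \tau$, the triangle inequality yields
\begin{equation*}
\|\mathcal{A}(\hat X - X_0)\| \le \|\mathcal{A}(\hat X) - y\| + \|y - \mathcal{A}(X_0)\| \le 2\tau.
\end{equation*}
Finally, applying the definition of the minimum conic singular value to $Z \in \dcone{X_0}\setminus\{0\}$ gives
\begin{equation*}
\lambda_{\min}\bigl(\mathcal{A}, \dcone{X_0}\bigr) \le \frac{\|\mathcal{A}(Z)\|}{\|Z\|_F} \le \frac{2\tau}{\|\hat X - X_0\|_F},
\end{equation*}
and rearranging yields the claimed bound.

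Since every step above is a one-line consequence of a definition or a triangle inequality, there is no genuine obstacle; the only point that deserves minor care is confirming that $\hat X - X_0$ (rather than merely some positive scalar multiple of it) lies in the descent cone, which is why I use $\varepsilon = 1$ rather than letting $\varepsilon$ tend to zero. The argument does not use any structure of the nuclear norm beyond convexity, so in fact the same proof works verbatim for any norm in place of $\|\cdot\|_{\ast}$.
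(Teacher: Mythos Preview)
Your proof is correct and is exactly the standard argument from \cite{chandrasekaran2012convex}; the paper itself does not supply a proof of this theorem but simply cites it, so there is nothing further to compare.
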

When measurement matrices of the operator $ \mathcal{A} $ are full Gaussian matrices (in contrast to rank-$1$ measurements as in this paper) and $ \mathcal{A}$ is normalized such that $ \mathbb{E} \left[ \mathcal{A}^* \mathcal{A} \right] = \Id $, for an arbitrary low-rank matrix $X_0$ one has with high probability that $ \lambda_{\min} \left(\mathcal{A},  \dcone{X_0}  \right) \asymp 1 $. Consequently, Theorem \ref{thm:chandrasekaran} yields an optimal estimation error even for adversarial noise. As we will show this is no longer the case for blind deconvolution and matrix completion.\\

The geometric analysis of linear inverse problems via the descent cone and the minimum conic singular value has lead to many new results and insights in compressed sensing and low-rank matrix recovery.  For convex programs the phase transition of the success rate could be precisely predicted \cite{amelunxen2014}. As the proofs are specific to full Gaussian measuements, they do not apply for a number of important structured and heavy-tailed measurement scenarios. Stronger results \cite{lecue2017, dirksen2018gap, kueng2017low,kabanava2016stable,kueng2016low} were subsequently obtained using Mendelson's small ball method \cite{koltchinskii2015bounding,mendelson2014learning}, a powerful tool for bounding a nonnegative empirical process from below, now often refereed to as Mendelson's small ball method.

\subsection{Notation}
For $n\in \mathbb{N}$ we will write $\left[n\right] $ to denote the set $ \left\{ 1; \ldots; n \right\} $. For any set $A$ we will denote its cardinality by $\vert A \vert  $. For a complex number $z$ we will denote its real part by $ \text{Re} \left(z\right) $ and its imaginary part by $\text{Im} \left(z\right) $. By $ \log \left(\cdot\right) $ we will denote the logarithm to the base $e$. By $ \mathbb{E}X$ we will denote the expectation of a random variable $X$ and by $  \mathbb{P} \left( A \right) $ we denote the probability of an event $A$. If $v \in \mathbb{C}^n $ we will denote its $\ell_2$-norm by $\Vert v \Vert $ and its Hermitian transpose by $v^*$. For $u,v \in \mathbb{C}^n $ the (Euclidean) inner product is defined by $\innerproduct{u,v} := u^* v $. Furthermore, for $Z\in \mathbb{C}^{n_1 \times n_2}$ its spectral norm is given by $ \Vert Z \Vert$, i.e., the dual norm of the nuclear norm $\Vert Z \Vert_{\ast}$. Moreover, the Frobenius norm of $Z$ is defined by $\Vert Z \Vert_F $ with corresponding inner product $\innerproduct{Z,W}_F:= \trace \left(Z^* W\right) $, where $ W \in \mathbb{C}^{n_1 \times n_2} $. When we study matrix completion, we will work with matrices $ Z \in \mathbb{R}^{n_1 \times n_2} $ and the previous quantities will be defined analogously.  Moreover, in that scenario we will use the notation $\Vert Z \Vert_{\ell_{\infty}} := \underset{\left(i,j\right) \in \left[n_1\right] \times \left[n_2\right] }{\max} \vert Z_{i,j} \vert $, where $\left\{Z_{i,j} \right\}^{n_1,n_2}_{i,j=1}$. 

\section{Our results}\label{section:mainresults}

\subsection{Instability of low-rank matrix recovery}

\subsubsection{Blind deconvolution}

Our first main result states that randomized blind deconvolution can be unstable under adversarial noise.
\begin{theorem}\label{thm:mainresult}
Let $K,N \in \mathbb{N} \setminus \left\{ 1 \right\} $. Assume that 
\begin{equation*}%\label{equ:condition2345}
C_1  K \le L \le \frac{KN}{36}. 
\end{equation*}
Then there exists a matrix $B\in \mathbb{C}^{L\times K} $ satisfying $B^* B = \Id_K $ and with $FB$ having rows of equal norm, i.e., $\mu^2_{\max} =1 $, such that for all $h_0 \in \mathbb{C}^K \setminus \left\{ 0 \right\} $ and $m_0 \in \mathbb{C}^N \setminus \left\{ 0 \right\} $ the following holds:\\
With probability at least $1 - \mathcal{O} \left(\exp \left( -\frac{K}{C_2 \mu_{h_0}^2  } \right) \right)$, where $\mu^2_{h_0}=\frac{ L}{\Vert h_0 \Vert^2} \underset{\ell \in \left[L\right]}{\max} \ \vert \innerproduct{b_{\ell}, h_0} \vert^2 $, there is $\tau_0 >0$ such that for all $\tau \le \tau_0$ there exists an adversarial noise vector $e \in \mathbb{C}^{L} $ with $\Vert e \Vert \le \tau $ that admits an alternative solution $ \tilde{X}$ with the following properties.
	\begin{itemize}
		\item $\tilde{X}$ is feasible, i.e., $ \Vert \mathcal{A} \left(\tilde{X}\right) - y \Vert  = \tau $ for $ y= \mathcal{A} \left( h_0 m^*_0 \right) + e $ the noisy measurement vector
		\item $ \tilde{X}$ is preferred to $X_0 = h_0 m^*_0 $ by the SDP (\ref{opt:SDP}), i.e., $\Vert \tilde{X} \Vert_{\ast} \le \Vert X_0 \Vert_{\ast} $, but
		\item $\tilde{X}$ is far from the true solution in Frobenius norm, i.e.,
		\begin{equation*}
		\Vert \tilde{X} - X_0 \Vert_F \ge \frac{ \tau}{C_3}  \sqrt{ \frac{ KN}{L}}.
		\end{equation*}
	\end{itemize}
	The constants $ C_1 $, $C_2$, and $C_3$ are universal.
\end{theorem}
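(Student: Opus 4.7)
The plan is to reduce Theorem \ref{thm:mainresult} to an upper bound on the minimum conic singular value, namely $\lambda_{\min}(\mathcal{A}, \dcone{X_0}) \leq C_3 \sqrt{L/(KN)}$. Reading Theorem \ref{thm:chandrasekaran} in the converse direction, any nonzero $Z \in \dcone{X_0}$ with $\|\mathcal{A}(Z)\|/\|Z\|_F \leq \lambda$ yields an adversarial certificate as follows: choose $\epsilon_0 > 0$ with $\|X_0 + \epsilon_0 Z\|_\ast \leq \|X_0\|_\ast$ (possible since $Z$ is a descent direction) and set $\tau_0 := \epsilon_0 \|\mathcal{A}(Z)\|$; then for any $\tau \leq \tau_0$, the matrix $\tilde{X} := X_0 + (\tau/\|\mathcal{A}(Z)\|)\, Z$ together with $e := -(\tau/\|\mathcal{A}(Z)\|)\,\mathcal{A}(Z)$ satisfies $\mathcal{A}(\tilde{X}) = y$ exactly (so feasibility with $\|e\|\le\tau$ is clear, and $e$ can be slightly inflated to attain $\|e\|=\tau$ if desired), $\|\tilde{X}\|_\ast \leq \|X_0\|_\ast$ by the choice of $\epsilon_0$, and $\|\tilde{X} - X_0\|_F = \tau \|Z\|_F/\|\mathcal{A}(Z)\| \geq \tau/\lambda$. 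Substituting $\lambda = C_3 \sqrt{L/(KN)}$ yields the desired $\sqrt{KN/L}$ blow-up.

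To build $Z$, I would parametrize descent directions via $Z = -t\, \hat{h}_0 \hat{m}_0^\ast + W$, where $t \geq 0$, $W \in T^\perp$, and $T := \{h_0 y^\ast + x m_0^\ast : x \in \mathbb{C}^K, y \in \mathbb{C}^N\}$ is the tangent space to the rank-one variety at $X_0$. Evaluating the nuclear-norm subdifferential at $X_0$, together with the orthogonality of $\hat{h}_0\hat{m}_0^\ast$ and $T^\perp$, yields the clean criteria $Z \in \dcone{X_0} \Longleftrightarrow \|W\|_\ast \leq t$, the orthogonal decomposition $\|Z\|_F^2 = t^2 + \|W\|_F^2$, and $\mathcal{A}(Z) = \mathcal{A}(W) - t\, \mathcal{A}(\hat{h}_0 \hat{m}_0^\ast)$. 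The problem reduces to finding $W \in T^\perp$ with $\|W\|_\ast \leq t$ such that $\mathcal{A}(W)$ approximates $t\, \mathcal{A}(\hat{h}_0 \hat{m}_0^\ast)$ in $\ell_2$-norm to within an error of order $\sqrt{L/(KN)}\cdot\|Z\|_F$, while keeping $\|W\|_F$ sizable.

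The matrix $B$ must be chosen so that this is feasible for every $(h_0, m_0)$. A natural candidate is $B$ with $FB$ a flat partial unitary matrix (entries of modulus $1/\sqrt{L}$), which is an isometry with equal row norms (hence $\mu_{\max}^2 = 1$) and for which the rank-one measurement matrices $b_\ell c_\ell^\ast$ span an impoverished subset of $\mathbb{C}^{K\times N}$ once $L \ll KN$. With $B$ fixed, the core probabilistic step is to show that with probability at least $1 - \exp(-K/(C_2 \mu_{h_0}^2))$ over the Gaussian matrix $C$, there exists a matrix $W \in T^\perp$ of controlled rank $r \sim L/K$ realizing the required matching; the bound $\|W\|_\ast \leq \sqrt{r}\, \|W\|_F$ then allows one to balance the descent-cone constraint $\|W\|_\ast \leq t$ against the Frobenius norm of $Z$, extracting the claimed $\sqrt{L/(KN)}$ scaling of the conic singular value.

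The main obstacle I foresee is the simultaneous control of the small nuclear norm of $W$ (which forces low rank) and the approximate matching of the target $t\,\mathcal{A}(\hat{h}_0 \hat{m}_0^\ast)$ (which for a generic random target in $\mathbb{C}^L$ naively demands high-rank combinations of the $b_\ell c_\ell^\ast$). A pure dimension count shows that $W \in T^\perp$ with $\mathcal{A}(W) = t\,\mathcal{A}(\hat{h}_0 \hat{m}_0^\ast)$ exists outright, since $\dim_{\mathbb{C}} T^\perp = (K-1)(N-1) > L$ under the hypothesis $L \leq KN/36$; the real difficulty is bounding its nuclear norm, which necessitates a rank-restricted construction that exploits the algebraic structure of $B$. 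The exponentially small failure probability $\exp(-K/(C_2 \mu_{h_0}^2))$ will emerge from Gaussian concentration for bilinear forms involving $\langle m_0, c_\ell\rangle$ (and their analogues for auxiliary vectors appearing in $W$), with subgaussian variance proxy $\mu_{h_0}^2/L$ coming from the coherence bound $|\langle b_\ell, h_0\rangle|^2 \leq \mu_{h_0}^2 \|h_0\|^2/L$, while the factor $K$ in the exponent reflects a union bound over a net in the $K$-dimensional ambient space of candidate directions.
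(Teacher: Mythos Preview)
Your reduction from the conic singular value bound to the theorem is correct and matches the paper's argument almost verbatim. The gap is in the construction of the bad descent direction $Z$.

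First, your parametrization $Z=-t\,\hat h_0\hat m_0^*+W$ with $W\in T^\perp$ is too restrictive: it omits the component of $Z$ lying in $T\cap X_0^\perp$, and the paper's $Z$ lives almost entirely in that component. Concretely, the paper sets $Z=-\beta h_0m_0^*+W$ where $W$ is a \emph{rank-one matrix in the kernel of $\mathcal A$} that is close to $T$ (so $\|\mathcal P_{T^\perp}W\|_*\lesssim\sqrt{L/(KN)}$), not a matrix in $T^\perp$ mapping to a prescribed target. Because $\mathcal A(W)=0$ exactly, one gets $\|\mathcal A(Z)\|=\beta\|\mathcal A(X_0)\|\approx\beta$ while $\|Z\|_F\approx 1$, and choosing $\beta\asymp\sqrt{L/(KN)}$ finishes. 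Your plan, by contrast, asks for $W\in T^\perp$ with $\|W\|_*\le t$ and $\mathcal A(W)\approx t\,\mathcal A(X_0)$; but $\|W\|_F\le\|W\|_*\le t$ forces $\|\mathcal A(W)\|$ to be of order $t$ \emph{and} aligned with a specific direction, and your rank-$r\sim L/K$ balancing does not visibly produce this.

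Second, your choice of $B$ (flat partial unitary) will not support the needed kernel element. The paper takes $B$ to be a \emph{spectral tetris frame}, whose vectors are $1$- or $2$-sparse; this creates disjoint index blocks $\mathcal B_i=\{\ell:\langle b_\ell,e_i\rangle\ne 0\}$ of size $\asymp L/K$. One then sets $W=e_i(m_i^\perp)^*$ with $m_i^\perp$ the projection of $m_0$ onto $\mathrm{span}\{c_\ell:\ell\in\mathcal B_i\}^\perp$, so that every measurement vanishes either via $\langle b_\ell,e_i\rangle=0$ (for $\ell\notin\mathcal B_i$) or via $\langle c_\ell,m_i^\perp\rangle=0$ (for $\ell\in\mathcal B_i$). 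With a flat $B$, no $e_i$ is orthogonal to any $b_\ell$, so this mechanism is unavailable; and for generic $b_\ell$'s a rank-one kernel element $uv^*$ would require $u\perp b_\ell$ for roughly $L-N$ indices, impossible once $L>K+N$.

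Finally, the probability $1-\mathcal O(\exp(-K/(C_2\mu_{h_0}^2)))$ does not come from a net argument over $\mathbb C^K$. It comes from (i) Bernstein for the event $\|\mathcal A(X_0)\|\le 2\|X_0\|_F$, and (ii) the fact that the $\lfloor K/3\rfloor$ subspaces $\mathrm{span}\{c_\ell:\ell\in\mathcal B_i\}$ are independent, so the probability that \emph{none} of the projections $\|m_i^\parallel\|$ has the right size is exponentially small in $K$.
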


\begin{remark}
The matrix $B$ in the above result exactly fits into the framework of Theorem \ref{thm:ahmed}. Indeed, one can check that for many interesting cases (including the case that $K$ divides $L$) it holds that $ \tilde{\mu}_{h_0} \lesssim \mu_{h_0} $. That is, the assumptions of Theorem \ref{thm:ahmed} cannot be enough to deduce stability.\\ %Namely, Theorem \ref{thm:mainresult} says that (with high probability) there is a matrix $ \hat{X}$, which by the SDP (\ref{opt:SDP}) is preferred to the ground truth, but its distance in Frobenius norm to the ground truth is very large compared to $\tau$. 
%In this sense, blind deconvolution is indeed unstable under adversarial noise.\\
We do not expect, however, that this kind of instability is observed for \textit{arbitrary} isometric embeddings $B \in \mathbb{C}^{L \times K}$. For example, let $B$ be a random embedding, which is chosen from the uniform distribution over the Stiefel manifold $ \mathbb{V}^{L}_K $, i.e., the manifold consisting of all matrices $\tilde{B} \in \mathbb{C}^{L \times K}$ such that $\tilde{B}^*  \tilde{B}  = \Id_K \in \mathbb{C}^{K \times K} $. In this case, we expect that a similar proof as in \cite{RauhutTerstiege,kueng2017low} applies and that the multiplicative dimensional factor does not appear in the error bound with high probability if one randomizes over $B$ and $C$ simultaneously. In particular, this implies the existence of an isometric embedding $B$ such that a result analogous to Theorem \ref{thm:mainresult} cannot hold.\\
An interesting open problem is whether the statement of Theorem \ref{thm:mainresult} still holds if $FB$ is a low-frequency discrete Fourier matrix, which is a common assumption in blind deconvolution. The corresponding $b_{\ell}$'s should lead to better conditioning as in our counterexample, but worse than in the case of a random $B$, as a number of $b_{\ell}$'s exhibit substantial correlation, but many are uncorrelated. In that sense, this scenario is in between the scenario of arbitrary $B$'s sketched above and the adversarial scenario in Theorem \ref{thm:mainresult}.
\end{remark}
To put our results in perspective note that for $L \asymp \left(K+N \right) \text{polylog} \left(K+N\right)$, which is the minimal number of measurements required for noiseless recovery, it holds that $ \sqrt{ \frac{KN}{L}} \asymp \sqrt{\frac{\min \left\{ K,N \right\}}{\text{polylog} \left(K+N\right) }} $. Up to logarithmic factors, this coincides with the rate predicted by (\ref{ineq:noisebound}), whenever $ K \asymp N $.\\ 

%In this regime our following result shows that indeed the conic singular value, $\lambda_{\min} \left(X_0\right)$, is much smaller than for Gaussian measurements. Consequently, there is a matrix $\hat{X}$ very far from the true signal $X_0$ with smaller nuclear norm that gives rise to only a very small perturbation in the measurements $\Vert \mathcal{A} \left(\hat{X} - X_0\right) \Vert $. In that sense the dimension factors in the error bounds in Theorem \ref{thm:ahmed} really result from the geometry of the problem and are not pure proof artifacts of the Golfing Scheme.

Theorem \ref{thm:mainresult} is a direct consequence of the following proposition, which we think is interesting in its own right.
\begin{proposition}\label{theorem:mainpropBD}
	Let $K,N \in \mathbb{N} \setminus \left\{ 1 \right\} $. Assume that 
	\begin{equation}\label{equ:condition23456}
	C_1  K \le L \le \frac{KN}{36}. 
	\end{equation}
	Then there exists $B\in \mathbb{C}^{L\times K} $ satisfying $B^* B = \Id_K $ and $\mu^2_{\max} =1$,  whose corresponding measurement operator $ \mathcal{A}$ satisfies the following.\\
	Let $h_0 \in \mathbb{C}^K \setminus \left\{0\right\} $, $ m_0 \in \mathbb{C}^N \setminus \left\{0\right\} $ and define $ \mu_{h_0}$ as in Theorem \ref{thm:mainresult}. Then with probability at least $1- \mathcal{O} \left( \exp\left( -C_2 K/\mu_{h_0}^2 \right) \right) $ it holds that
	\begin{equation}\label{ineq:minconvalueBD}
	\lambda_{\min} \left(\mathcal{A},  \dcone{h_0 m^*_0}  \right) \le C_3 \sqrt{\frac{L}{KN}}.
	\end{equation}
	%there is $Z \in D \left( \Vert \cdot \Vert_{\ast}, h_0 m^*_0  \right) \setminus \left\{ 0 \right\} $ such that
	%\begin{equation}\label{ineq:smallconicsingularvalue}
	%	\Vert \mathcal{A} \left( Z \right) \Vert \le C_2 \sqrt{ \frac{L}{KN} } \Vert Z \Vert_F.
	%\end{equation}
	Here $C_1$, $C_2$, and $C_3$ are absolute constants.
\end{proposition}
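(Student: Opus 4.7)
The plan is to exhibit an adversarial $B$ and an explicit direction $Z \in \mathcal{K}_*(h_0 m_0^*)$ at which $\|\mathcal{A}(Z)\|/\|Z\|_F \le C_3 \sqrt{L/(KN)}$, which by definition bounds $\lambda_{\min}(\mathcal A, \mathcal K_*(h_0 m_0^*))$ from above. Assume first that $K$ divides $L$ (the residual case requires only a minor perturbation). Set $p := L/K$ and partition $[L]$ into $K$ blocks $S_1,\dots,S_K$ of size $p$, and construct $B$ so that $b_\ell = p^{-1/2}\, e_{k(\ell)}$, where $e_k$ is the $k$-th standard basis vector of $\mathbb{C}^K$ and $k(\ell)$ is the index of the block containing $\ell$. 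One checks $\sum_\ell b_\ell b_\ell^* = \mathrm{Id}_K$, so $B^*B = \mathrm{Id}_K$, and $\|b_\ell\|^2 = K/L$, so $\mu_{\max}^2 = 1$.

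For this $B$, the measurement operator decouples along rows: $(\mathcal A(X))(\ell) = p^{-1/2} \langle X_{k(\ell),:}, c_\ell\rangle$, so $X \in \mathrm{null}(\mathcal A)$ iff for every $k$ the $k$-th row of $X$ lies in $V_k := (\mathrm{span}\{c_\ell\}_{\ell\in S_k})^\perp$, which almost surely has dimension $N-p$. In particular, $\dim \mathrm{null}(\mathcal A) \ge K(N-p) = KN - L$, a subspace of nearly full dimension in the regime $L \le KN/36$. The strategy is to find a nonzero element of $\mathrm{null}(\mathcal{A})$ that also lies in $\mathcal K_*(h_0m_0^*)$ (giving ratio zero), or more generally a small perturbation thereof.

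Setting $u := h_0/\|h_0\|$ and $v := m_0/\|m_0\|$, the descent cone condition reads $\mathrm{Re}\langle uv^*, Z\rangle + \|P_{T^\perp}(Z)\|_* \le 0$. Parameterize null-space elements as $Z = \sum_{k=1}^K e_k \phi_k^*$ with $\phi_k \in V_k$, and choose $\phi_k = -\beta \bar u_k\, P_{V_k} v$ with parameter $\beta>0$. Then
\[
\mathrm{Re}\langle uv^*, Z\rangle = -\beta \sum_{k=1}^K |u_k|^2 \|P_{V_k}v\|^2,
\]
which, by Gaussian concentration of $\|P_{V_k}v\|^2$ around $(N-p)/N$ together with $\sum_k |u_k|^2 = 1$, is of order $-\beta(N-p)/N$. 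A direct computation of $P_{T^\perp}(Z) = (I-uu^*)Z(I-vv^*)$ using $(I-vv^*)P_{V_k}v = P_{V_k}v - \|P_{V_k}v\|^2 v$ shows that $P_{T^\perp}(Z)$ factors through the rank-$K$ matrix $\sum_k u_k e_k(P_{V_k}v)^*$ minus a rank-one correction. Carefully bounding the nuclear norm by exploiting the low effective rank induced by the projections onto $V_k$, one sees that $\|P_{T^\perp}(Z)\|_*$ can be made comparable to the inner-product term up to a factor of order $\sqrt{L/(KN)}$; adding a small multiple of $-uv^*$ to force strict feasibility in the descent cone produces a residual image under $\mathcal A$ of the same order, yielding the bound (\ref{ineq:minconvalueBD}).

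The probabilistic statement $1-\mathcal O(\exp(-C_2K/\mu_{h_0}^2))$ follows from Gaussian concentration of the $K$ quantities $\|P_{V_k}v\|^2$ (which are independent across $k$ because the blocks $S_k$ are disjoint) combined with a union bound, where the coherence appears via the identity $\max_k |u_k|^2 = \mu_{h_0}^2/K$ that holds for the present choice of $B$. The main technical obstacle is controlling $\|P_{T^\perp}(Z)\|_*$: the naive rank bound $\|P_{T^\perp}(Z)\|_* \le \sqrt{K}\|P_{T^\perp}(Z)\|_F$ is too loose to close the argument when $L$ is close to $KN/36$, so one has to exploit the near-alignment of the rows of $P_{T^\perp}(Z)$ with a common low-dimensional direction to obtain an effective-rank improvement of order $\sqrt{L/(KN)}$.
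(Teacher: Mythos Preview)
Your construction of $B$ and your identification of the row-wise null-space structure $\mathrm{null}(\mathcal A) = \{\sum_k e_k\phi_k^*: \phi_k\in V_k\}$ are correct and match the paper. The gap is exactly where you flag it: the nuclear norm of $P_{T^\perp}(Z)$ for your choice $\phi_k=-\beta\bar u_k P_{V_k}v$ cannot be controlled as claimed. Writing $w_k:=(I-vv^*)P_{V_k^\perp}v$, one computes
\[
P_{T^\perp}(Z)=\beta\sum_{k=1}^K u_k\,(e_k-\bar u_k u)\,w_k^*,
\]
and the $w_k$ are \emph{independent} mean-zero vectors of norm $\approx\sqrt{p/N}=\sqrt{L/(KN)}$ living in $v^\perp$ (independence holds because the blocks $S_k$ use disjoint sets of Gaussian vectors $c_\ell$). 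There is no alignment: for $K\le N-1$ these vectors are generically linearly independent, so the matrix above genuinely has rank $\approx K$ and
\[
\|P_{T^\perp}(Z)\|_*\;\asymp\;\beta\sum_{k}|u_k|\,\|w_k\|\;\asymp\;\beta\sqrt{\tfrac{L}{KN}}\sum_k|u_k|\;\asymp\;\beta\sqrt{\tfrac{L}{N}}
\]
in the incoherent case $|u_k|\asymp 1/\sqrt K$. This forces the descent-cone correction $\gamma$ in $Z'=Z-\gamma uv^*$ to satisfy $\gamma\gtrsim\beta\sqrt{L/N}$ whenever $L>N$, and since $\|\mathcal A(Z')\|/\|Z'\|_F\asymp\gamma/\beta$, your argument yields only $\lambda_{\min}\lesssim\sqrt{L/N}$, off by a factor $\sqrt K$ from the target. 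The ``effective-rank improvement'' you invoke does not exist here.

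The paper's remedy is to drastically simplify your ansatz: use a \emph{single} index $i$ and set
\[
W\;=\;-\,\frac{\langle h_0,e_i\rangle}{|\langle h_0,e_i\rangle|\,\|P_{V_i}v\|}\,e_i\,(P_{V_i}v)^*,
\]
a rank-one element of $\mathrm{null}(\mathcal A)$ with $\|W\|_F=1$. Rank one gives $\|P_{T^\perp}(W)\|_*=\|P_{T^\perp}(W)\|_F\le\|P_{V_i^\perp}v\|\asymp\sqrt{L/(KN)}$ for free. Then $Z=W-\beta\,h_0m_0^*/\|h_0m_0^*\|_F$ with $\beta\asymp\sqrt{L/(KN)}$ lies in $\overline{\mathcal K_*(h_0m_0^*)}$, has $\|Z\|_F\ge 1/2$, and $\|\mathcal A(Z)\|=\beta\|\mathcal A(h_0m_0^*/\|h_0m_0^*\|_F)\|\lesssim\beta$. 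The required probability is obtained not by a union bound but by the opposite: the events $\{\|P_{V_i^\perp}v\|^2\asymp L/(KN)\}$ are independent across $i$, each holds with constant probability, and one only needs \emph{some} $i$ to succeed, which boosts the failure probability to $\exp(-cK)$.
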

%Recall that we are interested in the scenario where $L \ll KN $, i.e., the linear system is sufficiently underdetermined. For example, assume $K \asymp N $ and $L\asymp K \ \text{polylog} \left(K\right)  $. That is the scenario, when the number of measurements is close to the information theoretic limit, but recovery is possible according to Theorem \ref{thm:ahmed}. Then Theorem \ref{theorem:main1} implies that with high probability it holds that
%\begin{equation*}
%\lambda_{\min} \left(X_0\right) \lesssim \sqrt{\frac{\text{polylog} \left(K\right)}{K}}.
%\end{equation*}
The proof of Proposition \ref{theorem:mainpropBD} will be provided in Section \ref{section:proofconic}.
Note that by definition of the minimum conic singular value $ \lambda_{\min} \left( \mathcal{A},  \dcone{h_0m^*_0} \right) $ Proposition \ref{theorem:mainpropBD} is equivalent to the statement that with high probability there is $Z \in  \dcone{h_0 m^*_0} \setminus \left\{ 0 \right\} $ such that 
\begin{equation}\label{ineq:internyy}
\Vert \mathcal{A} \left(Z\right) \Vert  \lesssim  \sqrt{\frac{L}{KN}}  \Vert Z \Vert_F.
\end{equation}
Our construction of such $Z \in  \dcone{h_0 m^*_0} $ relies on the observation that with high probability there is a rank-one matrix $W\in \mathbb{C}^{K \times N}$ in the null-space of $ \mathcal{A} $ which is relatively close to the descent cone (with respect to the $\Vert \cdot \Vert_F$-distance). Perturbing $W$ by $-\beta h_0 m^*_0 $ for a suitable $\beta$ one can then obtain a matrix $Z\in  \dcone{h_0 m^*_0}  $, which fulfills (\ref{ineq:internyy}).\\
The existence of such a matrix $W\in \text{ker} \mathcal{A}$ also reveals a fact about the geometry of the problem, which we find somewhat surprising: while the null space of $\mathcal{A}$ does not intersect the descent cone (otherwise exact recovery would not be possible), the angle between those objects is very small. This is very different from the behavior for measurement matrices $ \mathcal{A}$ with i.i.d.~Gaussian entries (instead of $ b_{\ell} c^*_{\ell} $).

\begin{remark}\label{remark:actualminimizer}
While $\tilde{X}$ is preferred to the true solution by the SDP (\ref{opt:SDP})  $\tilde{X}$ is typically not a minimizer of (\ref{opt:SDP}). To see this, assume that without noise exact recovery is possible, which is the case with high probability by Theorem \ref{thm:ahmed}. Then consider $\tilde{X} = X_0 + tZ $ for $Z \in  \dcone{h_0 m^*_0}  $ of the form $Z=W - \beta h_0 m^*_0 $ with $W\in \text{ker} \mathcal{A}$ and $\beta >0 $ such that $ \frac{\Vert \mathcal{A} \left(Z\right) \Vert}{\Vert Z \Vert_F} \lesssim \sqrt{\frac{L}{KN}} $, as in the proof of Proposition \ref{theorem:mainpropBD}.
%, where in the proof we constructed the matrix $Z\in \mathbb{C}^{K \times N}$ as $Z=W -\beta h_0 m^*_0$, where $W$ was a matrix such that $\mathcal{A} \left(W\right)= 0 $. 
As $ W \notin  \dcone{h_0 m^*_0}  $ (otherwise exact recovery would not be possible) it follows that for $ t>0 $
\begin{align*}
	\Vert \tilde{X} \Vert_{\ast} &= \Vert X_0 + tZ \Vert_{\ast}\\
	&= \Vert \left(1-t\beta\right) X_0 + tW \Vert_{\ast}\\
	&>  \Vert \left(1-t\beta\right) X_0  \Vert_{\ast}
\end{align*}
where the last line is due to $  \dcone{X_0}   =  \dcone{ \left(1-t\beta\right) X_0}  $. 

On the other hand, we also have that $ \mathcal{A} \left( \hat{X} \right) = \mathcal{A}  \left(  \left(1-t\beta\right) X_0 \right) $ due to $\mathcal{A} \left(W\right) =0 $ and, hence, $\left(1-t\beta\right) X_0 $ is admissible whenever $ \tilde{X} $ is admissible. Consequently, the SDP (\ref{opt:SDP}) will always prefer $\left(1-\beta t\right) h_0 m^*_0 $ to $\tilde{X} $ and $ \tilde{X}$ will never be a minimizer. It remains an open problem what one can say about the minimizer $\hat{X}$ of (\ref{opt:SDP}), see also Section \ref{section:outlook}.\\
Even if the minimizer of (\ref{opt:SDP}) $\hat{X}$ is closer to the ground truth (in $\Vert \cdot \Vert_F$-distance) than $\tilde{X}$, however, the nuclear norms of $X$ and $\tilde{X}$ will be very close, which can easily lead to numerical instabilities. %For a further discussion of these issues we refer to Section \ref{section:outlook}. 
\end{remark}

\subsubsection{Matrix completion}
Our second main result states that for arbitrary incoherent low-rank matrices, matrix completion is unstable with high probability. Note that in contrast to Theorem \ref{thm:mainresult} which is based on a specific choice of parameters the following result holds for an arbitrary incoherent matrix $X_0$.
%Let us state our main result for matrix completion, which is similar to Theorem \ref{thm:mainresult} for blind deconvolution.
\begin{theorem}\label{thm:mainresultMC}
	Let $n_1 \ge n_2 $ and let $\mathcal{A}: \mathbb{R}^{n_1 \times n_2} \rightarrow \mathbb{R}^m  $ be defined as in (\ref{equ:MCopdefinition}). Assume that $ X_0 \in \mathbb{R}^{n_1 \times n_2} \setminus \left\{ 0 \right\} $ is a rank $r$ matrix with singular value decomposition $X_0 = U\Sigma V^* $. Moreover, assume that
	\begin{equation*}
	C_1 rn_1 \mu^2 \left(V\right) \log (2r) \le  m \le \frac{n_1 n_2}{32}.
	\end{equation*}
	Then with probability at least $1 - \mathcal{O} \left(\exp \left( -\frac{m}{C_2r \mu^2 \left( U \right) \mu^2 \left( V \right)  } \right) \right)$ there is $\tau_0 > 0 $ such that for all $\tau \le \tau_0 $ there exists an adversarial noise vector $e \in \mathbb{R}^{m} $ with $\Vert e \Vert \le \tau $ that admits an alternative solution $\tilde{X} \in \mathbb{R}^{n_1 \times n_2} $ with the following properties.
	\begin{itemize}
		\item $\tilde{X}$ is feasible, i.e., $\Big\Vert \mathcal{A} \left( \tilde{X} \right) - y \Big\Vert=  \tau $ for $y= \mathcal{A} \left(X_0\right)+e $ the noisy measurement vector
		\item $\tilde{X}$ is preferred to $X_0 $ by the SDP (\ref{SDP_MC}), i.e.,    $\Vert \tilde{X} \Vert_{\ast} \le  \Vert X_0 \Vert_{\ast} $ , but
		\item $\tilde X$ is far from the true solution in Frobenius norm, i.e.,
		\begin{equation*}
		\Vert \tilde{X} - X_0 \Vert_F \ge \frac{ \tau}{C_3}  \sqrt{ \frac{ r  n_1n_2}{m}}.
		\end{equation*}
	\end{itemize}
	Here the constants $ C_1 $, $C_2$, and $C_3$ are universal.
\end{theorem}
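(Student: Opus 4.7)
The statement mirrors Theorem~\ref{thm:mainresult} for blind deconvolution, and I would imitate the two-step reduction used there. First, I would establish the matrix-completion analogue of Proposition~\ref{theorem:mainpropBD}: under the hypotheses of Theorem~\ref{thm:mainresultMC}, with probability at least $1-\mathcal{O}\!\left(\exp\!\left(-\tfrac{m}{C_2 r \mu^2(U)\mu^2(V)}\right)\right)$,
\[
\lambda_{\min}\!\bigl(\mathcal{A},\mathcal{K}_{\ast}(X_0)\bigr)\;\le\;C_3\sqrt{\tfrac{m}{r\,n_1 n_2}}.
\]
Given any witness $Z\in\mathcal{K}_{\ast}(X_0)\setminus\{0\}$ achieving this bound, the three claimed properties of $\tilde X$ follow by the standard descent-cone construction behind Theorem~\ref{thm:chandrasekaran}: set $t:=\tau/\|\mathcal{A}(Z)\|$, $e:=-t\,\mathcal{A}(Z)$ (so $\|e\|=\tau$), and $\tilde X:=X_0+tZ$. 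Then $\|\mathcal{A}(\tilde X)-y\|=0\le\tau$ (rescale $t$ slightly to attain equality at~$\tau$); for $\tau\le\tau_0$ sufficiently small, $\|\tilde X\|_{\ast}\le\|X_0\|_{\ast}$ since $tZ$ points into the descent cone; and
\[
\|\tilde X-X_0\|_F\;=\;t\|Z\|_F\;=\;\tau\,\frac{\|Z\|_F}{\|\mathcal{A}(Z)\|}\;\ge\;\frac{\tau}{C_3}\sqrt{\tfrac{r\,n_1n_2}{m}}.
\]

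\textbf{Construction of the witness.} Following the rank-one blueprint sketched after Proposition~\ref{theorem:mainpropBD}, I would look for a rank-one matrix $W=uv^{\ast}\in\ker\mathcal{A}$ close (in Frobenius distance) to $\mathcal{K}_{\ast}(X_0)$, and then set $Z:=W-\beta X_0$, where $\beta>0$ is the smallest value making $Z\in\mathcal{K}_{\ast}(X_0)$. Using $P_{T^{\perp}}X_0=0$ and $\langle UV^{\ast},X_0\rangle_F=\|X_0\|_{\ast}$, the subdifferential characterisation of the nuclear-norm descent cone yields
\[
\beta_W\;=\;\frac{\langle UV^{\ast},W\rangle_F+\|P_{T^{\perp}}W\|_{\ast}}{\|X_0\|_{\ast}}.
\]
Since $W\in\ker\mathcal{A}$, one has $\mathcal{A}(Z)=-\beta_W\mathcal{A}(X_0)$, hence $\|\mathcal{A}(Z)\|=\beta_W\|\mathcal{A}(X_0)\|$; a Bernstein-type concentration under the incoherence and sample-complexity assumptions gives $\|\mathcal{A}(X_0)\|\lesssim\|X_0\|_F$ with the exponential probability claimed. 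To produce $W$ itself, I would exploit that $m\le n_1n_2/32$ leaves the overwhelming majority of entries unsampled: a Chernoff-type argument over random row and column subsets produces, with the claimed probability, index sets $S_1\subset[n_1]$ and $S_2\subset[n_2]$ with $S_1\times S_2$ entirely disjoint from the samples and $|S_1|\cdot|S_2|\asymp n_1n_2/m$. The incoherence of $U,V$ together with the lower bound on $m$ forces $|S_i|>r$ and makes $U_{S_1}$, $V_{S_2}$ well-conditioned, so one can pick $u_{S_1}\in\ker U_{S_1}^{\top}\setminus\{0\}$ and $v_{S_2}\in\ker V_{S_2}^{\top}\setminus\{0\}$ and extend by zero to obtain $u\perp U$ and $v\perp V$. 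Then $W=uv^{\ast}\in T^{\perp}\cap\ker\mathcal{A}$ is rank one, with $\langle UV^{\ast},W\rangle_F=0$, $\|P_{T^{\perp}}W\|_{\ast}=\|W\|_F$, $\langle W,X_0\rangle_F=0$, and $\beta_W=\|W\|_F/\|X_0\|_{\ast}$; substituting these identities into $\|Z\|_F^2=\|W\|_F^2+\beta_W^2\|X_0\|_F^2$ and combining with $\|X_0\|_F/\|X_0\|_{\ast}\le 1/\sqrt r$ yields a conic singular-value bound of order $1/\sqrt r$; the residual dimensional factor $\sqrt{m/(n_1n_2)}$ is to be extracted by tuning $|S_1|\cdot|S_2|$ to its near-maximal value $n_1n_2/m$, which forces sharper simultaneous control of $\|W\|_F$, $\|\mathcal{A}(X_0)\|$, and $\|Z\|_F$.

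\textbf{Main obstacle.} The most delicate point is precisely this last step: converting the coarse $1/\sqrt r$ bound afforded by any rank-one $W\in T^{\perp}\cap\ker\mathcal{A}$ into the sharper $\sqrt{m/(rn_1n_2)}$ claimed in the theorem. Because the identity $\mathcal{A}(Z)=-\beta_W\mathcal{A}(X_0)$ is scale invariant, the extra factor $\sqrt{m/(n_1n_2)}$ must come from the geometric interplay between the size of the sample-avoiding rectangle $S_1\times S_2$ and the ``balanced'' coherence structure of $X_0$, not from simple norm concentration. The exponent $m/(r\mu^2(U)\mu^2(V))$ in the probability strongly indicates that Bernstein-type deviation inequalities combined with a union bound over the random sample locations will drive the probabilistic estimate.
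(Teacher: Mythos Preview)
Your two-step reduction (conic singular-value bound $\Rightarrow$ construction of $\tilde X$) and the use of $Z=W-\beta\,\cdot$ with $W\in\ker\mathcal A$ match the paper. There is, however, a genuine gap in your witness construction, and it is exactly the point you flag as the ``main obstacle'' --- but your proposed fix cannot work.

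With $W\in T^{\perp}\cap\ker\mathcal A$ you get $\beta_W=\Vert W\Vert_F/\Vert X_0\Vert_{\ast}$ and hence
\[
\frac{\Vert\mathcal A(Z)\Vert}{\Vert Z\Vert_F}\;\asymp\;\frac{\beta_W\Vert X_0\Vert_F}{\sqrt{\Vert W\Vert_F^2+\beta_W^2\Vert X_0\Vert_F^2}}\;=\;\frac{\Vert X_0\Vert_F/\Vert X_0\Vert_{\ast}}{\sqrt{1+(\Vert X_0\Vert_F/\Vert X_0\Vert_{\ast})^2}}\;\ge\;\frac{1}{\sqrt{2r}}.
\]
This ratio is \emph{scale invariant} in $W$, so ``tuning $|S_1|\cdot|S_2|$'' changes nothing: any rank-one $W\in T^{\perp}\cap\ker\mathcal A$ gives a conic singular-value bound of order $1/\sqrt r$, never $\sqrt{m/(rn_1n_2)}$. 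The missing factor $\sqrt{m/(n_1n_2)}$ cannot be recovered along this route.

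The paper's insight is the \emph{opposite} of orthogonality: one wants $W\in\ker\mathcal A$ that is \emph{close to} $T$, so that $\Vert\mathcal P_{T^{\perp}}W\Vert_{\ast}$ is small and $\beta$ can be taken small. Concretely, the paper shows (Lemma~\ref{lemma:auxiliary1}) that with the stated probability some row index $a\in[n_1]$ satisfies $\Vert P_{\mathcal N_a}V\Vert\le\sqrt{2m/(n_1n_2)}$, where $\mathcal N_a$ is the set of observed column indices in row $a$. Then for any unit $x\in\mathbb R^r$ one sets $w_a:=P_{\mathcal N_a^{\perp}}Vx$ and $W:=\pm\, e_a w_a^{\ast}$. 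By construction $W\in\ker\mathcal A$, and since $w_a$ differs from $Vx\in\mathrm{ran}\,V$ only by $P_{\mathcal N_a}Vx$, one gets
\[
\Vert\mathcal P_{T^{\perp}}W\Vert_{\ast}\le\Vert P_{\mathcal N_a}Vx\Vert\le\sqrt{2m/(n_1n_2)}.
\]
Taking $Z=W-\beta\,UV^{\ast}$ (note: $UV^{\ast}$, not $X_0$) with $\beta\asymp\sqrt{m/(r^2n_1n_2)}$ puts $Z$ in the descent cone via Lemma~\ref{lemma:descentconechar}, gives $\Vert Z\Vert_F\ge\tfrac12$, and yields $\Vert\mathcal A(Z)\Vert=\beta\Vert\mathcal A(UV^{\ast})\Vert\lesssim\beta\sqrt r\asymp\sqrt{m/(rn_1n_2)}$. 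The small factor comes from $\beta$ being small, which in turn is possible only because $W$ is near $T$.

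A minor point: with your choice $e=-t\mathcal A(Z)$ one gets $\Vert\mathcal A(\tilde X)-y\Vert=2t\Vert\mathcal A(Z)\Vert=2\tau$, not $0$. The paper takes $e=\tfrac{t}{2}\mathcal A(Z)$, so that both $\Vert e\Vert$ and $\Vert\mathcal A(\tilde X)-y\Vert$ equal $\tfrac{t}{2}\Vert\mathcal A(Z)\Vert$.
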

Again, to put our results in perspective note that for $m \asymp n_1 \text{polylog} \left( n_1 \right)$, which is the minimal number of measurements required for noiseless recovery, it holds that $ \sqrt{ \frac{rn_1 n_2}{m}} \asymp \sqrt{\frac{n_2}{\text{polylog} \left(n_1\right) }} $. Up to logarithmic factors, this coincides with the rate predicted by (\ref{ineq:errorcandesplan}).\\ 

%In other words, the result says that with high probability there exists a matrix $\hat{X}$, which by the SDP (\ref{SDP_MC}) is preferred to the ground truth $X_0$, but has rather large $\Vert \cdot \Vert_F $-distance compared to $\tau$. In order to put this result into perspective assume for a moment that $  m \asymp r n_1 \text{polylog} n_1$.  Then it holds that $ \sqrt{\frac{rn_1 n_2}{m}}  \asymp \sqrt{\frac{n_2}{\text{polylog} n_2}} $. Up to logarithmic factors coincides with the rate predicted by (\ref{ineq:errorcandesplan}).\\

Theorem \ref{thm:mainresultMC} is a direct consequence of the following proposition, which, in our opinion, is of independent interest, as it provides a negative answer to a question by Tropp \cite[Section 5.4]{tropp2015convex}.
\begin{proposition}\label{theorem:mainprop}
	Let $ X_0 \in \mathbb{R}^{n_1 \times n_2} \setminus \left\{ 0 \right\} $ be a rank-$r$ matrix with corresponding singular value decomposition $X_0 = U\Sigma V^* $. Moreover, assume that
	\begin{equation}\label{ineq:assumptionbla2}
	C_1 rn_1 \mu^2 \left(V\right) \log (2r) \le  m \le \frac{n_1 n_2}{32}.
	\end{equation}
	Then with probability at least $1 - \mathcal{O} \left(\exp \left( -\frac{m}{C_2 r \mu^2 \left( U \right) \mu^2 \left( V \right)  } \right) \right) $ it holds that
	\begin{equation}\label{ineq:minconvalueMC}
	\lambda_{\min} \left(\mathcal{A},  \dcone{X_0} \right) \le  C_3   \sqrt{ \frac{ m}{n_1 n_2 r}}.
	\end{equation}
	The constants $ C_1 $, $C_2$, and $C_3$ are universal.
\end{proposition}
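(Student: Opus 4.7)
The strategy parallels the blind deconvolution case in Proposition~\ref{theorem:mainpropBD} (cf.\ the discussion around (\ref{ineq:internyy})): I will exhibit a rank-one matrix $W \in \ker \mathcal{A}$ which lies close to the descent cone $\dcone{X_0}$, and then perturb by $-\beta X_0$ for a suitable small $\beta > 0$ so that $Z = W - \beta X_0 \in \dcone{X_0} \setminus \{0\}$ satisfies $\|\mathcal{A}(Z)\|/\|Z\|_F \lesssim \sqrt{m/(n_1 n_2 r)}$. For a rank-one $W = uv^*$, the condition $\mathcal{A}(W) = 0$ amounts to $u_{a_i} v_{b_i} = 0$ for all $i \in [m]$, i.e.\ to the support rectangle of $uv^*$ being disjoint from the sampled set $\Omega = \{(a_i, b_i) : i \in [m]\}$.

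The first step is combinatorial/probabilistic: I show that with high probability over the random sampling there exist $R_u \subseteq [n_1]$, $R_v \subseteq [n_2]$ with $(R_u \times R_v) \cap \Omega = \emptyset$ whose complements $S_u = [n_1] \setminus R_u$, $S_v = [n_2] \setminus R_v$ satisfy simultaneously $|S_u| \mu^2(U)/n_1 \ll 1$, $|S_v| \mu^2(V)/n_2 \ll 1$, and the weighted product estimate $|S_u||S_v| \mu^2(U) \mu^2(V) \lesssim m/r$. Such a rectangle is produced by a Bernstein-type concentration argument applied to the degree statistics of the random bipartite sampling graph; this step is responsible for the probability bound $1 - \mathcal{O}(\exp(-m/(C_2 r \mu^2(U)\mu^2(V))))$ in the statement.

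Given the rectangle, pick a unit vector $\alpha \in \mathbb{R}^r$ and set $u \propto P_{R_u}(U\alpha)$, $v \propto P_{R_v}(-V\alpha)$, each normalized to unit Frobenius norm. The incoherence bound $\sum_{i \in S_u} \|U^* e_i\|^2 \le |S_u| r \mu^2(U)/n_1$ makes the projection loss $\|P_{S_u}(U\alpha)\|^2$ small, which after a standard linear-algebraic computation yields $\|U^* u\|$ close to $1$ and the orthogonal residual $\|(I - UU^*)u\|^2 \lesssim |S_u| \mu^2(U)/n_1$; analogously for $v$. The rank-one $W = uv^*$ lies in $\ker \mathcal{A}$ by the empty-rectangle property, and its closeness-to-descent-cone parameter
\begin{align*}
\varepsilon(W) &:= \innerproduct{UV^*, W}_F + \|P_{T^\perp}(W)\|_* \\
&= \innerproduct{U^* u, V^* v} + \|(I-UU^*)u\| \cdot \|(I-VV^*)v\|
\end{align*}
has first term close to $-1$ (by the opposite sign choice $\alpha$ vs.\ $-\alpha$) and second term bounded by $\sqrt{|S_u||S_v|\mu^2(U)\mu^2(V)/(n_1 n_2)}$. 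Combined with the rectangle estimate and the Cauchy--Schwarz inequality $\|X_0\|_* \le \sqrt{r}\,\|X_0\|_F$, this yields the key estimate $\varepsilon(W) \cdot \|X_0\|_F/\|X_0\|_* \lesssim \sqrt{m/(n_1 n_2 r)}$.

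Finally, set $\beta = \max(0, \varepsilon(W))/\|X_0\|_*$ and $Z = W - \beta X_0$; the descent-cone condition $\innerproduct{UV^*, Z}_F + \|P_{T^\perp}(Z)\|_* \le 0$ holds by the choice of $\beta$. Since $\mathcal{A}(W) = 0$, we have $\mathcal{A}(Z) = -\beta \mathcal{A}(X_0)$, and a standard scalar concentration bound for $\mathcal{A}$ on the single fixed direction $X_0/\|X_0\|_F$ gives $\|\mathcal{A}(X_0)\| \lesssim \|X_0\|_F$ with the stated probability. Moreover $\innerproduct{W, X_0}_F = (V^* v)^\top \Sigma (U^* u) \approx -\alpha^\top \Sigma \alpha \le 0$ by the sign choice, so $\|Z\|_F \ge \|W\|_F = 1$, and combining gives
\[
\frac{\|\mathcal{A}(Z)\|}{\|Z\|_F} \lesssim \beta \|X_0\|_F = \varepsilon(W) \cdot \frac{\|X_0\|_F}{\|X_0\|_*} \lesssim \sqrt{\frac{m}{n_1 n_2 r}}.
\]
The main obstacle is the first step -- producing an empty sub-rectangle with complementary sizes $|S_u|, |S_v|$ simultaneously controlled by the incoherence parameters so that the weighted product $|S_u||S_v|\mu^2(U)\mu^2(V)$ matches the target scaling $m/r$. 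The delicate balance between the vertex-cover behavior of the random sampling, the incoherence of $U, V$, and the desired exponential-in-$m$ probability bound is the heart of the argument.
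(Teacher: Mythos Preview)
Your proposal has a genuine gap, and it lies exactly where you yourself flag the ``main obstacle''. The empty rectangle $R_u\times R_v$ with \emph{both} complements satisfying $|S_u|\mu^2(U)/n_1\ll 1$ and $|S_v|\mu^2(V)/n_2\ll 1$ cannot exist with the claimed probability. To see why, observe that your construction then gives $u\approx U\alpha$ and $v\approx -V\alpha$, so $\langle UV^*,W\rangle_F=\langle U^*u,V^*v\rangle\approx -1$ while $\|\mathcal P_{T^\perp}W\|_*$ is small; hence $\varepsilon(W)<0$. But this means $W$ itself lies in $\overline{\dcone{X_0}}\cap\ker\mathcal A\setminus\{0\}$, forcing $\lambda_{\min}(\mathcal A,\dcone{X_0})=0$ and contradicting the exact recovery guarantee of Theorem~\ref{theorem:grossMC}. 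In other words, no rank-one kernel element can be \emph{that} well aligned with the tangent space, and the rectangle you seek does not exist with high probability. A secondary issue is that the Cauchy--Schwarz bound $\|X_0\|_*\le\sqrt r\,\|X_0\|_F$ gives $\|X_0\|_F/\|X_0\|_*\ge 1/\sqrt r$, which is a \emph{lower} bound and therefore cannot be used to push the factor $1/\sqrt r$ into the upper estimate for $\|\mathcal A(Z)\|/\|Z\|_F$ as you claim.

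The paper's argument is asymmetric and much simpler: it takes $u=e_a$ for a \emph{single} row index $a$ (so $u$ is deliberately \emph{not} aligned with $\mathrm{range}(U)$, and one only uses the trivial bound $\|P_{U^\perp}e_a\|\le 1$), and $v=P_{\mathcal N_a^\perp}(Vx)$ for a unit $x\in\mathbb R^r$, where $\mathcal N_a$ is the set of column indices sampled in row $a$. Then $W=\pm e_a v^*\in\ker\mathcal A$ automatically, and the only probabilistic input needed is that \emph{some} row $a$ satisfies $\|P_{\mathcal N_a}V\|\le\sqrt{2m/(n_1 n_2)}$ (Lemma~\ref{lemma:auxiliary1}, proved by matrix Bernstein conditional on the row partition). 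This gives $\|\mathcal P_{T^\perp}W\|_*\le\sqrt{2m/(n_1 n_2)}$ while $\langle UV^*,W\rangle_F$ is merely $\le 0$ (small in magnitude, not close to $-1$). The perturbation is $Z=W-\beta\,UV^*$, not $W-\beta X_0$: since $\langle UV^*,UV^*\rangle_F=r$, choosing $\beta\asymp\sqrt{m/(r^2 n_1 n_2)}$ already enforces $Z\in\overline{\dcone{X_0}}$, and combining with $\|\mathcal A(UV^*)\|\lesssim\sqrt r$ yields $\|\mathcal A(Z)\|\lesssim\beta\sqrt r\asymp\sqrt{m/(rn_1 n_2)}$. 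The factor of $r$ thus enters through $\langle UV^*,UV^*\rangle_F=r$, not through a ratio of norms of $X_0$.
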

Proposition \ref{theorem:mainprop} corresponds to Proposition \ref{theorem:mainpropBD} for blind deconvolution and will be proved analogously. We will again show that with high probability there is $W \in \mathbb{R}^{n_1 \times n_2}$ such that $\mathcal{A} \left(W\right) = 0$ and $W$ is relatively close to the descent cone of $ X_0$ in $\Vert \cdot \Vert_F$-distance. Setting $Z:= W-\beta UV^*$ for a suitable $\beta>0$ yields an element of $ \dcone{X_0}$ with
\begin{equation*}
 \frac{\Vert \mathcal{A} \left(Z\right) \Vert }{\Vert Z \Vert_F}  \le  C_3   \sqrt{ \frac{ m}{n_1 n_2 r}}.
\end{equation*}
%The construction of such a $Z$ will also be very analogous to the construction in the proof of Proposition \ref{theorem:mainpropBD}.    which has the desired properties. As in the blind deconvolution setting the existence of such a matrix $W\in \text{ker} \mathcal{A}$ means that the angle between $\text{ker} \mathcal{A}$ and $  \dcone{X_0} $ must be quite small.\\

%As in the Blind Deconvolution framework this result implies that $ \lambda_{\min} \left( u_0 v^*_0 \right) \ll 1 $, whenever $ m \ll 1 $, which is the scenario we are interested in. The proof of Theorem \ref{theorem:mainthm3} can be found in Section \ref{section:proof1MC}. Indeed, the proof of Theorem \ref{theorem:main1} is very similar to the proof of Theorem \ref{theorem:mainthm3}. We can find an element $X \in \text{ker}\ \mathcal{A}$, which is relatively close to the tangent space of $u_0 v^*_0 $ with respect to the Frobenius norm. By slightly perturbing $W$ we obtain the desired object $Z$. Hence, the geometry of Blind Deconvolution is similar to the geometry of Matrix Completion in the sense that in both cases the angle between the null space and the descent cone $D \left( \Vert \cdot \Vert_{\ast}, X_0 \right) $ is very small.\\

\subsection{Stable recovery}

%\begin{figure}
%\centering
%		\begin{tikzpicture}[scale=0.625]
%%%\draw (0:1cm) -- (72:1cm) -- (2*72:1cm) --
%%%    (3*72:1cm) -- (4*72:1cm) -- cycle;
%%\draw[pattern=north west lines, pattern color=blue](5cm,2.5cm)--(2.5cm,5cm)--(0cm,2.5cm)--(2.5cm,0cm)--cycle;
%\draw [draw,  pattern=north west lines, pattern color=orange] (-7.5cm,4.4cm)--(12.5cm,2.2cm)--(12.5cm,5.8cm)--(-7.5cm,7.8cm);
%\path[draw, line width=1pt, fill=blue, opacity=0.1]  (-7.5cm,2.0cm)--(-7.5cm,2.94cm)--(2.5cm,5cm)--(12.5cm,2.94cm)--(12.5cm,2.0cm);
%%\path[draw, line width=1pt, fill=blue, opacity=0.1]  (-12cm,2.0cm)--(2.5cm,5cm)--(17cm,2.0cm);
%\path[draw, line width=1pt, fill=green, opacity=0.3]  (-2.2cm,2.0cm)--(2.5cm,5cm)--(7.2cm,2.0cm);
%\path[draw, line width=2.5pt] (-7.5cm,6.1cm)--(12.5cm,3.9cm);
%%\path[draw, line width=0.0pt] (-7.5cm,5.6cm)--(12.5cm,3.4cm);
%%\path[draw, line width=0.0pt] (-7.5cm,6.6cm)--(12.5cm,4.5cm);
%%\draw[gray, thick] (-5cm,0.5cm) .. controls  (2.5cm,5.9cm)  and (2.5cm,6.6cm) .. (10cm,0.5cm);
%\draw[red, pattern=north west lines, pattern color=red, line width=2.5pt] (-1.3cm,2.0cm)--(0.7cm,4.0cm) .. controls  (2.5cm,5.3cm) .. (4.3cm,4.0cm)--(6.3cm,2.0cm);
%\node (A) at (2.5cm,5.8cm)  { $X_0$};
%\node(B) at (-7.5cm,6.9cm) {$X_0+ \text{ker} \mathcal{A}$};
%%%\node(C) at (8.5cm,3.5cm) {nuclear norm ball};
%%%\node(D) at (1.0cm, -1.5cm) {descent cone};
%\end{tikzpicture}

\begin{figure}
	\centering
	\begin{tikzpicture}[scale=0.625]
	%%\draw (0:1cm) -- (72:1cm) -- (2*72:1cm) --
	%%    (3*72:1cm) -- (4*72:1cm) -- cycle;
	%\draw[pattern=north west lines, pattern color=blue](5cm,2.5cm)--(2.5cm,5cm)--(0cm,2.5cm)--(2.5cm,0cm)--cycle;
	\draw [draw,  pattern=north west lines, pattern color=orange] (-7.5cm,4.4cm)--(12.5cm,2.2cm)--(12.5cm,5.8cm)--(-7.5cm,7.8cm);
	\path[draw, line width=1pt, fill=blue, opacity=0.1]  (-9cm,2.0cm)--(2.5cm,5cm)--(14cm,2.0cm);
	\path[draw, line width=1pt, fill=green, opacity=0.3]  (-2.2cm,2.0cm)--(2.5cm,5cm)--(7.2cm,2.0cm);
	\path[draw, line width=2.5pt] (-7.5cm,6.1cm)--(12.5cm,3.9cm);
	%\path[draw, line width=0.0pt] (-7.5cm,5.6cm)--(12.5cm,3.4cm);
	%\path[draw, line width=0.0pt] (-7.5cm,6.6cm)--(12.5cm,4.5cm);
	%\draw[gray, thick] (-5cm,0.5cm) .. controls  (2.5cm,5.9cm)  and (2.5cm,6.6cm) .. (10cm,0.5cm);
	\draw[red, pattern=north west lines, pattern color=red, line width=2.5pt] (-1.3cm,2.0cm)--(0.7cm,4.0cm) .. controls  (2.5cm,5.3cm) .. (4.3cm,4.0cm)--(6.3cm,2.0cm);
	\node (A) at (2.5cm,5.8cm)  { $X_0$};
	\node(B) at (-7.5cm,6.9cm) {$X_0+ \text{ker} \mathcal{A}$};
	%%\node(C) at (8.5cm,3.5cm) {nuclear norm ball};
	%%\node(D) at (1.0cm, -1.5cm) {descent cone};
	\end{tikzpicture}

%\begin{figure}
%	\centering
%	\begin{tikzpicture}[scale=0.625]
%	%%\draw (0:1cm) -- (72:1cm) -- (2*72:1cm) --
%	%%    (3*72:1cm) -- (4*72:1cm) -- cycle;
%	%\draw[pattern=north west lines, pattern color=blue](5cm,2.5cm)--(2.5cm,5cm)--(0cm,2.5cm)--(2.5cm,0cm)--cycle;
%	\draw [draw,  pattern=north west lines, pattern color=orange] (-7.5cm,4.4cm)--(12.5cm,2.2cm)--(12.5cm,5.8cm)--(-7.5cm,7.8cm);
%	\path[draw, line width=1pt, fill=blue, opacity=0.1]  (-12cm,2.0cm)--(2.5cm,5cm)--(17cm,2.0cm);
%	\path[draw, line width=1pt, fill=green, opacity=0.3]  (-2.2cm,2.0cm)--(2.5cm,5cm)--(7.2cm,2.0cm);
%	\path[draw, line width=2.5pt] (-7.5cm,6.1cm)--(12.5cm,3.9cm);
%	%\path[draw, line width=0.0pt] (-7.5cm,5.6cm)--(12.5cm,3.4cm);
%	%\path[draw, line width=0.0pt] (-7.5cm,6.6cm)--(12.5cm,4.5cm);
%	%\draw[gray, thick] (-5cm,0.5cm) .. controls  (2.5cm,5.9cm)  and (2.5cm,6.6cm) .. (10cm,0.5cm);
%	\draw[red, pattern=north west lines, pattern color=red, line width=2.5pt] (-1.3cm,2.0cm)--(0.7cm,4.0cm) .. controls  (2.5cm,5.3cm) .. (4.3cm,4.0cm)--(6.3cm,2.0cm);
%	\node (A) at (2.5cm,5.8cm)  { $X_0$};
%	\node(B) at (-7.5cm,6.9cm) {$X_0+ \text{ker} \mathcal{A}$};
%	%%\node(C) at (8.5cm,3.5cm) {nuclear norm ball};
%	%%\node(D) at (1.0cm, -1.5cm) {descent cone};
%	\end{tikzpicture}

\caption{Geometric illustration of our approach: Close to $X_0$ the descent set (indicated by the red line) is near-tangential to the kernel of the measurement operator $\mathcal{A}$, so the descent cone (light blue) is rather wide. By restricting to noise levels above a certain threshold we only need to cover the descent set at some distance, which is achieved by a much smaller cone (green). Note that below the noise level (orange strip), the green cone does not contain the full descent set.}
\end{figure}
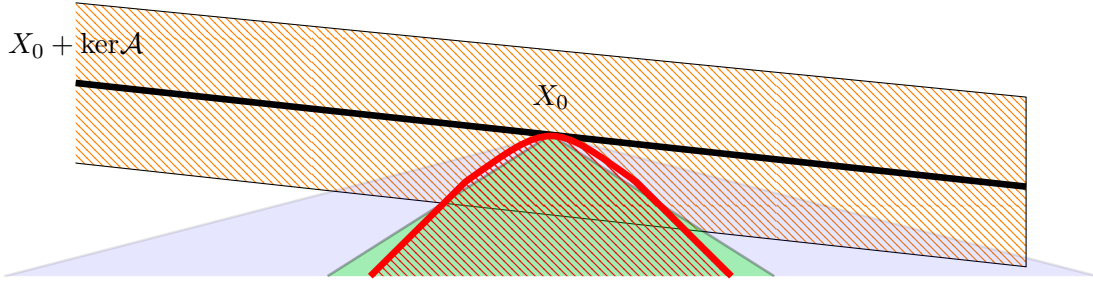

A geometric interpretation of Theorems \ref{thm:mainresult} and \ref{thm:mainresultMC} is that the nuclear norm ball is near-tangential to both the kernels of matrix completion and randomized blind deconvolution. Given that tangent spaces only provide local approximation, these results leave open, what happens in some distance, i.e., for larger noise levels -- this will depend on the curvature of the nuclear norm ball.

Our third main result concerns exactly this problem for the randomized blind deconvolution setup. As it turns out, the descent directions $ Z \in  \dcone{h_0 m^*_0}  $ with $\Vert \mathcal{A} \left(Z\right) \Vert / \Vert Z \Vert_F $ very small correspond to directions of significant curvature. That is, only a very short segment in this direction will have smaller nuclear norm than $h_0 m^*_0$, and the corresponding alternative solutions all correspond to very small $e$. For noise levels $\tau$ large enough, in contrast, these directions can be excluded and one can obtain near-optimal error bounds. In order to precisely formulate this observation, we denote the set of $\mu$-incoherent vectors $h\in \mathbb{C}^K$ with respect to $B \in \mathbb{C}^{L \times K}$ for $\mu \ge 1 $ by
\begin{equation*}
\mathcal{H_{\mu}}:= \left\{ h_0 \in \mathbb{C}^K: \ \ \sqrt{L} \vert \innerproduct{ b_{\ell},h_0 } \vert \le \mu \Vert h_0 \Vert \text{ for all } \ell \in \left[L\right]   \right\}.
\end{equation*}
With this notation, our result reads as follows.
\begin{theorem}\label{thm:stabilityBD}
	Let $ \alpha>0 $, $ \mu \ge 1$, and $ B \in \mathbb{C}^{L\times K} $ such that $B^* B= \Id $. Assume that
	\begin{equation*}
	L \ge C_1  \frac{\mu^2}{\alpha^2}  \left(K+N\right) \ \log^2 L.
	\end{equation*}
	Then with probability at least $ 1 - \mathcal{O} \left( \exp \left( - \frac{L \alpha^{4/3} }{C_2 \log^{4/3} \left(eL\right) \mu^{4/3 } }  \right) \right) $ the following statement holds for all $ h_0 \in \mathcal{H}_{\mu} \setminus \left\{ 0 \right\} $, all $m_0 \in \mathbb{C}^N \setminus \left\{ 0 \right\} $, all $ \tau > 0 $, and all $ e \in \mathbb{C}^L $ with $ \Vert e \Vert \le \tau$ :\\
	Any minimizer $\hat{X}$ of (\ref{opt:SDP}) satisfies
	\begin{equation*}
	\Vert \hat{X} - h_0 m^*_0 \Vert_F \le \frac{ C_3 \mu^{2/3}  \log^{2/3} L}{\alpha^{2/3}} \max \left\{  \tau  ; \alpha \Vert h_0 m^*_0 \Vert_F  \right\}.
	\end{equation*}
	Here $C_1$, $C_2$, and $C_3$ are absolute constants.
\end{theorem}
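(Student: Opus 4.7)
The plan is to establish the theorem by a standard reduction to a restricted conic singular value estimate, followed by Mendelson's small ball method on a carefully chosen restricted descent set. Set $\Delta = \hat X - h_0 m_0^*$. Feasibility gives $\|\mathcal A(\Delta)\| \le 2\tau$, and optimality gives $\|h_0 m_0^* + \Delta\|_* \le \|h_0 m_0^*\|_*$. For a threshold $\rho > 0$ to be tuned, either $\|\Delta\|_F \le \rho$ and we are already done, or by convexity of the nuclear norm the unit direction $Z = \Delta/\|\Delta\|_F$ belongs to
\[
\mathcal{D}_\rho := \bigl\{Z \in \mathbb{C}^{K \times N}:\ \|Z\|_F = 1,\ \|h_0 m_0^* + \rho Z\|_* \le \|h_0 m_0^*\|_*\bigr\},
\]
the set of unit descent directions of depth at least $\rho$. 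A uniform lower bound $\lambda(\rho) := \inf_{Z \in \mathcal{D}_\rho} \|\mathcal A(Z)\|$ then yields $\|\Delta\|_F \le 2\tau/\lambda(\rho)$, and balancing $\rho$ against $2\tau/\lambda(\rho)$ will produce the stated bound, with the $\max(\tau, \alpha\|h_0 m_0^*\|_F)$ arising from the two regimes where either the trivial bound $\rho$ or the conic bound $2\tau/\lambda(\rho)$ dominates.

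I would next characterize $\mathcal D_\rho$ using both first and second order information about the nuclear norm at $h_0 m_0^*$. The first order content is the standard subgradient inequality, which gives the usual descent cone inequality $\|P_{T^\perp} Z\|_* \le \|P_T Z\|_F$, where $T = \{h_0 v^* + u m_0^*: u \in \mathbb{C}^K, v \in \mathbb{C}^N\}$ is the tangent space to the rank-one manifold at $h_0 m_0^*$, of dimension $K + N - 1$. The second order content, which distinguishes $\mathcal D_\rho$ from the full descent cone, comes from a Taylor-type expansion of the nuclear norm around $h_0 m_0^*$: the bad descent directions identified in the proof of Proposition \ref{theorem:mainpropBD}, which are essentially rank-one perturbations nearly orthogonal to $h_0 m_0^*$, exhibit large positive curvature, so that membership in $\mathcal D_\rho$ forces a $\rho$-dependent constraint that rules them out once $\rho \gtrsim \alpha\|h_0 m_0^*\|_F$. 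This is precisely where the dichotomy of Theorem \ref{thm:stabilityBD} materializes geometrically.

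To lower bound $\lambda(\rho)$ I would then invoke Mendelson's small ball method, which in this setting gives a bound of the form $\lambda(\rho) \gtrsim \xi\sqrt{L\,Q_\xi(\mathcal D_\rho)} - C\,w(\mathcal D_\rho)$ for any $\xi > 0$, with $Q_\xi$ the small ball probability and $w$ the Gaussian mean width. The small ball probability is controlled via the observation that, for fixed $b_\ell$ and $Z$, the scalar $b_\ell^* Z c_\ell$ is a centered complex Gaussian in $c_\ell$ with variance $\|Z^* b_\ell\|^2$; combined with the incoherence hypothesis $h_0 \in \mathcal H_\mu$ and the tangent-component control from the previous step, this ensures $\|Z^* b_\ell\|$ is not too small for sufficiently many $\ell$. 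For the Gaussian width I would decompose $Z = P_T Z + P_{T^\perp} Z$: the tangent part contributes $\sqrt{K+N}$ via $\dim T$, while for the orthogonal part I would truncate its SVD at some rank $r$, control the truncated part by a Dudley-type chaining bound, and handle the tail using $\|P_{T^\perp} Z\|_* \le \|P_T Z\|_F \le 1$. A three-way optimization over $\xi$, $r$, and $\rho$ will then produce the $\alpha^{2/3}\mu^{-2/3}\log^{-2/3} L$ factor.

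The main obstacles I anticipate are twofold. First, the measurement rows $b_\ell c_\ell^*$ are not sub-Gaussian but rather products of Gaussians, so standard chaining tools for sub-Gaussian processes are not directly applicable; a careful application of concentration for bilinear forms (for example, a Hanson-Wright or modified Bernstein-type inequality) will be needed both for the Gaussian width estimate and for controlling the deviation of $\|\mathcal A(Z)\|$ from its mean. Second, a statement uniform in $h_0 \in \mathcal H_\mu$ and $m_0$ requires covering the spheres of incoherent vectors, and since the tangent space $T$ itself depends on $(h_0, m_0)$ the covering must be coordinated with the small ball and width estimates while still permitting the sample complexity $L \gtrsim (K+N)\log^2 L$. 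The fractional $2/3$-exponents in the final bound are the signature of the three-way optimization and suggest that each of the three ingredients above is essentially tight in its regime.
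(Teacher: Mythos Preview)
Your high-level strategy is correct and essentially matches the paper's: split the descent cone into a ``shallow'' part handled by curvature and a ``deep'' part handled by a restricted conic singular value bound via Mendelson's small ball method. Your depth parameter $\rho$ and the paper's alignment parameter $\delta$ are equivalent through exactly the second-order observation you identify: for a rank-one anchor $\|h_0 m_0^*\|_* = \|h_0 m_0^*\|_F$, so $\|h_0 m_0^* + Z\|_* \le \|h_0 m_0^*\|_*$ implies, by passing to Frobenius norm and expanding the square, $\|Z\|_F \le -2\,\text{Re}\langle h_0 m_0^*, Z/\|Z\|_F\rangle$. Hence $\mathcal D_\rho \subset \{Z: \|Z\|_F=1,\ -\text{Re}\langle h_0 m_0^*, Z\rangle \ge \rho/2\}$, which is the paper's set $\mathcal E_{\mu,\delta}$ with $\delta = \rho/(2\|h_0 m_0^*\|_F)$.

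Where your sketch is vague is precisely the crux: how incoherence of $h_0$ converts the alignment lower bound into a small-ball estimate. The paper's mechanism is clean and you should make it explicit. Writing $\|Z\|_{B_1} := \sum_\ell \|Z^* b_\ell\|$, for $Z \in \mathcal E_{\mu,\delta}$ one expands $\delta \le -\text{Re}\langle Z m_0, h_0\rangle/\|h_0 m_0^*\|_F = -\sum_\ell \text{Re}(\langle Z m_0, b_\ell\rangle\langle b_\ell, h_0\rangle)/\|h_0 m_0^*\|_F$ using $\sum_\ell b_\ell b_\ell^* = \text{Id}$, and then $|\langle b_\ell, h_0\rangle| \le \mu\|h_0\|/\sqrt{L}$ and $|\langle Z m_0, b_\ell\rangle| \le \|Z^* b_\ell\|\|m_0\|$ give $\|Z\|_{B_1} \ge \delta\sqrt{L}/\mu$. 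A weak-$\ell_1$ argument (since $\sum_\ell \|Z^* b_\ell\|^2 = \|Z\|_F^2 = 1$) then shows that at least $\delta^2 L/(\mu^2\log^2 L)$ indices $\ell$ satisfy $\|Z^* b_\ell\| \gtrsim \delta/(\mu\sqrt{L}\log L)$, which via Paley--Zygmund yields the marginal tail bound needed for Mendelson. This is what your phrase ``tangent-component control'' should be replaced by.

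Your anticipated obstacles and proposed machinery are heavier than needed. First, since $B^*B = \text{Id}$ one checks directly that the entries of $\sum_\ell b_\ell c_\ell^*$ are i.i.d.\ $\mathcal{CN}(0,1)$, so after symmetrization the empirical width \emph{equals} the Gaussian width $\omega(\mathcal E_{\mu,\delta})$; no Hanson--Wright is required. Second, uniformity over $h_0 \in \mathcal H_\mu$ and $m_0$ is obtained simply by taking the union $\mathcal E_{\mu,\delta} = \bigcup_{h_0,m_0}\{\cdots\}$ and noting it sits inside the union of all rank-one descent cones intersected with the Frobenius sphere, whose Gaussian width is at most $2\sqrt{K+N}$; no covering of $(h_0,m_0)$ and no SVD truncation at an auxiliary rank $r$ are needed. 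Consequently there is no three-way optimization: the $2/3$-exponents arise from a single-parameter balance, the conic bound scaling like $\mu^2\log^2 L/\delta^2$ against the curvature bound scaling like $\delta$, yielding $\delta = (\log L)^{2/3}\mu^{2/3}\alpha^{1/3}$.
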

In words, this theorem establishes linear scaling in the noise level $\tau$ with only a logarithmic dimensional factor for $\tau \ge \alpha \Vert h_0 m^*_0 \Vert_F$, in contrast to the polynomial factor required for small noise levels as a consequence of Theorem \ref{thm:mainresult}. Here the value of $\alpha$ can be chosen arbitrarily small, at the expense of an increased number of measurements. For example when one is interested in noise levels $\tau = \epsilon \mu^{-2} \log^{-2} L$ for some $\epsilon>\epsilon_0$ (this is the largest order to expect meaningful error bounds despite the additional logarithmic factors) one should choose $\alpha \asymp \epsilon_0 \mu^{-2} \log^{-2} L$, and near-linear error bounds will be guaranteed for a sample complexity of 
	\begin{equation*}
L \ge C_1  \frac{\mu^6}{\epsilon_0^2}  \left(K+N\right) \ \log^6 L.
\end{equation*}
\begin{remark}
A similar approach to the proof of Theorem \ref{thm:stabilityBD} also yields a corresponding result for rank-one matrix completion. Arguably, however, matrix completion is mainly of  interest for ground truth matrices of rank higher than one, so we decided to omit the proof details.
\end{remark}

\section{Upper bounds for the minimum conic singular values}\label{section:proofconic}
\subsection{Characterization of the descent cone of the nuclear norm}

The goal of this section is to prove Proposition \ref{theorem:mainpropBD} and Proposition \ref{theorem:mainprop}, from which we will then be able to deduce Theorem \ref{thm:mainresult} and Theorem \ref{thm:mainresultMC}. For that we first discuss a characterization of the descent cone $ \dcone{X}  $.
%Although in this manuscript we only need this characterization, when $X$ has rank one, we will state in full generality as it might be of independent interest.
In order to state this characterization, Lemma \ref{lemma:descentconechar}, we need to introduce some additional notation. Let $ X \in \mathbb{C}^{n_1 \times n_2} $ be a matrix of rank $r$. We will denote its corresponding singular value decomposition by $ X= U \Sigma V^* $, where $ \Sigma \in \mathbb{R}^{r \times r}  $ is a diagonal matrix with nonnegative entries and $U \in \mathbb{C}^{ n_1 \times r }$ and $V \in \mathbb{C}^{n_2 \times r}$ are unitary matrices, i.e., $  U^*U  = V^* V = \Id_r$. This allows us to define the tangent space of the manifold of rank-$r$ matrices at the point $X$ by
\begin{equation}\label{equ:deftangentspace}
T_X := \left\{  UA^* + B V^*    : \ A \in \mathbb{C}^{n_2 \times r}, B \in \mathbb{C}^{n_1 \times r}     \right\}.
\end{equation}
By $ \mathcal{P}_{T_X} $ we will denote the orthogonal projection onto $ T_X $, by $ \mathcal{P}_{T^{\perp}_X} =   \Id - \mathcal{P}_{T_X} $ the projection onto its orthogonal complement.
\begin{lemma}\label{lemma:descentconechar}
		Let $ X \in \mathbb{C}^{n_1 \times n_2} \backslash \left\{ 0 \right\}$ be a matrix of rank $r$ with corresponding singular value decomposition $ X= U \Sigma V^* $. Then
	\begin{equation*}
	\overline{   \dcone{X}  } =  \left\{ Z \in \mathbb{C}^{n_1 \times n_2} :   - \text{Re} \left( \innerproduct{UV^*, Z}_F   \right)    \ge \Vert \mathcal{P}_{T^{\perp}_X} \left( Z \right)   \Vert_{\ast}     \right\},
	\end{equation*} 
	where $\overline{   \dcone{X}  } $ denotes the topological closure of $   \dcone{X}   $.
\end{lemma}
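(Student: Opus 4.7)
The plan is to derive the formula from the standard convex-analytic duality between descent cones and subdifferentials. First, I would observe that, since $\|\cdot\|_{\ast}$ is convex, $Z \in \dcone{X}$ forces the one-sided directional derivative $f'(X;Z)$ of the nuclear norm at $X$ in direction $Z$ to be nonpositive, and conversely the set of $Z$ with $f'(X;Z)\le 0$ is precisely $\overline{\dcone{X}}$. Combined with the support-function identity $f'(X;Z)=\sup_{G\in\partial\|X\|_\ast}\real\langle G,Z\rangle_F$, this yields the polarity relation
\begin{equation*}
\overline{\dcone{X}} = \bigl\{Z\in\mathbb{C}^{n_1\times n_2}:\ \real\langle G,Z\rangle_F\le 0\ \text{for all }G\in\partial\|X\|_\ast\bigr\}.
\end{equation*}

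Second, I would invoke the well-known characterization of the nuclear-norm subdifferential at a rank-$r$ matrix $X=U\Sigma V^*$,
\begin{equation*}
\partial\|X\|_\ast = \bigl\{UV^* + W :\ W\in T_X^{\perp},\ \|W\|\le 1\bigr\},
\end{equation*}
which rests on the spectral/nuclear norm duality and the identification of $UV^*$ as the unique element of $\partial\|X\|_\ast$ lying in $T_X$. Substituting this description into the polar condition above and using that $UV^*$ does not depend on $W$, one finds that $Z\in\overline{\dcone{X}}$ if and only if
\begin{equation*}
\real\langle UV^*,Z\rangle_F + \sup_{\substack{W\in T_X^{\perp}\\\|W\|\le 1}}\real\langle W,Z\rangle_F \;\le\; 0.
\end{equation*}

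Third, I would evaluate the remaining supremum. Because $W\in T_X^{\perp}$, the inner product equals $\langle W,\mathcal{P}_{T_X^{\perp}}(Z)\rangle_F$. The unconstrained supremum of $\real\langle W,M\rangle_F$ over the spectral-norm ball equals $\|M\|_\ast$ by duality of spectral and nuclear norms, and it is attained at the sign matrix coming from the SVD of $M$. For $M=\mathcal{P}_{T_X^{\perp}}(Z)\in T_X^{\perp}$, however, every SVD has left singular vectors in the range of $\mathrm{Id}-UU^*$ and right singular vectors in the range of $\mathrm{Id}-VV^*$, so the optimal sign matrix still lies in $T_X^{\perp}$. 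Therefore the constrained supremum is again $\|\mathcal{P}_{T_X^{\perp}}(Z)\|_\ast$, and rearranging the displayed inequality produces the claimed characterization. The main point requiring care is precisely this last step—verifying that restriction to $T_X^{\perp}$ does not shrink the supremum—while the rest is bookkeeping.
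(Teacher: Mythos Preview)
Your proposal is correct and follows essentially the same route as the paper: both arguments establish the polarity relation $\overline{\dcone{X}}=(\partial\|X\|_\ast)^\circ$ (you via directional derivatives and the support-function identity, the paper via Rockafellar's polarity theorem and the bipolar theorem), then plug in Watson's characterization of $\partial\|X\|_\ast$ and use spectral/nuclear duality. The only substantive step---showing that the dual maximizer can be taken in $T_X^\perp$---is handled the same way in both, so there is no meaningful difference in strategy.
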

\begin{remark}
Lemma \ref{lemma:descentconechar} is similar to well-known results in convex optimization and may be known to the community. As we could not find it in the literature in this form, we decided to include a proof for completeness.
\end{remark}
%The case we are especially interested in is that $X=h_0 m^*_0 $. Note that Lemma \ref{lemma:descentconechar} implies then that
%\begin{equation*}
%\overline{  D\left( \Vert \cdot \Vert_{\ast}, h_0 m^*_0  \right)  } =  \left\{ Z \in \mathbb{C}^{n_1 \times n_2} :    \frac{-1}{\Vert h_0 m^*_0 \Vert_F} \text{Re} \left( \innerproduct{h_0 m^*_0, Z}_F   \right)    \ge \Vert \mathcal{P}_{T^{\perp}_X} \left( Z \right)   \Vert_{\ast}     \right\}.
%\end{equation*}
The proof of Lemma \ref{lemma:descentconechar} relies on the duality between the descent cone and the subdifferential of a convex function. In the following we will denote by $ \partial \Vert \cdot \Vert_{\ast} \left(X\right) $ the subdifferential of the nuclear norm at the point $X \in \mathbb{C}^{n_1 \times n_2} $. We will use that a characterization of $ \partial \Vert \cdot \Vert_{\ast} $ is well-known \cite{watson_subdifferential}. Namely, for all $ X \in \mathbb{C}^{n_1 \times n_2} $ with corresponding singular value decomposition $ X=U\Sigma V^* $ it holds that
\begin{equation}\label{equ:charsubdifferential}
\partial \Vert \cdot \Vert_{\ast} \left( X \right) = \left\{  W \in \mathbb{C}^{n_1 \times n_2} :  \ \mathcal{P}_{T_X} W = UV^*, \   \Vert \mathcal{P}_{T^{\perp}_X} W \Vert\le 1         \right\}.
\end{equation}
\begin{proof}
	Recall that for a set of matrices $ \mathcal{V} \subset \mathbb{C}^{n_1 \times n_2} $ its polar cone $ \mathcal{V}^{\circ} $ is defined by
	\begin{equation*}
	\mathcal{V}^{\circ}  := \left\{ Z \in \mathbb{C}^{n_1 \times n_2} : \text{Re} \left( \innerproduct{ W,Z }_F \right)\le 0  \text{ for all } W \in \mathcal{V}   \right\}.
	\end{equation*}
	%It is a well-known fact in convex analysis (see, e.g., \cite[Theorem 23.7]{rockafellar1970} \footnote{Reference \cite{rockafellar1970} considers only sets and functions defined in $ \mathbb{R}^n$ with the usual Euclidean inner product. However, note that $\mathbb{C}^{n_1 \times n_2} $ together with the inner product given by $ \text{Re} \left( \innerproduct{\cdot, \cdot}_F \right) $ may be identified with an $2n_1 n_2 $-dimensional real-valued vector space with standard Euclidean inner product. Applying the results of \cite{rockafellar1970} to this vector space yields the above-mentioned results.}) that the polar cone of the descent cone of a convex function is the conic hull of the subdifferential.
	For all $ X \in \mathbb{C}^{n_1 \times n_2} \backslash \left\{ 0 \right\} $ we have the following polarity relation between the descent cone and the subdifferential 
	%(see, e.g., \cite[Theorem 23.7]{rockafellar1970}. 
	\begin{equation*}
	 \dcone{X}^{\circ} = \overline{ \left\{ \lambda W: \ \lambda \ge 0, \ W \in \partial \Vert \cdot \Vert_{\ast} \left( X \right)    \right\}}.
	\end{equation*}
	For sets and functions defined in $ \mathbb{R}^n$ with the usual Euclidean inner product, this is \cite[Theorem 23.7]{rockafellar1970}. The complex case directly follows, as $\mathbb{C}^{n_1 \times n_2} $ with the inner product $ \text{Re} \left( \innerproduct{\cdot, \cdot}_F \right) $ can be identified with an $2n_1 n_2 $-dimensional real-valued vector space with standard Euclidean inner product.\\
	
	\noindent It follows from the bipolar theorem (see, e.g., \cite[p. 53]{boyd_optimization}) that
	\begin{equation*}\label{equ:dualitysubdifferential}
	\overline{   \dcone{X} } =  \left(  \partial \Vert \cdot \Vert_{\ast} \left( X \right)  \right)^{\circ}.
	\end{equation*}
	Hence, in order to complete the proof it is sufficient to show that
	\begin{equation}\label{equ:necessarycondition}
	\left\{ Z \in \mathbb{C}^{n_1 \times n_2} :   - \text{Re} \left( \innerproduct{UV^*, Z}_F   \right)    \ge \Vert \mathcal{P}_{T^{\perp}_X} \left( Z \right)   \Vert_{\ast}     \right\} =\left(  \partial \Vert \cdot \Vert_{\ast} \left( X \right)  \right)^{\circ} = \overline{  \text{cone} \left( \partial \Vert \cdot \Vert_{\ast} \left( X \right)  \right)  }  .
	\end{equation}
	%Recall that the subdifferential of the nuclear norm is given by (see \cite{watson_subdifferential})
	First, suppose that $Z \in \mathbb{C}^{n_1 \times n_2} $ satisfies $- \text{Re} \left( \innerproduct{UV^*, Z}_F   \right)    \ge \Vert \mathcal{P}_{T^{\perp}_X} \left( Z \right)   \Vert_{\ast}$. We have to show that $ \text{Re} \left( \innerproduct{W,Z}_F \right) \le 0  $ for all $ W \in \partial \Vert \cdot \Vert_{\ast} \left( X \right)  $. Indeed, 
	\begin{align*}
	\text{Re} \left( \innerproduct{W,Z}_F \right) &= \text{Re} \left( \innerproduct{\mathcal{P}_{T_X} W ,Z}_F \right)+ \text{Re} \left( \innerproduct{\mathcal{P}_{T^{\perp}_X} W ,Z}_F \right) \\
	&= \text{Re} \left( \innerproduct{UV^*,Z}_F \right) + \text{Re} \left( \innerproduct{\mathcal{P}_{T^{\perp}_X} W ,  \mathcal{P}_{T^{\perp}_X} Z}_F \right) \\
	&\le \text{Re} \left( \innerproduct{UV^*,Z}_F \right)  +  \Vert \mathcal{P}_{T^{\perp}_X} W \Vert  \Vert  \mathcal{P}_{T^{\perp}_X} Z \Vert_{\ast}  \\
	&  \le \text{Re} \left( \innerproduct{UV^*,Z}_F \right)  +  \Vert  \mathcal{P}_{T^{\perp}_X} Z \Vert_{\ast}  \\
	& \le 0.
	\end{align*}
	In the first inequality we have used that the spectral norm is the dual norm of the nuclear norm. The second inequality follows from $  \Vert \mathcal{P}_{T^{\perp}_X} W \Vert \le 1 $. Hence, we have shown that $ Z \in \left(  \partial \Vert \cdot \Vert_{\ast} \left( X \right)  \right)^{\circ} $. Next, let $Z\in   \left(  \partial \Vert \cdot \Vert_{\ast} \left( X \right)  \right)^{\circ}$ be arbitrary.  Choose $ \tilde{W} \in T^{\perp}_{X} $ such that $ \text{Re} \left(  \innerproduct{\tilde{W},Z}_F \right)  = \Vert \mathcal{P}_{T^{\perp}_X} \left( Z \right)   \Vert_{\ast}  $ and $ \Vert \tilde{W} \Vert \le 1 $. Then by (\ref{equ:charsubdifferential}) it follows that $ UV^* + \tilde{W} \in \partial \Vert \cdot \Vert_{\ast} \left( X \right)   $ and as $ Z \in \left( \partial \Vert \cdot \Vert_{\ast} \left( X \right) \right)^{\circ} $ we obtain that
	\begin{align*}
	0 &\ge \text{Re} \left(  \innerproduct{UV^* + \tilde{W}, Z}_F \right) \\
	& =  \text{Re} \left(  \innerproduct{UV^*, Z}_F \right)  + \Vert   \mathcal{P}_{T^{\perp}_X} \left( Z \right)  \Vert_{\ast}.
	\end{align*}
	This shows that  $-\text{Re} \left( \innerproduct{UV^*, Z}_F   \right)    \ge \Vert \mathcal{P}_{T^{\perp}_X} \left( Z \right)   \Vert_{\ast}  $. Hence, we have verified (\ref{equ:necessarycondition}), which completes the proof.
\end{proof}

\subsection{Upper bound for blind deconvolution}
The goal of this section is to prove Proposition \ref{theorem:mainpropBD}. For that we  need the following lemma, which is a consequence of the concentration of measure theorem for Lipschitz functions. (For a proof of the real-valued case see, e.g., \cite[Lemma 5.3.2]{vershynin2016high}. The complex-case can be shown analogously.)
%\textbf{Actually, the following is Lemma 5.3.2 in \cite{vershynin2016high} apart from the fact that there it is real-valued instead of complex-valued. The proof, however, is analogous.}
\begin{lemma}\label{lemma:projectionsubspace}
	%Let $ z \in \mathbb{C}^n $ be arbitrary and let $P: \mathbb{C}^n$ be a random projection onto a $k$-dimensional subspace (uniformly chosen). 
	Let $P: \mathbb{C}^n \rightarrow \mathbb{C}^n$ be a random projection onto a $k$-dimensional subspace, which is uniformly distributed in the Grassmannian $Gr \left(k, \mathbb{C}^n \right) $. Fix $ z \in \mathbb{C}^n $. Then for all $\varepsilon > 0$ with probability at least $ 1- 2 e^{-\tilde{c} k\varepsilon^2} $ we have that
	\begin{equation*}
	\left( 1-\varepsilon \right)  \frac{k}{n} \Vert z \Vert^2 \le  \Vert Pz \Vert^2 \le \left( 1+\varepsilon \right)  \frac{k}{n}  \Vert z \Vert^2,
	\end{equation*}
	where $\tilde{c}>0$ is some absolute constant.
\end{lemma}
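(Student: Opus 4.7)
The plan is to reduce the claim to Lévy's concentration of measure inequality on the Euclidean sphere via unitary invariance, following the same template as the real-valued case \cite[Lemma 5.3.2]{vershynin2016high}.

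First I would exploit the unitary invariance of the Haar measure on $Gr(k, \C^n)$ to write $P = U P_0 U^{\ast}$, where $U$ is Haar-distributed on the unitary group $U(n)$ and $P_0$ is the deterministic orthogonal projection onto $\text{span}\{e_1, \ldots, e_k\}$. Then $\|Pz\|$ has the same distribution as $\|z\| \cdot \|P_0 w\|$, where $w := U^{\ast} z / \|z\|$ is uniformly distributed on the complex unit sphere in $\C^n$. After this reduction I may assume $\|z\| = 1$ and study the function $g(w) := \|P_0 w\|$. Identifying $\C^n \cong \R^{2n}$, the complex unit sphere becomes the real sphere $S^{2n-1}$, and $g$ is $1$-Lipschitz with respect to the Euclidean metric, since $P_0$ has operator norm one.

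A direct symmetry computation (every coordinate $|w_i|^2$ has mean $1/n$) gives $\E \|P_0 w\|^2 = k/n$. Combined with Jensen's inequality this locates the mean $\E g(w)$ in a window of width $O(1/\sqrt{n})$ around $\sqrt{k/n}$: indeed $\E g(w) \le \sqrt{k/n}$, and a short variance estimate furnishes a matching lower bound. Lévy's inequality for $1$-Lipschitz functions on $S^{2n-1}$ then yields, for every $t > 0$,
\begin{equation*}
\mathbb{P}\bigl( | g(w) - \E g(w) | > t \bigr) \le 2 \exp(-c\, n\, t^2),
\end{equation*}
with an absolute constant $c > 0$.

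Finally, choosing $t$ of order $\varepsilon \sqrt{k/n}$ transfers this one-sided concentration of $g(w)$ into the two-sided bound on $g(w)^2 = \|P_0 w\|^2$ demanded by the lemma (using $(1 \pm \varepsilon/2)^2$ to bracket $1 \pm \varepsilon$ in the regime $\varepsilon \in (0,1]$, while $\varepsilon > 1$ is covered by a crude argument). The probability estimate becomes $2 \exp(-\tilde c k \varepsilon^2)$ as claimed. The main nuisance, rather than a genuine obstacle, is the $O(1/\sqrt{n})$ gap between $\E g(w)$ and $\sqrt{k/n}$; this is absorbed into a mild shrinkage of the constant $\tilde c$, and does not affect the final conclusion. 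The transition from the real to the complex setting is automatic here, since the push-forward of the Haar measure on $U(n)$ under $U \mapsto U^{\ast} z / \|z\|$ is already the uniform measure on $S^{2n-1} \subset \R^{2n}$, so Lévy's inequality applies verbatim.
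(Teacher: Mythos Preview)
Your proposal is correct and follows exactly the route the paper indicates: the paper does not give its own proof but simply refers to \cite[Lemma 5.3.2]{vershynin2016high} for the real case and states that ``the complex case can be shown analogously,'' which is precisely the reduction via $\C^n \cong \R^{2n}$ and L\'evy's inequality that you carry out. Your handling of the gap between $\E g(w)$ and $\sqrt{k/n}$ and of the squaring step is the standard way to fill in the details.
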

%\noindent Now we are prepared to prove Proposition \ref{theorem:mainpropBD}.
\begin{proof}[Proof of Proposition \ref{theorem:mainpropBD}]
	A core ingredient of the proof is to find a tight frame $B$ such that each of its frame vectors is orthogonal to all but a near-minimal number of other frame vectors. For such a $B$ we then choose a vector $h$ out of these frame vectors and use it to construct a matrix in the descent cone that is close to the kernel of the measurement map. For that, we exploit that by the choice of $h$, any rank-one matrix $h\tilde{m}^*$ will lead to a large part of zero measurements due to the orthogonality. Consequently, there are also many vectors $m_1$ and $m_2$ such that $hm_1^*$ and $hm_2^*$ lead to the same measurements, including some choices such that $hm_1^* - hm_2^* $ is not only in the kernel of the measurement map, but also close to the descent cone.
	
	When $K$ divides $L$ a suitable choice for $B$ consists of $\tfrac{L}{K}$ repetitions of a fixed orthogonal basis. When $K$ does not divide $L$, one can still start off in the same way, if one completes the matrix appropriately to obtain a unit norm tight frame without introducing too many pairs of non-orthogonal frame vectors. One way to achieve this is to find a tight frame that consists only of very sparse vectors, as this will also lead to many vanishing inner products. A natural candidate is hence a so-called spectral tetris frame \cite{spectraltetris1}, as it has been shown to be maximally sparse \cite{spectraltetris2}.  Indeed, our construction uses exactly this frame. 
	
	The Spectral Tetris algorithm is based on the observation that the rows of a matrix $G\in \mathbb{R}^{L\times K}$ form a tight frame if and only if the columns of $G$ are orthogonal and of equal norm. It is easy to see that for  unit norm tight frames, the column normalization must be $\sqrt{\frac{L}{K}} $. Spectral Tetris starts by greedily filling up the first column of $G$ by choosing the first $\lfloor \tfrac{L}{K}\rfloor$ rows of $G$ to be the first standard basis vector $e_1$. The next two frame vectors are chosen of the form $\alpha e_1 \pm \sqrt{1-\alpha^2}e_2$, where $\alpha$ is chosen to fulfil the norm constraint of the first column. Next, the second column is greedily filled, then the third, and so on. By construction, the resulting frame will only consist of $1$-sparse vectors and $2$-sparse vectors supported in neighboring entries. Consequently, the sets
	\begin{equation*}
	\mathcal{B}_i := \left\{ b_{\ell}: \innerproduct{b_{\ell}, e_{1+ 3 \left( i-1 \right)}} \ne 0   \right\},   1\le i \le \frac{K}{3} 
	\end{equation*}
	are disjoint.

	%	 the fi    $\left\{ e_i  \right\}^K_{i=1} $  of $ \mathbb{C}^K $, the Spectral tetris algorithm 
	
	%	... [TRANSITION]
	%	First, we will construct a suitable the matrix $B\in \mathbb{C}^{K \times N}$. For that purpose, let $\left\{ e_i  \right\}^K_{i=1} $ an arbitrary orthonormal basis of $ \mathbb{C}^K $.  Note that the requirement $B^* B= \Id_K$ is of course equivalent to $ \left(FB\right)^* \left(FB\right)  = \Id_K $. Hence, we need to choose the rows $ \left\{b_{\ell} \right\}^L_{\ell=1} $ of $FB$ such that the vectors $ \left\{ \sqrt{\frac{L}{K}} \ b_{\ell} \right\}^L_{\ell=1}  $ form a \textit{unit norm tight frame} and such that its eigenvalues are given by $\frac{L}{K}$. For this, we will use the \textit{Spectral Tetris algorithm} (see \cite[Section II.B]{spectraltetris2} and also \cite{spectraltetris1}), which can be applied due to the assumption $L\ge 2K$. Moreover, define for all natural numbers $ 1\le i \le \frac{K}{3} $ the set 
	%	\begin{equation*}
	%	\mathcal{B}_i := \left\{ b_{\ell}: \innerproduct{b_{\ell}, e_{1+ 3 \left( k-1 \right)}} \ne 0   \right\}.
	%	\end{equation*}
	%	Note that by construction the sets $ \mathcal{B}_i  $ are disjoint, since due to the Spectral Tetris algorithm the vectors $b_{\ell}$ have at most two non-zero entries, which are also next to each other. 
	
	Moreover, note that it follows from the Spectral Tetris algorithm that  %for the cardinality of each $\mathcal{B}_i$ it holds that
	\begin{align}\label{ineq:lastineq}
	\frac{L}{K} \le   \vert \mathcal{B}_i \vert \le  3 \left(\frac{L}{K} + 2\right) \le 6 \frac{L}{K},
	\end{align}
	where in the last inequality we used that by assumption $ 2K \le L $.\\
	
	\noindent Without loss of generality we assume that $\Vert h_0 \Vert = \Vert m_0 \Vert =1 $ as rescaling does not change the descent cone $  \dcone{h_0 m^*_0}$. For the proof we will condition on two events. The first event states that
	\begin{equation}\label{ineq:event1}
	\Vert \mathcal{A} \left( h_0 m^*_0 \right) \Vert < 2 \Vert h_0 m^*_0 \Vert_F,
	\end{equation} 
	which by the Bernstein inequality (see, e.g., \cite{vershynin2016high}) is fulfilled with probability at least $1-\exp \left( -cL/\mu^2 \right) $, where $c>0$ is some numerical constant.  To formulate the second event, we define for all natural numbers $ 1 \le i \le K/3$ and $c_{j}$ as in Section~\ref{section:backgroundBD}
	\begin{equation*}
	D_{i}:= \text{span} \left\{ c_{i}: i \in \mathcal{B}_i \right\} \subset \mathbb{C}^N
	\end{equation*}
	and denote by $ m_i^{\parallel}$ the orthogonal projection of $m_0$ onto $D_i$ and by $ m_i^{\perp}$ the projection onto $D_i^{\perp}$, the orthogonal complement of $D_i$. Note that $D_i \subset \mathbb{C}^N $ is a random subspace of dimension $\vert \mathcal{B}_i \vert$, distributed uniformly over the Grassmannian $Gr \left( \vert \mathcal{B}_i \vert, \mathbb{C}^N  \right) $ due to the rotation invariance of $ \mathcal{CN} \left(0,1\right) $. Hence, as $\Vert m_0 \Vert =1 $ Lemma \ref{lemma:projectionsubspace} together with inequality \eqref{ineq:lastineq} yields that for fixed $ i \in \left[ \lfloor K/3 \rfloor  \right] $ with probability at least $ 1-2\exp \left(-  \frac{ \hat{c} L }{4K} \right)$ one has
	\begin{equation}\label{ineq:event2}
	\frac{\vert \mathcal{B}_i \vert}{2N}  \le  \big\Vert m_i^{\parallel} \big\Vert^2 \le \frac{3 \vert \mathcal{B}_i \vert}{2N}  \le   6 \frac{L}{KN},
	\end{equation}
	where $\hat{c}>0$ is some absolute constant. As the matrix $C$ is Gaussian, the different subspaces $D_i$'s and hence also the random vectors $\left\{m_i^{\parallel} \right\}^K_{i=1}$ are independent, so with probability at least 
	\begin{equation*}
	1- \left(2\exp \left(-  \frac{ \hat{c}L }{4K} \right) \right)^{\lfloor K/3 \rfloor} \ge 1-  \exp \left( K \log 2 -  \frac{ \hat{c}L }{12} \right)
	\end{equation*}
	there exists at least one $ k \in \left[ \lfloor K/3  \rfloor \right]  $ such that (\ref{ineq:event2}) holds (with $k=i$). Also note that
	\begin{equation*}
	1-  \exp \left( K \log 2 -  \frac{ \hat{c}L }{12} \right)  \ge 1-\exp \left( - \frac{\hat{c}}{24}L \right),
	\end{equation*}
	which for $C_1=   \frac{ 24 \log 2}{\hat{c}} $ follows from assumption (\ref{equ:condition23456}). 
	
	To summarize, we have shown that the two events  $ \mathcal{E}_1 := \left\{ 	\Vert \mathcal{A} \left( h_0 m^*_0 \right) \Vert \le 2 \Vert h_0 m^*_0 \Vert_F  \right\} $ and 
	\begin{equation*}
	\mathcal{E}_2 := \left\{ \exists i\in \left[K\right]: \ \frac{L}{2KN}  \le  \big\Vert m_i^{\parallel} \big\Vert^2 \le  6 \frac{L}{KN}  \right\} 
	\end{equation*}
	happen with probability at least $ 1-  \mathcal{O} \left( \exp \left(-C_2 L/\mu^2\right) \right) $, where $C_2>0$ is an appropriately chosen constant.\\
	\\ %Without loss of generality we may assume that $ i =1 $. This is the second inequality we will condition onto. In the following we will write for brevity $D=D_1$, $ m^{\parallel} = m_1^{\parallel} $, and $ m^{\perp} = m_1^{\perp} $. \\
	
	Conditional on $ \mathcal{E}_1 $ and $ \mathcal{E}_2 $, we will construct $Z \in \mathbb{C}^{K \times N} $ (depending on the realization of the random matrix $C$) such that $Z \in \overline{   \dcone{h_0 m^*_0}   } \setminus \left\{ 0 \right\} $ and such that the inequality
	\begin{equation}\label{ineq:intern12345}
	\Vert \mathcal{A} \left( Z \right) \Vert < 12 \sqrt{ \frac{L}{KN} } \Vert Z \Vert_F
	\end{equation}
	is satisfied. Note that this will complete the proof. Indeed, by definition of the closure and the continuity of $\mathcal{A}$ this implies that there exists $\tilde{Z} \in  \dcone{X_0}  $ such that 
	\begin{equation*}
	\frac{ \big\Vert \mathcal{A} \left( \tilde{Z}\right) \big\Vert} { \Vert \tilde{Z} \Vert_F } \le 12 \sqrt{\frac{L}{KN}}, 
	\end{equation*}
	which by the definition of $\lambda_{\min} \left(\mathcal{A},  \dcone{h_0 m^*_0}  \right) $ implies that (\ref{ineq:minconvalueBD}) holds with constant $C_3=12$.\\
	
	\noindent
	To construct $Z$ satisfying \eqref{ineq:intern12345}, define 
	\begin{equation*}
	W:=- \frac{\innerproduct{ h_0,e_i }}{\Vert m_i^{\perp} \Vert \vert \innerproduct{e_i,h_0} \vert} e_i \left(m_i^{\perp}\right)^*,
	\end{equation*}
	where $ i\in \left[ \lfloor K/3 \rfloor  \right] $ is chosen to satisfy (\ref{ineq:event2}). It follows directly from the definition of $W$ that $\Vert W \Vert_F =1$. We observe that $\mathcal{A} \left(W\right) =0 $ as for each $i\in \left[ \lfloor K/3 \rfloor \right] $  and $ \ell \in \left[ L \right] $ we either have $  \innerproduct{e_i, b_{\ell}}  =0$, $\ell \notin \mathcal{B}_i$, or $ \innerproduct{m^{\perp}_{i}, c_{\ell} } =0 $, if $\ell \in \mathcal{B}_i$. Denote by $T=T_{X_0} $ the tangent space of the manifold of rank-one matrices at $X_0 = h_0 m^*_0$ as defined in (\ref{equ:deftangentspace}) and by $  \mathcal{P}_T $ and $  \mathcal{P}_{T^{\perp}} $ the corresponding orthogonal projections. It follows that
	\begin{equation}\label{ineq:internabcd}
	\begin{split}
	\Vert \mathcal{P}_{T^{\perp}} W \Vert_{\ast}&= \Big\Vert  \mathcal{P}_{T^{\perp}}  \left( e_i \left(\frac{ m_i^{\perp}}{\Vert m_i^{\perp} \Vert} \right)^* \right)  \Big\Vert_F \\
	&= \Big\Vert P_{h^{\perp}_0} e_i \Big\Vert \Big\Vert P_{m^{\perp}_0} \left(\frac{ m_i^{\perp}}{\Vert m_i^{\perp} \Vert} \right)  \Big\Vert  \\
	&= \sqrt{1- \vert \innerproduct{h_0,e_i} \vert^2} \sqrt{1 - \Big\vert \innerproduct{m_0, \frac{ m_i^{\perp}}{\Vert  m_i^{\perp} \Vert } }\Big\vert^2} \\
	&\le  \sqrt{1- \big\vert \innerproduct{m_0, \frac{ m_i^{\perp}}{\Vert  m_i^{\perp} \Vert }   }\big\vert^2} \\
	&= \sqrt{1- \Vert m_i^{\perp} \Vert^2}\\
	&=  \big\Vert m_i^{\parallel} \big\Vert   \\
	&\overset{\left( \ref{ineq:event2} \right)}{\le} \sqrt{ 6  \frac{L}{KN} }.
	\end{split}
	\end{equation}
	Thus we have shown that $W$, an element of the null space of $\mathcal{A} $, is close to the tangent space $T$. We will now show that for $ \beta = 3 \sqrt{ \frac{L}{KN} } $
	\begin{equation*}
	Z:= - \beta h_0 m^*_0 +W,
	\end{equation*}
	lies in the closure of the descent cone $ \overline{ \dcone{h_0 m^*_0}}$. 
	%	 For that, we set
	%
	%	where $ \beta>0$ will be chosen in the following such that $ Z \in \overline{  \dcone{h_0 m^*_0}  }$. 
	%	Recall that by 
	%	% to verify $ Z \in \overline{  \dcone{h_0 m^*_0}  }  $
	%	 it suffices to prove that $ \Vert \mathcal{P}_{T^{\perp}} Z \Vert_{\ast} \le - \text{Re} \left( \innerproduct{ h_0 m^*_0, Z }_F  \right) $.
	For that, we observe that
	\begin{align*}
	-\text{Re} \left( \innerproduct{Z, h_0 m^*_0} \right) &= \beta -  \text{Re} \left( \innerproduct{W, h_0 m^*_0}_F  \right)\\
	&= \beta +  \text{Re} \left( \frac{\innerproduct{h_0,e_i }}{\Vert m_i^{\perp} \Vert \vert \innerproduct{h_0,e_i} \vert}  \innerproduct{ e_i \left( m_i^{\perp} \right)^* , h_0 m^*_0}_F \right)\\
	&= \beta +  \text{Re} \left( \frac{\innerproduct{ h_0,e_i }}{\Vert m_i^{\perp} \Vert \vert \innerproduct{h_0,e_i} \vert}  \innerproduct{e_i, h_0} \innerproduct{m_0, m_i^{\perp}}  \right)\\
	&= \beta + \vert \innerproduct{h_0,e_i} \vert  \Vert m_i^{\perp} \Vert  \\
	%&=   \lambda + \frac{\Vert m^{\perp} \Vert}{\sqrt{K}}\\
	&\ge \beta\\
	&\overset{(\ref{ineq:internabcd})}{\ge}\Vert \mathcal{P}_{T^{\perp}} W\Vert_{\ast}\\
	&=\Vert \mathcal{P}_{T^{\perp}} Z \Vert_{\ast}
	\end{align*}
	and hence Lemma \ref{lemma:descentconechar} entails that $ Z \in \overline{  \dcone{h_0 m^*_0}  } $.  Moreover, note that by the triangle inequality and by the assumption $ L \le \frac{1}{36} KN $ it holds that
	\begin{equation}\label{ineq:Zupperbound}
	\begin{split}
	\Vert Z \Vert_F  \ge& \Vert W \Vert_F - \beta\\
	=& 1 -3\sqrt{\frac{L}{KN}}\\
	%= &\sqrt{1- \Vert m^{\parallel} \Vert^2  } -2\sqrt{\frac{L}{KN}}  \\
	%=&     \sqrt{1- \frac{L}{2KN} } -2\sqrt{\frac{L}{KN}}       \\
	\ge&\frac{1}{2}.
	\end{split}
	\end{equation}
	These observations together with $ \mathcal{A} \left( W \right)=0  $ yield that
	\begin{equation*}
	\Vert \mathcal{A} \left(Z\right) \Vert = \Vert \mathcal{A} \left( \beta h_0 m^*_0  \right) \Vert \overset{(\ref{ineq:event1})}{<} 2\beta = 6 \sqrt{ \frac{L}{KN} } \overset{(\ref{ineq:Zupperbound})}{\le} 12 \sqrt{ \frac{L}{KN} } \Vert Z \Vert_F.
	\end{equation*}
	This shows (\ref{ineq:intern12345}), as desired.
	
\end{proof}

\subsection{Upper bound for matrix completion}\label{section:proof1MC}
In this section, we prove Proposition \ref{theorem:mainprop}. For that we introduce sets $ \mathcal{N}_a $, $ a \in \left[n_1\right]$,  via
\begin{equation*}
\mathcal{N}_{a}:= \left\{ b \in \left[ n_2 \right]: a=a_i \text{ and }  b=b_i \text{ for some } i \in \left[m\right]   \right\}.
\end{equation*}
That is, $ \mathcal{N}_a$ contains all the indices of the $a$th row of the matrix $X_0$, which are observed by the measurements. Furthermore, define by $P_{\mathcal{N}_a} \in \mathbb{R}^{n_2 \times n_2} $ the projection onto the coordinates, which are contained in $\mathcal{N}_a$, i.e. $ P_{\mathcal{N}_a} = \sum_{b \in \mathcal{N}_a} e_b e^*_b $. By $ P_{\mathcal{N}^{\perp}_a} = \sum_{b \in \left[n_2\right] \setminus \mathcal{N}_a} e_b e^*_b $ we denote the coordinate projection onto $ \left[n_2\right] \setminus \mathcal{N}_a $. \\

We need the following technical lemma.
%The following auxiliary lemma states that with high probability there is a row $a \in \left[ n_1 \right]$, such that not too much $\Vert \cdot \Vert_F $-mass of $V$ is contained in the indices belonging to $\mathcal{N}_a \subset \left[n_1\right] $.
\begin{lemma}\label{lemma:auxiliary1}
	Let $ V\in \mathbb{R}^{n_2 \times r} $ be an isometry, i.e. $V^*V= \Id $. Assume that 
	\begin{equation}\label{ineq:assumption23}
	m \ge C_1 rn_1 \mu^2 \left(V\right) \log (2r).
	\end{equation}
	Then with probability at least $1 - \mathcal{O} \left(\exp \left( -\frac{m}{C_2 r \mu^2 \left(V\right) } \right) \right) $ there exists $a\in \left[n_1\right] $ such that
	\begin{equation*}
	\Vert  P_{\mathcal{N}_a} V  \Vert \le \sqrt{ \frac{2m}{n_1 n_2}}.
	\end{equation*}
	$C_1$ and $C_2$ are universal constants.
\end{lemma}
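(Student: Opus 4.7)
The overall strategy is to combine the non-commutative Bernstein inequality with a deterministic pigeon-hole for the row counts, and then to exploit conditional independence of the rows so that the joint failure probability is genuinely dimension-free. Since $\Vert P_{\mathcal{N}_a} V \Vert^2 = \Vert V^* P_{\mathcal{N}_a} V \Vert$ and $P_{\mathcal{N}_a} \preceq \sum_{i : a_i = a} e_{b_i} e_{b_i}^*$ in the PSD order, it is enough to find some $a$ with $\Vert R_a \Vert \le 2m/(n_1 n_2)$, where
\begin{equation*}
R_a := \sum_{i : a_i = a} V^* e_{b_i} e_{b_i}^* V \ \in \ \mathbb{R}^{r \times r}.
\end{equation*}

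First I would condition on the first coordinates $a_1, \dots, a_m$ and set $k_a := | \{ i : a_i = a \} |$. Since $\sum_a k_a = m$, at most $3 n_1/4$ rows can satisfy $k_a > 4m/(3 n_1)$, so the deterministic set $S := \{ a : k_a \le 4 m/(3 n_1) \}$ has $|S| \ge n_1/4$. The cutoff is chosen so that the conditional mean $\mathbb{E} [ R_a \mid k_a ] = (k_a/n_2) I_r$ still leaves a multiplicative slack of order $m/(n_1 n_2)$ before hitting the target. For each $a \in S$, the summands $\xi_i := V^* e_{b_i} e_{b_i}^* V$ are conditionally i.i.d., with uniform bound $\Vert \xi_i \Vert = \Vert V^* e_{b_i} \Vert^2 \le r \mu^2(V)/n_2$ from the coherence hypothesis and matrix variance $\sigma^2 \le k_a r \mu^2(V)/n_2^2$. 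Applying matrix Bernstein to $R_a - (k_a / n_2) I_r$ with deviation threshold $t = 2m/(3 n_1 n_2)$ yields the per-row estimate
\begin{equation*}
\mathbb{P} \bigl( \Vert R_a \Vert > 2 m /(n_1 n_2) \,\big|\, k_a \bigr) \ \le \ 2r \exp \bigl( -c \, m /(n_1 r \mu^2(V)) \bigr).
\end{equation*}

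Given the first coordinates, the $b_i$'s are jointly i.i.d.\ uniform on $[n_2]$, so the random matrices $\{ R_a \}_{a \in S}$ are conditionally independent. Hence the probability that every $a \in S$ fails simultaneously is at most $(2r)^{|S|} \exp ( - c |S| m / (n_1 r \mu^2(V)) )$, which with $|S| \ge n_1/4$ becomes $\exp \bigl( (n_1/4) \log(2r) - c \, m /(4 r \mu^2(V)) \bigr)$. The assumption $m \ge C_1 r n_1 \mu^2(V) \log(2r)$, with $C_1$ chosen large enough, absorbs the combinatorial prefactor into half of the main exponent and produces the claimed tail $\mathcal{O} ( \exp ( - m /(C_2 r \mu^2(V)) ) )$, together with the sought row $a \in S$.

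The delicate point -- and the main obstacle -- is the calibration between the two halves of the argument. A single fixed row produces only the weaker tail $\exp(-c m /(n_1 r \mu^2))$, while a naive union bound over all $n_1$ rows would introduce a $\log n_1$ factor that is not present in the hypothesis. The resolution is to use a purely combinatorial pigeon-hole (rather than concentration) to identify $\Omega(n_1)$ candidate rows whose conditional means already beat the target, and then to exploit the conditional independence of rows to multiply $\Omega(n_1)$ Bernstein tails, promoting the exponent from $m/(n_1 r \mu^2)$ to $m/(r \mu^2)$. The $\log(2r)$ factor in the sample complexity is precisely the price of absorbing the $(2r)^{|S|}$ overhead arising from the matrix dimension in the Bernstein inequality.
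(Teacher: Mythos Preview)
Your proposal is correct and follows essentially the same approach as the paper: condition on the row indices $a_1,\dots,a_m$, use pigeonhole to find at least $n_1/4$ rows with at most $4m/(3n_1)$ samples, apply matrix Bernstein conditionally on each such row, and then multiply the failure probabilities using the conditional independence of the $R_a$'s across rows. The constants and deviation thresholds differ slightly (the paper uses $t=m/(2n_1n_2)$ rather than $2m/(3n_1n_2)$), but the structure and all key ideas are identical.
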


\begin{proof}
	For each $ a\in \left[n_1\right] $ we set $\mathcal{I}_a := \left\{  i\in \left[ m \right] : a_i =a   \right\}$ and define the event
	\begin{equation*}
	\mathcal{E}_a := \left\{ \Vert  P_{\mathcal{N}_a} V  \Vert^2  \le \frac{2m}{n_1 n_2} \right\}.
	\end{equation*}
	We will first derive a lower bound for $\mathbb{P} \left( \mathcal{E}_a  \big\vert \mathcal{I}_a \right)  $. For that we note that $\Vert  P_{\mathcal{N}_a} V  \Vert^2 = \Vert V^*  P_{\mathcal{N}_a} V  \Vert $. Let $v_1, v_2, \ldots, v_{n_2} $ denote the rows of the matrix $V$. By definition of $\mathcal{N}_a$ and $ \mathcal{I}_a $ it follows that for $X_i := v_{b_i} v^*_{b_i}$
	\begin{equation}\label{ineq:intern99}
	V^*  P_{\mathcal{N}_a} V = \sum_{b \in \mathcal{N}_a} v_b v^*_b \preceq \sum_{i \in \mathcal{I}_a} v_{b_i} v^*_{b_i} = \sum_{i \in \mathcal{I}_a} X_i .
	\end{equation}
	Here we write $A \preceq B $ for two symmetric matrices $A$ and $B$, if and only if $B-A$ is positive semidefinite. 
	%\begin{equation*}
	%X_{a,i}:= \begin{cases}
	%v_{i} v^*_{i}  \quad & \text{if } i \in \mathcal{N}_a   \\
	%0 \quad & \text{else }.
	%\end{cases}
	%\end{equation*}
	By (\ref{ineq:intern99}) it is sufficient to bound the probability of the event
	\begin{equation}\label{ineq:internxx23}
	\left\{  \Big\Vert \sum_{i\in \mathcal{I}_a} X_{i} \Big\Vert \le \frac{2m}{n_1 n_2} \right\}  \subset \mathcal{E}_a
	\end{equation}
	conditionally on $\mathcal{I}_a $. To bound $\Vert \sum_{i \in \mathcal{I}_a} X_{i} \Vert $ we will use the matrix Bernstein inequality (see, e.g., \cite[Theorem 6.1.1]{tropp2015introduction}) conditionally on $ \mathcal{I}_a $, which requires as ingredients
	\begin{equation*}
	\mathbb{E} \left[ \sum_{i \in \mathcal{I}_a} X_{i} \ \big\vert \mathcal{I}_a \right] = \frac{\vert \mathcal{I}_a \vert}{n_2} \sum_{b=1}^{n_2} v_b v^*_b=  \frac{\vert \mathcal{I}_a \vert}{n_2} V^*V =  \frac{\vert \mathcal{I}_a \vert}{n_2} \Id,
	\end{equation*}
	an upper bound for $\sigma^2 \left(\mathcal{I}_a\right) := \Big\Vert \mathbb{E} \left[ \sum_{i\in \mathcal{I}_a}  \left(   X_{i} - \mathbb{E} \left[ X_{i} \right]   \right)^2   \big\vert \mathcal{I}_a \right] \Big\Vert$ and a constant $K>0$ such that $\Vert X_i - \mathbb{E} \left[X_i\right] \Vert \le K $ almost surely. To bound $\sigma^2 \left(\mathcal{I}_a\right)$ we note that
	\begin{align*}
	\mathbb{E} \left[ \sum_{i \in \mathcal{I}_a}  \left(   X_{i} - \mathbb{E} \left[ X_{i} \right]   \right)^2 \big\vert \mathcal{I}_a  \right] &\preceq   \mathbb{E} \left[ \sum_{i \in \mathcal{I}_a} X^2_{i}  \big\vert \mathcal{I}_a  \right]\\
	&\preceq \left(  \underset{b \in \left[n_2\right]}{\max} \Vert v_b \Vert^2 \right)  \mathbb{E}  \left[ \sum_{i \in \mathcal{I}_a}  X_{i}  \big\vert \mathcal{I}_a  \right]\\
	&=  \frac{ \vert \mathcal{I}_a \vert  \left(  \underset{b \in \left[n_2\right]}{\max} \Vert v_b \Vert^2 \right)}{ n_2}  \Id \\
	&= \frac{ \vert \mathcal{I}_a \vert \mu^2 \left(V\right) r}{n^2_2} \Id,
	\end{align*}
	where the fourth line is due to the definition of $\mu^2 \left(V\right) $. This implies that 
	\begin{equation*}
	\sigma^2 \left( \mathcal{I}_a \right) \le  \frac{ \vert \mathcal{I}_a \vert \mu^2 \left(V\right) r}{n^2_2}.
	\end{equation*}
	%\begin{align*}
	%\sigma^2 \left(\mathcal{I}_a\right) := \Big\Vert \mathbb{E} \left[ \sum_{i\in \mathcal{I}_a}  \left(   X_{i} - \mathbb{E} \left[ X_{i} \right]   \right)^2   \big\vert \mathcal{I}_a \right] \Big\Vert &\le \frac{ \vert \mathcal{I}_a \vert \mu_2^2 \left(V\right) r}{n^2_2}.
	%&\le \frac{ 2m \mu_2^2 \left(V\right) r}{n_1 n^2_2},
	%\end{align*}
	To find an appropriate $K>0$ note that almost surely
	\begin{align*}
	\big\Vert X_{i} - \mathbb{E} X_{i} \big\Vert &= \big\Vert X_{i} - \frac{1}{n_1n_2} \Id \big\Vert\\ 
	&\le \underset{i \in \left[n_2\right]}{\max} \big\Vert v_i v^*_i - \frac{1}{n_1n_2} \Id \big\Vert\\
	&\le  \frac{1}{n_1 n_2} + \underset{i \in \left[n_2\right]}{\max} \big\Vert v_i \big\Vert^2\\
	&\le \left( \frac{1}{n_1r}    +1 \right) \underset{i \in \left[n_2\right]}{\max}\big\Vert v_i \big\Vert^2\\
	&\le \frac{2r}{n_2} \mu^2 \left(V\right)=:K,
	\end{align*}
	where in the fourth line we used that $  \underset{i \in \left[n_2\right]}{\max} \Vert v_i \Vert^2 \ge \frac{r}{n_2}  =  \frac{1}{n_2} \sum_{i \in \left[n_2\right]} \Vert v_i \Vert^2 $. %which follows from $ \frac{1}{n_2} \sum_{i \in \left[n_2\right]} \Vert v_i \Vert^2 = \frac{r}{n_2} $. 
	Finally to apply Bernstein inequality we need that the $X_i$'s are independent conditionally on $ \mathcal{I}_a$, which follows from the fact that the $a_i$'s and $b_i$'s are drawn independently. With these ingredients the matrix Bernstein inequality yields that
	%\begin{equation*}
	%1- 2r \exp \left( -c \min \left( \frac{t^2}{\sigma^2}; \frac{t}{K} \right)   \right) \ge 1-  2r\exp \left( -c \min  \left\{  \frac{ n^2_2 t^2}{2 \vert \mathcal{I}_a \vert r \mu_2^2 \left( V \right) } ; \frac{n_2 t}{2r \mu_2^2 \left(V\right)}    \right\}   \right) 
	%\end{equation*}
	%it holds that
	\begin{align*}
	\mathbb{P} \left(  \Big\Vert  \sum_{i\in \mathcal{I}_a}  X_{i}  - \frac{\vert \mathcal{I}_a \vert}{ n_2} \Id \Big\Vert \le t  \ \Big\vert  \mathcal{I}_a  \right) &\ge 	1- 2r \exp \left( -c \min \left\{\frac{t^2}{\sigma^2 \left( \mathcal{I}_a \right) }; \frac{t}{K} \right\}   \right)\\
	&\ge 1-  2r\exp \left( -c \min  \left\{  \frac{ n^2_2 t^2}{ \vert \mathcal{I}_a \vert r \mu^2 \left( V \right) } ; \frac{n_2 t}{2r \mu^2 \left(V\right)}    \right\}   \right).
	\end{align*}
	Setting $t= \frac{m}{2n_1 n_2} $ this implies that for fixed $ a\in \left[n_1\right] $ it holds that
	\begin{equation}\label{ineq:intern47}
	\begin{split}
	\mathbb{P} \left(  \Big\Vert  \sum_{i\in \mathcal{I}_a}  X_{i}  \Big\Vert \le \frac{m}{2n_1 n_2}  + \frac{\vert \mathcal{I}_a \vert}{n_2}   \Big\vert \ \mathcal{I}_a \right) &\ge \mathbb{P} \left( \Big\Vert  \sum_{i\in \mathcal{I}_a}  X_{i}  - \frac{\vert \mathcal{I}_a \vert}{ n_2} \Id \Big\Vert \le \frac{m}{2n_1 n_2}  \ \Big\vert \ \mathcal{I}_a \right)   \\
	&\ge 1 - 2r\exp \left( - \frac{cm}{4 r n_1 \mu^2 \left( V \right) }  \min  \left\{  \frac{ m}{  \vert \mathcal{I}_a \vert n_1  } ; 1   \right\}  \right).
	\end{split}
	\end{equation}
	To complete the proof we restrict our attention to $ A:= \left\{  a\in \left[n_1\right]: \ \vert \mathcal{I}_a \vert  \le \frac{4m}{3n_1}    \right\} $ as it follows from (\ref{ineq:internxx23}) that
	\begin{equation}\label{ineq:intern48}
	\left\{  \vert \mathcal{I}_a \vert \le \frac{4m}{3n_1}   \right\} \cap \left\{ \Big\Vert  \sum_{i\in \mathcal{I}_a}  X_{i}  \Big\Vert \le \frac{m}{2n_1 n_2}  + \frac{\vert \mathcal{I}_a \vert}{n_2}    \right\}    \subset \mathcal{E}_a,
	\end{equation}
	and, consequently, for $a \in A $ we obtain that
	\begin{equation}\label{ineq:interns8}
	\begin{split}
	\mathbb{P} \left( \mathcal{E}_a   \big\vert \mathcal{I}_a \right) &\overset{(\ref{ineq:intern48}), (\ref{ineq:intern47})}{\ge} 1 - 2r\exp \left( - \frac{cm}{4 r n_1 \mu^2 \left( V \right) }  \min  \left\{  \frac{ m}{  \vert \mathcal{I}_a \vert n_1  } ; 1   \right\}  \right)\\
	& \overset{a \in A}{\ge}  1 - 2r\exp \left( - \frac{3cm}{16 r n_1 \mu^2 \left( V \right) }   \right).
	\end{split}
	\end{equation}
	%Moreover, conditionally on $  \left\{ \mathcal{I}_{a} \right\}^{n_1}_{a=1}  $ the events $\mathcal{E}_a $ are independent. (Each event $\mathcal{E}_a$ depends only on $\left\{ b_i \right\}_{i \in \mathcal{I}_a} $, which are sets of independent random variables.) This implies that
	As the $ \mathcal{E}_a $'s only depend on $ \left\{ b_i \right\}_{i \in \mathcal{I}_a } $ and are hence independent conditionally on $ \mathcal{I}_{\alpha} $, this implies that
	\begin{align*}
	\mathbb{P} \left( \underset{a\in A  }{\bigcap} \mathcal{E}^c_a   \text{ } \Big\vert \left\{  \mathcal{I}_a  \right\}^{n_1}_{a=1}    \right) &=  \prod_{a \in A}   	\mathbb{P} \left(  \mathcal{E}^c_a   \ \Big\vert \left\{  \mathcal{I}_a  \right\}^{n_1}_{a=1}   \right)   \\
	&= \prod_{a \in A}   	\mathbb{P} \left(  \mathcal{E}^c_a   \ \Big\vert   \mathcal{I}_a    \right)\\
	& \overset{(\ref{ineq:interns8})}{\le}  \prod_{a \in A}  \left(  2r\exp \left( - \frac{3cm}{16 r n_1 \mu^2 \left( V \right) } \right) \right)\\
	& =  \left(  2r\exp \left( - \frac{3cm}{16 r n_1 \mu^2 \left( V \right) } \right) \right)^{\vert A \vert}\\
	&= \left( \exp \left( \log \left(2r\right) - \frac{3cm}{16rn_1 \mu^2 \left(V\right)}   \right)    \right)^{\vert A \vert}\\
	&\le \left(  \exp \left(   - \frac{m}{C_1' r n_1 \mu^2 \left(V\right)}   \right)    \right)^{\vert A \vert},
	\end{align*}
	where in the last line we have used assumption (\ref{ineq:assumption23}) with $C_1$ large enough. Furthermore, note that
	\begin{equation*}
	m = \sum_{a=1}^{n_1} \vert \mathcal{I}_a \vert \ge \sum_{a \in \left[n_1\right] \setminus A} \vert \mathcal{I}_a \vert \ge   \frac{4m}{3n_1} \big\vert \left[n_1\right] \setminus A \big\vert = \frac{4m}{3n_1} \left( n_1 - \vert A \vert  \right)
	\end{equation*}
	implies that $ \vert A \vert \ge \frac{n_1}{4} $ almost surely. Hence, it follows that
	\begin{align*}
	\mathbb{P} \left( \underset{a\in A  }{\bigcap} \mathcal{E}^c_a   \text{ } \Big\vert \left\{  \mathcal{I}_a  \right\}^{n_1}_{a=1}    \right)&\le  \exp\left(  -\frac{m}{C_2 r \mu^2 \left(V\right)}  \right).
	\end{align*}
	This shows that conditional on $\left\{  \mathcal{I}_a \right\}^{n_1}_{a=1} $ we have that almost surely
	\begin{equation*}
	\mathbb{P} \left( \underset{a\in \left[n_1\right]  }{\bigcup} \mathcal{E}_a \\  \\ \\ \Big\vert \left\{ \mathcal{I}_a  \right\}^{n_1}_{a=1}   \right) \ge 	\mathbb{P} \left( \underset{a\in A  }{\bigcup} \mathcal{E}_a    \\  \\ \\ \Big\vert \left\{  \mathcal{I}_a  \right\}^{n_1}_{a=1}     \right)     \ge 1-  \exp\left(  -\frac{m}{C_2 r \mu^2 \left(V\right)}  \right).
	\end{equation*}
	Taking expectations yields the claim.

	%If the constant $C_1>0$ in assumption (\ref{ineq:assumption23}) and the constant $C_2>0$ are  chosen large enough this implies that with probablity at least $1 - \mathcal{O} \left(\exp \left( -\frac{m}{C_2 r \mu_2^2 \left(V\right) } \right) \right) $ there exists $a \in \left[n_1\right] $ such that the inequality (\ref{ineq:internxx23}) holds. Due to our remark at the beginning of the proof this yields the desired result.
	
\end{proof}

Now we are prepared to give a proof of Proposition \ref{theorem:mainprop}.
\begin{proof}[Proof of Proposition \ref{theorem:mainprop}]
	For the proof we will condition on two events $\mathcal{E}_1$ and $\mathcal{E}_2 $, which we will define in the following. The event $\mathcal{E}_1 $ is defined by
	\begin{equation}\label{ineq:eventxxx1}
	\mathcal{E}_1 := \left\{ \Vert \mathcal{A} \left(UV^*\right) \Vert^2 \le 2\Vert UV^* \Vert^2_F = 2r \right\}.
	\end{equation}
	Observe that
	\begin{align*}
	\Vert \mathcal{A} \left(UV^*\right) \Vert^2 = \frac{n_1 n_2}{m} \sum_{i=1}^{m} \vert \left(UV^*\right)_{a_i,b_i} \vert^2 = \sum_{i=1}^{m} X_i,
	\end{align*}
	where we have set $ X_i := \frac{n_1 n_2}{m}  \vert \left(UV^*\right)_{a_i,b_i} \vert^2  $. Note that one has almost surely that
	\begin{align*}
	 X_i &= \frac{n_1 n_2}{m} \left(\sum_{k=1}^{r}  U_{a_i,k} V_{b_i,k} \right)^2\\
	 &\le \frac{n_1 n_2}{m} \left(  \sum_{k=1}^{r} \vert U_{a_i,k}  \vert^2  \right) \left( \sum_{k=1}^{r} \vert V_{b_i,k} \vert^2  \right)\\
	 &\le  \frac{\mu^2 \left( U \right) \mu^2 \left( V \right) r^2 }{m}
	\end{align*}
	where we have applied Cauchy-Schwarz and the definition of $\mu \left(U\right) $ and $ \mu \left(V\right) $. Hence, one has 
	\begin{align*}
	\mathbb{E} X_i &= \frac{\Vert UV^* \Vert_F^2}{m}= \frac{r}{m},\\
	\mathbb{E} X^2_i  &\le  \frac{\mu^2 \left( U \right) \mu^2 \left( V \right) r^2 }{m} \mathbb{E}X_i     = \frac{r^3 \mu^2 \left( U \right) \mu^2 \left( V \right) }{m^2}
	\end{align*}
	where we used $ \Vert UV^* \Vert^2_F  =r$ and the previous estimate. Hence, by the Bernstein inequality  (see, e.g., \cite[Theorem 2.8.4]{vershynin2016high}) we obtain that
	\begin{equation*}
	\mathbb{P} \left( \Big\vert \sum_{i=1}^{m} X_i - \Vert UV^* \Vert^2_F  \Big\vert \ge t \right) \le 2   \exp \left( -c \min \left\{  \frac{t^2 m^2}{r^3 \mu^2 \left( U \right) \mu^2 \left( V \right) }   ; \frac{tm}{r^2  \mu^2 \left( U \right) \mu^2 \left( V \right) }   \right\}  \right).
	\end{equation*}
	By setting $ t= \Vert UV^* \Vert^2_F =r $ we observe that $ \mathcal{E}_1 $ holds with probability at least $1- 2\exp \left( -\frac{cm}{r \mu^2 \left( U \right) \mu^2 \left( V \right)  }  \right) $.
	
	The second event $\mathcal{E}_2 $ is defined by
		\begin{equation*}
		\mathcal{E}_2 := \left\{ \exists a \in \left[n_1\right] \text{ such that }   \Vert P_{\mathcal{N}_a}  V \Vert \le \sqrt{ \frac{2m}{n_1 n_2}} \right\}.
		\end{equation*}
For $C_1$ in assumption (\ref{ineq:assumptionbla2})  chosen large enough Lemma \ref{lemma:auxiliary1} then entails that $\mathbb{P} \left( \mathcal{E}_2 \right) \ge 1 - \mathcal{O} \left(\exp \left( -\frac{cm}{  r \mu^2 \left(V\right) } \right) \right) $. 
	 %there is $a \in \left[n_1\right] $ such that 
 Consequently, we can find $ a \in \left[n_1\right]$ (depending on the random sampling pattern) such that the condition defining $ \mathcal{E}_2 $ is satisfied.\\
	
	\noindent Note that in order to prove Proposition \ref{theorem:mainprop} it is enough to find $ Z \in \overline{  \dcone{X_0}  } \setminus \left\{ 0 \right\} $ such that
	\begin{equation}\label{ineq:internxxxx}
	\Vert \mathcal{A} \left(Z\right) \Vert < 8  \sqrt{\frac{m}{r n_1 n_2}} \Vert Z \Vert_F,
	\end{equation}
	because by definition of the closure and the continuity of $ \mathcal{A} $ this implies that there is a matrix $\tilde{Z} \in \dcone{X_0} \setminus \left\{ 0 \right\} $ such that
	\begin{equation*}
	\frac{\Big\Vert \mathcal{A} \left( \tilde{Z}  \right) \Big\Vert}{\Vert \tilde{Z}  \Vert_F} \le 8 \sqrt{\frac{m}{rn_1 n_2}},
	\end{equation*}
	which implies (\ref{ineq:minconvalueMC}) with constant $ C_3 =8 $. In the following we will construct such a matrix $Z$. Let $x \in \mathbb{R}^r $ be a vector such that $ \Vert x \Vert=1 $. Then for $  a\in \left[n_1\right]$ as above we define the vector $ w_a \in \mathbb{R}^{n_2} $ by 
	\begin{equation*}
	w_a:=  P_{\mathcal{N}^{\perp}_a} Vx
	\end{equation*}
	and set
	\begin{equation*}
	W:= - \frac{\innerproduct{ e_a w_a^* , UV^*}_F}{\vert \innerproduct{ e_a w_a^* , UV^*}_F \vert}   e_a w_a^*= - \frac{\innerproduct{ e_a w_a^* , UV^*}_F}{\vert \innerproduct{ e_a w_a^* , UV^*}_F \vert}  e_a x^* V^* P_{\mathcal{N}^{\perp}_a}.
	\end{equation*}
	It follows directly from the definition of $\mathcal{N}_a$ that $ \mathcal{A} \left(W\right) =0  $. In the following let $T$ be the tangent space of the manifold of rank-$r$ matrices at $X_0$ as defined in (\ref{equ:deftangentspace}). Furthermore, denote by $P_{U}=UU^*$ the orthogonal projection onto the column space of $U$ and, analogously, by $P_{V}=VV^*$ the orthogonal projection onto the column space of $V$. Then we obtain that  
	\begin{align*}
	\Vert \mathcal{P}_{T^{\perp}} W \Vert_{\ast} &= 	\Vert \mathcal{P}_{T^{\perp}} \left( e_a w_a^* \right) \Vert_{\ast}\\
	&= 	\Vert P_{U^{\perp}} e_a w_a^* P_{V^{\perp}} \Vert_{\ast}\\
	&=\Vert P_{U^{\perp}} e_a w_a^* P_{V^{\perp}} \Vert_{F}\\
	&= \Vert P_{U^{\perp}} e_a  \Vert \Vert P_{V^{\perp}}  w_a \Vert\\
	%&= \sqrt{1- \vert \innerproduct{u_0, e_a} \vert^2} \sqrt{1- \vert \innerproduct{v_0,w_a} \vert^2}\\
	&\le \Vert  P_{V^{\perp}}  w_a \Vert,
	\end{align*}
	where in the second equality we have used that $ \mathcal{P}_{T^{\perp}} M= P_{U^{\perp}} M P_{V^{\perp}} $ for all $M\in \mathbb{R}^{n_1 \times n_2} $ and in the last line we used that $ \Vert P_{U^{\perp}} e_i \Vert \le 1 $. Plugging in $w_a =  P_{\mathcal{N}^{\perp}_a} Vx$ it follows that
	\begin{align*}
	\Vert \mathcal{P}_{T^{\perp}} W \Vert_{\ast}&\le  \Vert  P_{V^{\perp}}  P_{\mathcal{N}^{\perp}_a} Vx \Vert\\
	&=  \Vert  P_{V^{\perp}}  P_{\mathcal{N}_a} Vx \Vert,
	\end{align*}
	where the last line is due to $ P_{V^{\perp}}  Vx  =0 $. The fact that $\Vert P_{V^{\perp}}  \Vert \le 1$ then yields that
	\begin{equation}\label{ineq:internsad3}
	\begin{split}
	\Vert \mathcal{P}_{T^{\perp}} W \Vert_{\ast} &\le \Vert P_{\mathcal{N}_a} Vx \Vert\\
	&\le \Vert P_{\mathcal{N}_a} V \Vert \Vert x \Vert\\
	&\le \sqrt{\frac{2m}{n_1 n_2}}
	%&\le   \sqrt{1- \vert \innerproduct{v_0,w_a} \vert^2} \\
	%&= \sqrt{\underset{b \in \mathcal{N}_a}{\sum} \vert \left(v_0\right)_b \vert^2 }\\
	%&\le  \sqrt{\frac{2m}{n_1 n_2}},
	\end{split}
	\end{equation}
	where for the last line we used that $ a \in \left[n_1\right] $ was chosen such that the condition in $\mathcal{E}_2 $ holds. This shows that $W$ is relatively close to $T$. Based on $W$ we now aim to find $Z\in \overline{   \dcone{X}  } $ of the form
	\begin{equation*}
	Z:= W-\beta UV^*,
	\end{equation*}
	where $ \beta >0 $ will be chosen in the following such that $Z \in \overline{   \dcone{X_0}   } $, which by Lemma \ref{lemma:descentconechar} is equivalent to $ -  \innerproduct{ UV^*, Z}_F \ge  \Vert \mathcal{P}_{T^{\perp}} Z \Vert_{\ast} $ . First, we note that
	\begin{equation}\label{ineq:internsad1}
	\Vert \mathcal{P}_{T^{\perp}} Z \Vert_{\ast} = \Vert  \mathcal{P}_{T^{\perp}} W  \Vert_{\ast} \le \sqrt{\frac{2m}{n_1 n_2}}
	\end{equation}
	due to $ \mathcal{P}_{T^{\perp}}  \left(UV^*\right) =0 $ and the inequality chain (\ref{ineq:internsad3}). Furthermore, we have that
	\begin{equation}\label{ineq:internsad2}
	\begin{split}
	-\innerproduct{UV^*,Z}_F &= r \beta - \innerproduct{UV^*,W}_F \\
	&= r \beta  + \innerproduct{UV^*, \frac{\innerproduct{ e_a w_a^* , UV^*}_F}{\vert \innerproduct{ e_a w_a^* , UV^*}_F \vert} e_a w_a^* }_F \\
	&= r \beta + \vert \innerproduct{ e_a w_a^* , UV^*}_F \vert   \\
	&\ge r \beta.
	\end{split}
	\end{equation}
	Hence, setting $\beta = 2 \sqrt{\frac{m}{r^2 n_1 n_2}}  $ and combining (\ref{ineq:internsad1}) and (\ref{ineq:internsad2}) it follows that $Z \in  \overline{   \dcone{X_0}   }$. This $Z$ also satisfies (\ref{ineq:internxxxx}). To see that we observe that
	\begin{align*}
	\Vert Z \Vert_F &\ge \Vert e_a w_a^* \Vert_F - \beta \Vert UV^* \Vert_F\\
	& = \Vert w_a \Vert - \beta \sqrt{r} \\
	& =  \Vert P_{\mathcal{N}^{\perp}_a} Vx \Vert -\beta \sqrt{r} \\
	&= \sqrt{ \Vert Vx \Vert^2 -  \Vert  P_{\mathcal{N}_a} Vx \Vert^2 } -\beta \sqrt{r} \\
	&\ge \sqrt{1 - \frac{2m}{n_1 n_2}} - 2 \sqrt{ \frac{m}{r n_1  n_2}} \\
	&>  \frac{1}{2},
	\end{align*}
	where in the last line we used the assumption that $ m \le \frac{n_1 n_2}{32} $. Furthermore, from $ \mathcal{A} \left(W\right)=0 $ it follows that
	\begin{equation*}
	\Vert \mathcal{A} \left(Z\right) \Vert = \beta \Vert \mathcal{A} \left( UV^*  \right) \Vert \overset{(\ref{ineq:eventxxx1})}{\le}  \beta \sqrt{2r} = 2 \sqrt{\frac{2m}{r n_1 n_2}} < 8 \sqrt{\frac{m}{r n_1 n_2}} \Vert Z \Vert_F .
	\end{equation*}
	Combining the last two inequality chains implies (\ref{ineq:internxxxx}), which completes the proof.
\end{proof}

\subsection{Proof of Theorem \ref{thm:mainresult} and Theorem \ref{thm:mainresultMC}}
As already mentioned Theorem \ref{thm:mainresult} can be deduced from Proposition \ref{theorem:mainpropBD} and Theorem \ref{thm:mainresultMC} can be deduced from Proposition \ref{theorem:mainprop}. We only show how to prove Theorem \ref{thm:mainresult}, as the proof of Theorem \ref{thm:mainresultMC} is analogous.
\begin{proof}[Proof of Theorem \ref{thm:mainresult}]
	By Proposition \ref{theorem:mainpropBD} and the definition of the minimum conic singular value $ \lambda_{\min} \left(\mathcal{A},  \dcone{h_0 m^*_0} \right) $ with probability at least $1 - \mathcal{O} \left(\exp \left( -\frac{K}{C_2 \mu^2 } \right) \right)$  there is a matrix $ Z \in  \dcone{h_0 m^*_0} \setminus \left\{ 0 \right\} $ such that 
	\begin{equation}\label{ineq:intern12}
	\Vert \mathcal{A} \left(Z\right) \Vert \le C_3 \sqrt{\frac{L}{KN}} \Vert Z \Vert_F .
	\end{equation}
	and such that $\tilde{X}_t := h_0 m^*_0 + tZ $ obeys $ \Vert \tilde{X}_t \Vert_{\ast} \le \Vert h_0 m^*_0   \Vert_{\ast} $ for all $ 0 < t \le 1 $. Next, set $e_t= \frac{t}{2} \mathcal{A} \left(Z\right) $. Then for $y_t= \mathcal{A} \left( h_0 m^*_0 \right) +e_t $ we have that
	\begin{equation*}
	\Vert \mathcal{A} \left( \tilde{X}_t \right) -\tilde{y}_t \Vert = \Vert \mathcal{A} \left(tZ\right) -e \Vert = \frac{t}{2} \Vert \mathcal{A} \left(Z\right) \Vert.
	\end{equation*}
	Hence, by setting $\tau_0 :=  \frac{ \Vert \mathcal{A} \left(Z\right) \Vert}{2}$ we observe that $\Vert \mathcal{A} \left(\hat{X}_t\right) -y_t \Vert = t\tau_0 $. Furthermore, note that
	\begin{align*}
	\Vert \tilde{X}_t - X_0  \Vert_F = \Vert tZ \Vert_F =\overset{(\ref{ineq:intern12})}{\ge} \frac{t}{C_3} \sqrt{\frac{KN}{L}}  \Vert \mathcal{A} \left(Z\right) \Vert = \frac{ 2 t \tau_0}{C_3} \sqrt{\frac{KN}{L}}
	\end{align*}
	Now let $0 < \tau \le \tau_0 $. Then by setting $t=\frac{\tau}{\tau_0}$, $\tilde{X}:= \tilde{X}_t $, $y:=y_t $, and $e:= e_t$ the desired claim follows.
	%which shows the third statement. It remains to show the second statement. For that we need to choose $e \in \mathbb{R}^m $ with $ \Vert e \Vert \le \tau $ properly. We note that
	%\begin{equation*}
	%\big\Vert y - \mathcal{A} \left(\hat{X}\right) \big\Vert = \big\Vert e - \tau \mathcal{A}\left(Z\right) \big\Vert.
	%\end{equation*}
	%Hence, one can choose, for example, $ e= \mathcal{A} \left(Z\right)  $ or $ e= \tau \frac{ \mathcal{A} \left(Z\right)}{ \Vert \mathcal{A} \left(Z\right) \Vert} $ to finish the proof.
\end{proof}

\section{Stability of blind deconvolution}\label{section:stability}

\subsection{Outline of the proof and main ideas}
%The goal of this section is to give an outline of the proof of Theorem \ref{theorem:main2} and to introduce some notions which are used throughout the proof. 
The goal of this section is to prove Theorem \ref{thm:stabilityBD}. We first give a proof sketch and present the key ideas.
We have seen in Proposition \ref{theorem:mainpropBD} that for certain isometries $ B \in \mathbb{C}^{L \times K} $ with high probability one has that $ \lambda_{\min} \left( \mathcal{A}, \dcone{h_0 m^*_0} \right) \lesssim \sqrt{\frac{L}{KN}}  $. Hence, applying Theorem \ref{thm:chandrasekaran} cannot lead to very strong error estimates. However, if we closely inspect the proof of Proposition \ref{theorem:mainpropBD} we observe the following. Again, denote by $T$ the tangent space of the manifold of rank-$1$ matrices at point $ h_0 m^*_0 $ as defined in (\ref{equ:deftangentspace}) and assume that $\Vert h_0 \Vert = \Vert m_0 \Vert = 1 $. By construction we have that $Z= W - \beta h_0 m^*_0 $, where $W= \frac{\innerproduct{ h_0,e_i }}{\Vert m_i^{\perp} \Vert \vert \innerproduct{h_0,e_i} \vert} e_i \left(m_i^{\perp}\right)^* $.  This implies that
\begin{align*}
\vert \innerproduct{Z,h_0 m^*_0}_F \vert &\le \Big\vert  \innerproduct{\frac{\innerproduct{ h_0,e_i }}{\Vert m_i^{\perp} \Vert \vert \innerproduct{h_0, e_i} \vert} e_i \left(m_i^{\perp}\right)^* , h_0 m_0^*}_F \Big\vert    + \beta \Vert h_0 m^*_0 \Vert^2_F\\
&= \vert \innerproduct{e_i, h_0} \vert \Vert m^{\perp}_i \Vert   + \beta \Vert h_0 m^*_0 \Vert^2_F\\
&\le \frac{\mu}{\sqrt{L}} + \beta \lesssim \frac{\mu}{\sqrt{L}} + \sqrt{\frac{L}{KN}},
\end{align*}
where we have used the triangle inequality in the first line and the definition of $m^{\perp}_i$ in the second line. In the third line we used that $ \Vert m^{\perp}_i \Vert\le 1 $, $ \vert \innerproduct{e_i, h_0} \vert \le \frac{\mu}{\sqrt{L}}$, and $ \beta = \sqrt{\frac{L}{KN}} $. As $\Vert Z \Vert_F \gtrsim 1 $ this implies that $ \frac{\innerproduct{ Z, h_0 m^*_0}_F}{\Vert Z \Vert_F \Vert h_0 m^*_0 \Vert_F }$ is quite small, meaning that $Z$ and $h_0 m^*_0$ are almost orthogonal to each other. All the descent directions $Z$ with this property, however, have in common that the admissible descent step size 
\begin{equation*}
t_0 :=  \max \left\{ t>0 : \Vert h_0 m^*_0 + tZ \Vert_{\ast} \le \Vert h_0 m^*_0 \Vert_{\ast} \right\}  
\end{equation*}
is necessarily very small. Geometrically this corresponds to the fact that the nuclear norm ball is curved near $X_0$, which is why its near-tangential behaviour only holds locally. This will be made precise in Lemma \ref{lemma:maximaldescent} below. %Note that due to the characterization of the descent cone, Lemma \ref{lemma:descentconechar}, this necessarily implies that 
%\begin{equation*}
%\Vert \mathcal{P}_{T^{\perp}} Z \Vert_{F} \le \Vert \mathcal{P}_{T^{\perp}} Z \Vert_{\ast} \le \innerproduct{ Z , h_0 m^*_0} \le 2 \lambda \Vert Z \Vert_F \lesssim \sqrt{\frac{L}{KN}}\Vert Z \Vert_F
%\end{equation*}
%which means that most of the $\Vert \cdot \Vert_F $-mass of $Z$ is concentrated on the tangent space $T$. 
For this reason the idea of the proof of Theorem \ref{thm:stabilityBD} is to split the descent cone into two parts. One part will consist of all the matrices aligned with $h_0 m^*_0$. The second part will consist of all remaining matrices, which are almost orthogonal to $h_0 m^*_0$. As mentioned above, these matrices must necessarily be close to $T$. The first part is captured by the set $ \mathcal{E}_{\mu,\delta} $ with
\begin{equation*}
\mathcal{E}_{\mu,\delta}:= \underset{h_0 \in \mathcal{H_{\mu}},m_0 \in \mathbb{C}^N}{\bigcup} \left\{  Z\in  \dcone{h_0 m^*_0} :  \delta \le  \frac{- \text{Re} \left( \innerproduct{Z, h_0 m^*_0}_F \right)}{\Vert h_0 m^*_0 \Vert_F} \text{ and } \Vert Z \Vert_F =1  \right\},
\end{equation*}
where $\delta >0$. In Section \ref{section:conicsingular} we will show that with high probability it holds that
\begin{equation}\label{ineq:xwu}
\underset{Z \in \mathcal{E}_{\mu,\delta}}{\inf} \Vert \mathcal{A} \left(Z\right) \Vert \gtrsim \frac{\delta^2}{\log^{2} (L)  \mu^2}.
\end{equation}
%The question is whether $ \mathcal{A} $ is better behaved with respect to the measurements. 
Hence, if we have for the minimizer $\hat{X}$ of (\ref{opt:SDP}) that $ \hat{X}-h_0m^*_0 $ is an element of the conic hull of $ \mathcal{E}_{\mu,\delta} $ we can proceed similarly as in \cite{chandrasekaran2012convex} to obtain near-optimal error bounds. Let us briefly explain which property of $ \mathcal{E}_{\mu,\delta} $ allows us to show (\ref{ineq:xwu}). We define for any matrix $W \in \mathbb{C}^{K\times N} $ its $\Bnorm{\cdot} $-norm by
\begin{align*}
\Bnorm{W} &:=  \sum_{\ell =1}^{L} \Vert W^* b_{\ell}   \Vert.
\end{align*}
In other words $ \Bnorm{W} $ is the $ \ell_1$-norm of the vector $ \left(  \Vert W^* b_{\ell} \Vert  \right)^L_{\ell=1} $. We show in Lemma \ref{lemma:onenormdescentcone} below that all $Z \in \mathcal{E}_{\mu, \delta}$ have rather large $\Vert \cdot \Vert_{B_1}$-norm, which entails that the mass of the vector $\left(  \Vert W^* b_{\ell} \Vert  \right)^L_{\ell=1}$ cannot be concentrated on only very few entries. This in turn will allow us to employ a non-i.i.d. version of Mendelson's small ball method \cite{koltchinskii2015bounding}, allowing us to show a lower bound for (\ref{ineq:xwu}), see Lemma \ref{lemma:deconvconicbound} below.
%In the following we will denote by $ \left\{\tilde{b}_{\ell}  \right\}^L_{\ell=1}  $ a rearrangement of $ b_{\ell}$ such that $\Vert W^* \tilde{b}_{1} \Vert  \le  \Vert W^* \tilde{b}_{2} \Vert  \le \cdots \le \Vert W^* \tilde{b}_{L-1} \Vert  \le  \Vert W^* \tilde{b}_{L} \Vert  $. (Note that this rearrangement depends on the choice of $W$.)  Note that this yields (see \cite{bibid})
%\begin{equation}
%\wBnorm{W}= \underset{\ell \in \lbrack L \rbrack}{\max} \ \ell \ \Vert W^* \tilde{b}_{\ell} \Vert.
%\end{equation}
To understand the behaviour on the second part recall from Proposition \ref{theorem:mainpropBD} that for matrices $Z/\Vert Z \Vert_F \in  \dcone{h_0 m^*_0}  \setminus \mathcal{E}_{\mu,\delta}$ the quantity $\Vert \mathcal{A} \left(Z\right) \Vert $ may be quite small, so a uniform bound is not feasible. However, as $Z$ is almost orthogonal to $h_0 m^*_0 $,  also $\Vert \mathcal{P}_{T^{\perp} }Z \Vert_{\ast} $ must be rather small because of the characterization of the descent cone, Lemma \ref{lemma:descentconechar}. Hence, $Z$ is close to the tangent space and is almost orthogonal to $h_0 m^*_0 $. For that reason, whenever $\Vert h_0m^*_0 + t Z \Vert_{\ast} \le \Vert h_0 m^*_0  \Vert_{\ast} $ holds, the cylindrical shape of the nuclear norm ball implies that $t>0$ is small. This fact is captured by Lemma \ref{lemma:maximaldescent} below. Theorem \ref{theorem:mainpropBD} can then be proven by combining inequality (\ref{ineq:xwu}) and  Lemma \ref{lemma:maximaldescent}, see Section \ref{section:proofmain}.

\subsection{A lower bound for the minimum conic singular value}\label{section:conicsingular}
First we recall the notion of Gaussian width (see, e.g., \cite{vershynin2016high}).
\begin{definition}\label{def:Gaussianwidth}
	For a set $ \mathcal{E} \subset \mathbb{C}^{K \times N} $ its Gaussian width is defined by
	\begin{equation*}
	\omega \left( \mathcal{E}  \right) := \mathbb{E} \left[ \underset{X  \in \mathcal{E}}{\sup} \ \text{Re} \left(  \innerproduct{X,G}_F  \right) \right],
	\end{equation*}
	where $ G \in \mathbb{C}^{K \times N} $ is a matrix, whose entries are independent and identically distributed random variables with distribution $ \mathcal{CN} \left(0,1\right) $.
\end{definition}
This definition allows us to state the following lemma, which is important for our analysis of the conic singular value. It relies on a uniform lower bound on the number of measurements whose magnitude is larger than a certain constant and is a variant of Theorem 2.1 in \cite{koltchinskii2015bounding}.
\begin{lemma}\label{lemma:smallballmethod}
	Let $\mathcal{E} \subset \mathbb{C}^{K \times N}$ be a symmetric set, i.e., $ \mathcal{E} = - \mathcal{E} $. For all $ \xi >0 $ and $t>0$ it holds with probability at least $ 1- \exp \left(-2t^2\right) $ that
	\begin{align}\label{ineq:intern333333}
	&\underset{X \in \mathcal{E}}{\inf}  \Big\vert \left\{ \ell \in    \left[L\right]: \  \vert \langle b_{\ell} c^*_{\ell}, X \rangle_F \vert \ge \xi  \right\}   \Big\vert
	\ge  \underset{X \in \mathcal{E}}{\inf} \left( \sum_{\ell=1}^{L} \mathbb{P} \left( \vert \langle b_{\ell} c^*_{\ell}, X \rangle_F \vert \ge  2 \xi  \right)   \right)    - \frac{4 \omega \left( \mathcal{E} \right)}{\xi}  -  t  \sqrt{L}.
	\end{align}
	Here $ \varepsilon_1, \ldots, \varepsilon_{L} $ are independent Rademacher variables, i.e., random variables which take the two values $\pm 1 $ each with probability $ \frac{1}{2} $.
\end{lemma}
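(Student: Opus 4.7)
The plan is to carry out Mendelson's small ball argument, specialized to the rank-one measurements $b_\ell c_\ell^*$. First, I would introduce a Lipschitz truncation $\psi_\xi:\mathbb{C}\to[0,1]$ defined by $\psi_\xi(u)=\max\{0,\min\{1,|u|/\xi-1\}\}$, which is $(1/\xi)$-Lipschitz as a function on $\mathbb{C}\cong\mathbb{R}^2$, vanishes at the origin, and satisfies the sandwich $\mathbf{1}_{\{|u|\ge 2\xi\}}\le \psi_\xi(u)\le \mathbf{1}_{\{|u|\ge \xi\}}$. Defining
$$Z(X):=\sum_{\ell=1}^L \psi_\xi\bigl(\langle b_\ell c_\ell^*,X\rangle_F\bigr),$$
the counting quantity on the left of (\ref{ineq:intern333333}) is bounded below by $Z(X)$, while $\mathbb{E}Z(X)\ge \sum_\ell \mathbb{P}(|\langle b_\ell c_\ell^*,X\rangle_F|\ge 2\xi)$. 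Thus the claim reduces to the uniform deviation estimate
$$\sup_{X\in\mathcal{E}}\bigl(\mathbb{E}Z(X)-Z(X)\bigr)\le \frac{4\omega(\mathcal{E})}{\xi}+t\sqrt{L}$$
holding with probability at least $1-\exp(-2t^2)$, which I would prove by separating concentration and mean bound.

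For the concentration I would apply McDiarmid's bounded-differences inequality to the function $(c_1,\ldots,c_L)\mapsto \sup_{X\in\mathcal{E}}(\mathbb{E}Z(X)-Z(X))$. The $c_\ell$ are independent by assumption, and since $\psi_\xi\in[0,1]$ deterministically, replacing any single $c_\ell$ alters $Z(X)$ by at most $1$ for every $X$, and therefore alters the supremum by at most $1$. McDiarmid then yields subgaussian concentration with variance proxy $L/4$ and tail $\exp(-2s^2/L)$, so the choice $s=t\sqrt{L}$ produces precisely the $t\sqrt{L}$ term and the $\exp(-2t^2)$ failure probability stated in the lemma.

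The expected supremum I would handle by standard Rademacher symmetrization, costing a factor of $2$, followed by contraction. This produces $2\,\mathbb{E}\sup_X \sum_\ell \varepsilon_\ell \psi_\xi(\langle b_\ell c_\ell^*,X\rangle_F)$. Since $\psi_\xi$ is $(1/\xi)$-Lipschitz on $\mathbb{R}^2$ with $\psi_\xi(0)=0$, a vector-valued version of the Ledoux--Talagrand contraction principle, applied separately to the real and imaginary parts of the inner products, reduces matters to the linear Rademacher process $(2/\xi)\,\mathbb{E}\sup_X \real\langle \sum_\ell \varepsilon_\ell b_\ell c_\ell^*,\,X\rangle_F$. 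To identify this with the Gaussian width I would use that $c_\ell\sim \mathcal{CN}(0,I_N)$ is circularly symmetric and independent of $\varepsilon_\ell$, so $\varepsilon_\ell c_\ell \stackrel{d}{=} c_\ell$, and hence $\sum_\ell \varepsilon_\ell b_\ell c_\ell^*\stackrel{d}{=}\sum_\ell b_\ell c_\ell^*$. The standing frame condition $B^*B=\Id_K$ translates into $\sum_\ell b_\ell b_\ell^*=\Id_K$, which, combined with the circular symmetry, makes $\sum_\ell b_\ell c_\ell^*$ a matrix with i.i.d.\ $\mathcal{CN}(0,1)$ entries. Its Rademacher/Gaussian supremum therefore equals $\omega(\mathcal{E})$ by Definition \ref{def:Gaussianwidth}, yielding the desired $4\omega(\mathcal{E})/\xi$.

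The main obstacle is the contraction step: obtaining the clean multiplicative constant $4$ requires a careful vector-valued version of Talagrand's contraction lemma, which one can establish by decomposing $\psi_\xi$ via its action on the real and imaginary parts of $\langle b_\ell c_\ell^*,X\rangle_F$ and invoking the scalar Ledoux--Talagrand contraction principle on each, before reassembling via circular symmetry of $c_\ell$. All remaining steps are standard bookkeeping of symmetrization constants and concentration.
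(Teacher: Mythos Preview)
Your proposal is correct and follows essentially the same route as the paper. The paper factors the argument into an abstract non-i.i.d.\ small-ball lemma (Lipschitz truncation $\Psi_\xi$, bounded differences, Gin\'e--Zinn symmetrization, Ledoux--Talagrand contraction) and a separate lemma identifying $\mathbb{E}\sup_{X\in\mathcal{E}}\real\langle X,\sum_\ell b_\ell c_\ell^*\rangle_F$ with $\omega(\mathcal{E})$ via the tight-frame condition $\sum_\ell b_\ell b_\ell^*=\Id$; the only cosmetic difference is that the paper first contracts $\Psi_\xi$ as a real function of $|\langle b_\ell c_\ell^*,X\rangle_F|$ and then splits the resulting absolute value into real and imaginary parts (using $\mathcal{E}=-\mathcal{E}$ to drop the inner absolute value), whereas you absorb the Rademachers directly via $\varepsilon_\ell c_\ell\stackrel{d}{=}c_\ell$.
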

\noindent The proof of the Lemma \ref{lemma:smallballmethod} is based on a variant of Mendelson's small-ball method and proceeds in analogy to \cite{koltchinskii2015bounding}. We have deferred a detailed proof to Appendix \ref{section:auxlemma}. In order to apply Lemma \ref{lemma:smallballmethod} we need to estimate the first term of the right-hand side of Lemma \ref{lemma:smallballmethod}. Such an estimate can be derived using the Paley-Zygmund inequality as in \cite{koltchinskii2015bounding}. For the sake of completeness we have included a proof in Appendix \ref{appendix:paleyzygmund}.
\begin{lemma}\label{lemma:BDpaleyzygmundapplied}
	Let $ X \in \mathbb{C}^{K \times N} $ be an arbitrary matrix. Then for all $ \xi > 0$
	\begin{equation*}
	 \sum_{\ell=1}^{L} \mathbb{P} \left( \vert \innerproduct{b_{\ell} c^*_{\ell}, X }_F \vert \ge 2\xi  \right) \ge  \frac{ 9}{32}    \big\vert \left\{ \ell \in \left[L\right] : \ \Vert X^* b_{\ell} \Vert \ge  4 \xi   \right\} \big\vert.
	\end{equation*}
\end{lemma}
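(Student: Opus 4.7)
The plan is a textbook application of the Paley--Zygmund inequality, applied coordinate-wise and then summed. The key observation is that the inner product $\langle b_\ell c_\ell^*, X\rangle_F$ is a complex Gaussian random variable (for each $\ell$), whose variance is exactly $\|X^* b_\ell\|^2$; this is precisely what relates the two quantities in the inequality.

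First I would rewrite
\begin{equation*}
\langle b_\ell c_\ell^*, X\rangle_F = \trace\!\left( (b_\ell c_\ell^*)^* X \right) = b_\ell^* X c_\ell = \langle X^* b_\ell, c_\ell\rangle.
\end{equation*}
Since $c_\ell \sim \mathcal{CN}(0, \Id_N)$ and $b_\ell$ is deterministic, the rotation-invariance of the complex Gaussian gives $\langle X^* b_\ell, c_\ell\rangle \sim \mathcal{CN}(0, \|X^* b_\ell\|^2)$. The random variable $Y_\ell := |\langle X^* b_\ell, c_\ell\rangle|^2$ therefore satisfies
\begin{equation*}
\E Y_\ell = \|X^* b_\ell\|^2 \quad\text{and}\quad \E Y_\ell^2 = 2\|X^* b_\ell\|^4,
\end{equation*}
where the factor $2$ is the standard fourth-moment identity for complex normal variables.

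Next I would apply the Paley--Zygmund inequality to $Y_\ell$ with parameter $\theta = 1/4$, which yields
\begin{equation*}
\mathbb{P}\!\left( Y_\ell > \tfrac{1}{4}\|X^* b_\ell\|^2 \right) \ge \left(1-\tfrac{1}{4}\right)^2 \frac{(\E Y_\ell)^2}{\E Y_\ell^2} = \frac{9}{16}\cdot\frac{1}{2} = \frac{9}{32}.
\end{equation*}
Taking square roots translates this into
\begin{equation*}
\mathbb{P}\!\left( |\langle b_\ell c_\ell^*, X\rangle_F| \ge \tfrac{1}{2}\|X^* b_\ell\| \right) \ge \tfrac{9}{32}.
\end{equation*}

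Finally, I would restrict the sum on the left-hand side of the claimed inequality to those indices $\ell$ for which $\|X^* b_\ell\| \ge 4\xi$. For any such $\ell$ we have $\tfrac12\|X^* b_\ell\| \ge 2\xi$, so the previous bound gives $\mathbb{P}(|\langle b_\ell c_\ell^*, X\rangle_F| \ge 2\xi) \ge 9/32$. Summing over this index set and dropping the remaining nonnegative terms,
\begin{equation*}
\sum_{\ell=1}^{L} \mathbb{P}\!\left(|\langle b_\ell c_\ell^*, X\rangle_F| \ge 2\xi\right) \;\ge\; \frac{9}{32}\,\bigl|\{\ell\in [L] : \|X^* b_\ell\| \ge 4\xi\}\bigr|,
\end{equation*}
which is the desired conclusion. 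There is no real obstacle here; the only point that requires a small amount of care is getting the right constant in the fourth-moment computation for complex (as opposed to real) Gaussians, since this is precisely what produces the factor $9/32$ rather than a weaker numerical constant.
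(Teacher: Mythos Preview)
Your proof is correct and essentially identical to the paper's: both identify $\langle b_\ell c_\ell^*, X\rangle_F$ as a complex Gaussian with second moment $\|X^* b_\ell\|^2$ and fourth moment $2\|X^* b_\ell\|^4$, apply Paley--Zygmund with threshold $\tfrac14$ of the mean to obtain the constant $9/32$, and then sum over the indices with $\|X^* b_\ell\|\ge 4\xi$. Your version is slightly more explicit in writing out the identification $\langle b_\ell c_\ell^*, X\rangle_F = b_\ell^* X c_\ell = \langle X^* b_\ell, c_\ell\rangle$, but the argument is the same.
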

In order to use Lemma \ref{lemma:BDpaleyzygmundapplied} we need a lower bound for $\vert \left\{ \ell \in \left[L\right] : \ \Vert X^* b_{\ell} \Vert \ge  \xi   \right\} \vert$. This will be achieved by the next lemma. For the statement of this lemma and its proof we will need to introduce the following notion. We define for any matrix $W \in \mathbb{C}^{K\times N} $ its $\wBnorm{\cdot} $-quasinorm by
\begin{align*}
\wBnorm{W}&:=   \ \underset{ \xi \ge 0}{\sup} \  \xi  \Big\vert  \left\{ \ell \in \lbrack L \rbrack : \Vert W^* b_{\ell} \Vert \ge \xi   \right\}  \Big\vert.
\end{align*}
That is, $ \wBnorm{W} $ is the weak $\ell_1$-norm of the vector $\left( \Vert W^* b_{\ell} \Vert   \right)^L_{\ell=1} $. (For a more detailed discussion of the weak $\ell_1$-norm see, e.g., \cite{grafakos}.) A direct consequence of this interpretation is the inequality (see, e.g., \cite[Proposition 2.10 and Exercise 2.4]{foucart2013mathematical})
%\begin{lemma}
%	Let $W \in \mathbb{C}^{K \times n}$ for some $ n \in \mathbb{N} $. Then we have that
\begin{equation}\label{ineq:wnorm}
\wBnorm{W} \le \Bnorm{W} \le \log \left(eL\right) \wBnorm{W}.
\end{equation}

\begin{lemma}\label{lemma:largeentries}
	Let $ Z \in \mathbb{C}^{K \times N} $ such that $ \Vert Z \Vert_F =1 $. Then it holds that
	\begin{equation*}
	\Big\vert \left\{ \ell \in \left[L\right] : \   \Vert Z^* b_{\ell} \Vert     \ge  \frac{\Bnorm{Z}}{L \log \left(eL\right)}    \right\} \Big\vert \ge  \frac{\Bnorm{Z}^2}{\log^2 \left(eL\right)}.
	\end{equation*}
\end{lemma}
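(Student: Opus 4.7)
The plan is to reduce the claim to a one-dimensional sorting problem and then combine the weak-$\ell_1$/$\ell_1$ comparison in~(\ref{ineq:wnorm}) with an $\ell_2$ normalization that comes from the tight-frame structure of $\{b_\ell\}_{\ell=1}^L$.

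First I would observe that $B^*B = \mathrm{Id}_K$ together with the unitarity of $F$ forces the Parseval-type identity $\sum_{\ell=1}^L b_\ell b_\ell^* = \mathrm{Id}_K$. A short trace computation then gives
\[
\sum_{\ell=1}^L \Vert Z^* b_\ell \Vert^2 = \trace\Bigl( Z Z^* \sum_{\ell=1}^L b_\ell b_\ell^* \Bigr) = \Vert Z \Vert_F^2 = 1.
\]
Setting $s_\ell := \Vert Z^* b_\ell \Vert$ and relabeling so that $s_1 \ge s_2 \ge \cdots \ge s_L \ge 0$, the problem reduces to the following one-dimensional statement: for a nonnegative, nonincreasing sequence $(s_k)$ with $\sum_k s_k^2 = 1$, $\sum_k s_k = \Bnorm{Z}$, and $\max_k k s_k = \wBnorm{Z}$, lower-bound the number of indices $k$ with $s_k \ge \Bnorm{Z}/(L \log(eL))$.

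Next I would invoke the right-hand inequality in~(\ref{ineq:wnorm}) to obtain $\wBnorm{Z} \ge \Bnorm{Z}/\log(eL)$, and pick $k^*$ at which the supremum $\max_k k s_k$ is attained, so that $k^* s_{k^*} \ge \Bnorm{Z}/\log(eL)$. Monotonicity together with the $\ell_2$ identity gives $k^* s_{k^*}^2 \le \sum_{k \le k^*} s_k^2 \le 1$, i.e.\ $s_{k^*} \le 1/\sqrt{k^*}$. Multiplying the two bounds then yields $\sqrt{k^*} \ge \Bnorm{Z}/\log(eL)$, hence
\[
k^* \ge \Bnorm{Z}^2 / \log^2(eL).
\]

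Finally, since $k^* \le L$, the threshold satisfies $s_{k^*} \ge \Bnorm{Z}/(k^* \log(eL)) \ge \Bnorm{Z}/(L \log(eL))$, and monotonicity then guarantees that all of $s_1, \ldots, s_{k^*}$ exceed this threshold. This gives at least $k^* \ge \Bnorm{Z}^2 / \log^2(eL)$ indices $\ell$ with $\Vert Z^* b_\ell \Vert \ge \Bnorm{Z}/(L \log(eL))$, which is exactly the claim. I do not expect any serious obstacle: the only real trick is to read off \emph{both} the quantitative threshold $s_{k^*}$ \emph{and} the matching cardinality $k^*$ from a single invocation of the weak-$\ell_1$ comparison, and then to use the unit Frobenius normalization to upgrade that cardinality to the correct quadratic scaling in $\Bnorm{Z}$.
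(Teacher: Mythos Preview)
Your proof is correct and follows essentially the same route as the paper: both arguments pick the point where the weak-$\ell_1$ supremum is attained (your $k^*$ with $s_{k^*}$ is exactly the paper's $\xi^*$ with $|\{\ell:\Vert Z^*b_\ell\Vert\ge\xi^*\}|$), use $\sum_\ell \Vert Z^*b_\ell\Vert^2=1$ to bound the cardinality below by $\wBnorm{Z}^2$, and use $k^*\le L$ to push the threshold down to $\Bnorm{Z}/(L\log(eL))$. Your explicit sorting and the derivation of the Parseval identity from $B^*B=\mathrm{Id}_K$ are the only cosmetic differences.
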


\begin{proof}[Proof of Lemma \ref{lemma:largeentries}]
	Choose $\xi^* $ such that
	\begin{equation}\label{ineq:auxiliary7532}
	\wBnorm{Z} = \xi^* \vert \left\{ \ell \in \left[L\right] : \   \Vert Z^* b_{\ell} \Vert     \ge  \xi^*     \right\} \vert.
	\end{equation}	
	As $\vert \left\{ \ell \in \left[L\right] : \   \Vert Z^* b_{\ell} \Vert     \ge  \xi^*     \right\} \vert \le L$ it follows that
	\begin{equation}\label{ineq:auxiliary7531}
	\xi^* \ge \frac{\wBnorm{Z}}{L} \ge \frac{\Bnorm{Z}}{L \log \left(eL\right)} ,
	\end{equation}
	where we also used inequality (\ref{ineq:wnorm}). We observe that
	\begin{align*}
	1&= \Vert Z \Vert^2_F\\
	&=\sum_{\ell=1}^{L} \Vert Z^* b_{\ell} \Vert^2\\
	&\ge {\xi^*}^2 \vert \left\{ \ell \in \left[L\right] : \   \Vert Z^* b_{\ell} \Vert     \ge  \xi^*     \right\} \vert\\
	&\overset{(\ref{ineq:auxiliary7532})}{=} \frac{\wBnorm{Z}^2}{ \vert \left\{ \ell \in \left[L\right] : \   \Vert Z^* b_{\ell} \Vert     \ge  \xi^*     \right\} \vert}\\
	&\overset{(\ref{ineq:wnorm})}{\ge}  \frac{\Bnorm{Z}^2}{ \log^2 \left(eL\right) \vert \left\{ \ell \in \left[L\right] : \   \Vert Z^* b_{\ell} \Vert     \ge  \xi^*     \right\} \vert},
	\end{align*}	
	where for the second equality we used the identity $ \sum_{\ell=1}^{L} b_{\ell} b^*_{\ell} = \Id $. Using inequality (\ref{ineq:auxiliary7531}) and rearranging terms it follows that
	\begin{equation*}
	\Big\vert \left\{ \ell \in \left[L\right] : \   \Vert Z^* b_{\ell} \Vert     \ge  \frac{\Bnorm{Z}}{L \log \left(eL\right)}     \right\} \Big\vert \ge \big\vert \left\{ \ell \in \left[L\right] : \   \Vert Z^* b_{\ell} \Vert     \ge  \xi^*     \right\} \big\vert \ge \frac{\Bnorm{Z}^2}{ \log^2 \left(eL\right)},
	\end{equation*}	
	which completes the proof.

\end{proof}

\noindent The next lemma gives a bound on  $\underset{Z \in \mathcal{E}_{\mu, \delta}}{\inf} \Bnorm{Z} $.
\begin{lemma}\label{lemma:onenormdescentcone}
	It holds that
	\begin{equation*}
	\underset{Z \in \mathcal{E}_{\mu, \delta}}{\inf} \Bnorm{Z} \ge \frac{\delta \sqrt{L}}{\mu}.
	\end{equation*}
\end{lemma}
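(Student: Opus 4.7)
The plan is to pass from the angle-type condition defining $\mathcal{E}_{\mu,\delta}$ to a bound on $\Bnorm{Z}$ via the tight frame identity $\sum_{\ell=1}^{L} b_{\ell} b_{\ell}^* = \Id_K$ (which follows from $B^*B = \Id_K$ together with $F$ being unitary). Fix $Z \in \mathcal{E}_{\mu,\delta}$, and choose $h_0 \in \mathcal{H}_\mu \setminus \{0\}$, $m_0 \in \mathbb{C}^N \setminus \{0\}$ realizing the union in the definition of $\mathcal{E}_{\mu,\delta}$.

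The key computation is to expand
\begin{equation*}
\innerproduct{Z, h_0 m_0^*}_F = m_0^* Z^* h_0 = m_0^* Z^* \left(\sum_{\ell=1}^{L} b_\ell b_\ell^* \right) h_0 = \sum_{\ell=1}^{L} \innerproduct{b_\ell, h_0} \, m_0^* Z^* b_\ell,
\end{equation*}
and then to bound this sum by the triangle inequality followed by Cauchy--Schwarz on the factor $m_0^* Z^* b_\ell = \innerproduct{m_0, Z^* b_\ell}$, yielding
\begin{equation*}
\left\vert \innerproduct{Z, h_0 m_0^*}_F \right\vert \le \|m_0\| \sum_{\ell=1}^{L} \vert \innerproduct{b_\ell, h_0} \vert \, \|Z^* b_\ell\|.
\end{equation*}
Plugging in the incoherence bound $\vert \innerproduct{b_\ell, h_0}\vert \le \mu \|h_0\|/\sqrt{L}$ coming from $h_0 \in \mathcal{H}_\mu$ pulls the $\ell$-dependence out and gives
\begin{equation*}
\left\vert \innerproduct{Z, h_0 m_0^*}_F \right\vert \le \frac{\mu \|h_0\| \|m_0\|}{\sqrt{L}} \Bnorm{Z} = \frac{\mu}{\sqrt{L}} \|h_0 m_0^*\|_F \, \Bnorm{Z},
\end{equation*}
using $\|h_0 m_0^*\|_F = \|h_0\| \|m_0\|$ for rank-one matrices.

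On the other hand, the condition $\delta \le -\text{Re}(\innerproduct{Z, h_0 m_0^*}_F)/\|h_0 m_0^*\|_F$ defining membership in $\mathcal{E}_{\mu,\delta}$ forces $\vert \innerproduct{Z, h_0 m_0^*}_F \vert \ge \delta \|h_0 m_0^*\|_F$. Combining the two inequalities and dividing through by $\|h_0 m_0^*\|_F > 0$ yields $\delta \le \mu \Bnorm{Z}/\sqrt{L}$, which rearranges to the claimed bound $\Bnorm{Z} \ge \delta \sqrt{L}/\mu$ and holds uniformly over $Z \in \mathcal{E}_{\mu,\delta}$.

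There is no real obstacle here: the argument reduces to two one-line estimates and the Parseval-type identity. The only thing to be careful about is to correctly identify $\sum_\ell b_\ell b_\ell^* = \Id_K$ as the correct form of the isometry relation in the paper's convention, which is consistent with the identity already used in the proof of Proposition \ref{theorem:mainpropBD}.
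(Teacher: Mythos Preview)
Your proof is correct and follows essentially the same approach as the paper: both expand $\innerproduct{Z, h_0 m_0^*}_F$ via the tight frame identity $\sum_{\ell} b_\ell b_\ell^* = \Id_K$, bound each summand using Cauchy--Schwarz to get $\vert \innerproduct{m_0, Z^* b_\ell} \vert \le \|m_0\|\,\|Z^* b_\ell\|$, apply the incoherence condition $h_0 \in \mathcal{H}_\mu$, and rearrange. The only difference is cosmetic ordering of the steps.
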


\begin{proof}
	Let $Z\in \mathcal{E}_{\mu,\delta} $ be arbitrary. By definition of $ \mathcal{E}_{\mu,\delta} $ there is $ h_0 \in \mathcal{H}_{\mu}$ and $m_0 \in \mathbb{C}^N $ such that $Z \in  \dcone{h_0 m^*_0}  $ and such that the inequality 
	\begin{align*}
	\delta&\le  \frac{- \text{Re} \left(  \innerproduct{ Z, h_0 m^*_0 }_F  \right)}{\Vert h_0 m^*_0 \Vert_F} %= \sum_{\ell=1}^{L} \text{Re} \left(  \trace \left( Z^*   b_{\ell} b^*_{\ell} h_0 m^*_0 \right)   \right)% =\sum_{\ell=1}^{L} \text{Re} \left( \innerproduct{ }     \right) 
	\end{align*}
	holds. It follows that
	\begin{align*}
	\delta&\le   \frac{ - \text{Re} \left( \innerproduct{ Zm_0, h_0  } \right)}{\Vert h_0 m^*_0 \Vert_F}  \\
	& =    \frac{ - \sum_{\ell =1}^L \text{Re} \left( \innerproduct{ Zm_0, b_{\ell}    } \innerproduct{ b_{\ell}, h_0 }  \right)}{\Vert h_0 m^*_0 \Vert_F} \\
	& \le \frac{  \left(   \underset{\ell \in \left[L\right]}{\max} \vert \innerproduct{h_0, b_{\ell}} \vert \right)  \sum_{\ell =1}^L  \vert   \innerproduct{Zm_0, b_{\ell}}  \vert}{\Vert h_0 m^*_0 \Vert_F},
	\end{align*}
	where for the second equality we have used that $ \sum_{\ell =1}^L b_{\ell} b^*_{\ell} = \Id $. Note that for all $\ell \in \left[L\right]$ it holds that 
	\begin{equation*}
	\frac{ \vert \innerproduct{Z m_0, b_{\ell}} \vert}{\Vert m_0 \Vert}  \le \Vert Z^* b_{\ell} \Vert.
	\end{equation*} 
	Hence, by the previous inequality chain it follows that
	\begin{equation*}
	\delta  \le  \frac{   \underset{\ell \in \left[L\right]}{\max} \ \vert \innerproduct{h_0, b_{\ell}} \vert}{\Vert h_0 \Vert}  \sum_{\ell =1}^L  \Vert Z^*b_{\ell} \Vert  \le    \frac{ \mu}{\sqrt{L}} \Vert Z \Vert_{B_1},
	\end{equation*}	
	where in the last inequality we used the definition of $  \mu $ and $ \Vert Z \Vert_{B_1} $. Rearranging terms and taking the infimum over all $Z \in \mathcal{E}_{\mu,\delta} $ yields the desired inequality.
\end{proof}

Having gathered all the necessary ingredients we can state and prove the main lemma of this section.
\begin{lemma}\label{lemma:deconvconicbound}
Let $ \delta >0 $. Assume that
\begin{equation}\label{ineq:assumption1}
L \ge C_1  \left( \frac{\mu}{\delta} \right)^6 \left(K+N\right) \ \log^6 \left(eL\right),
\end{equation}
Then with probability at least  $ 1 -\exp \left( - \frac{L\delta^4 }{C_2 \log^4 \left(eL\right) \mu^4 }  \right) $ it holds that
\begin{equation}\label{ineq:deconvconicbound}
\underset{Z \in \mathcal{E}_{\mu,\delta}}{\inf}  \Vert \mathcal{A} \left(Z\right) \Vert \gtrsim  \frac{\delta^2}{\log^{2} \left( L\right) \mu^{2}}.
\end{equation}
$C_1$ and $C_2$ are absolute constants.
\end{lemma}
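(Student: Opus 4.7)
The plan is to obtain a uniform lower bound on $\Vert \mathcal{A}(Z)\Vert$ over $Z\in\mathcal{E}_{\mu,\delta}$ via the small ball estimate of Lemma \ref{lemma:smallballmethod}. The starting point is the elementary inequality
\begin{equation*}
\Vert \mathcal{A}(Z)\Vert \ge \xi \sqrt{\big\vert\{\ell \in [L]:\, \vert\innerproduct{b_\ell c_\ell^*,Z}_F\vert \ge \xi\}\big\vert},
\end{equation*}
valid for every $\xi>0$. Thus it suffices to find $\xi>0$ for which Lemma \ref{lemma:smallballmethod} provides a uniform lower bound of the right order on the cardinality on the right-hand side, and then optimize.

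To control the ``small-ball'' term $\sum_{\ell=1}^L \mathbb{P}(\vert\innerproduct{b_\ell c_\ell^*,Z}_F\vert\ge 2\xi)$ appearing in Lemma \ref{lemma:smallballmethod}, I would apply, in sequence, Lemma \ref{lemma:BDpaleyzygmundapplied} (which reduces it, up to the factor $9/32$, to counting $\ell$'s with $\Vert Z^* b_\ell\Vert \ge 4\xi$), then Lemma \ref{lemma:largeentries} (which in turn bounds that count from below by $\Bnorm{Z}^2/\log^2(eL)$ provided $4\xi\le \Bnorm{Z}/(L\log(eL))$), and finally Lemma \ref{lemma:onenormdescentcone} (which supplies the uniform lower bound $\Bnorm{Z}\ge \delta\sqrt{L}/\mu$). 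This motivates the choice
\begin{equation*}
\xi \,:=\, \frac{\delta}{4\sqrt{L}\,\mu\log(eL)},
\end{equation*}
which satisfies $4\xi \le \Bnorm{Z}/(L\log(eL))$ uniformly in $Z\in\mathcal{E}_{\mu,\delta}$. Chaining the three bounds yields
\begin{equation*}
\inf_{Z\in\mathcal{E}_{\mu,\delta}} \sum_{\ell=1}^L \mathbb{P}(\vert\innerproduct{b_\ell c_\ell^*,Z}_F\vert\ge 2\xi) \,\gtrsim\, \frac{\delta^2 L}{\mu^2 \log^2(eL)}.
\end{equation*}

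The main obstacle is the Gaussian width term $4\omega(\mathcal{E}_{\mu,\delta})/\xi$ in Lemma \ref{lemma:smallballmethod}. Here I would use the structural observation that for any $Z$ in the descent cone at a rank-one matrix $h_0 m_0^*$, Lemma \ref{lemma:descentconechar} combined with $\Vert\mathcal{P}_T Z\Vert_* \le \sqrt{2}\Vert Z\Vert_F$ (since $\mathcal{P}_T Z$ has rank at most $2$) yields $\Vert Z\Vert_* \lesssim \Vert Z\Vert_F$. Consequently $\mathcal{E}_{\mu,\delta}$ is contained, up to an absolute constant, in the intersection of the nuclear-norm ball and the Frobenius unit sphere, whose Gaussian width is well known to be $\lesssim \sqrt{K+N}$ (by duality with the spectral norm and Gordon's bound). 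Plugging the chosen $\xi$ into $4\omega(\mathcal{E}_{\mu,\delta})/\xi$ gives an upper bound of order $\sqrt{L(K+N)}\,\mu\log(eL)/\delta$, and requiring this to be at most half of $\delta^2 L/(\mu^2\log^2(eL))$ translates, after simplification, exactly to the sample-complexity hypothesis $L\gtrsim (\mu/\delta)^6(K+N)\log^6(eL)$ in (\ref{ineq:assumption1}).

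Finally, to handle the deviation term $t\sqrt{L}$ in Lemma \ref{lemma:smallballmethod}, I would take $t$ of order $\delta^2\sqrt{L}/(\mu^2\log^2(eL))$, so that $t\sqrt{L}$ is no larger than a fraction of the main small-ball quantity, while $\exp(-2t^2)$ gives exactly the failure probability $\exp(-L\delta^4/(C_2\log^4(eL)\mu^4))$ stated in the lemma. Substituting back into the initial inequality then yields
\begin{equation*}
\inf_{Z\in\mathcal{E}_{\mu,\delta}} \Vert\mathcal{A}(Z)\Vert \,\ge\, \xi\,\sqrt{c\,\frac{\delta^2 L}{\mu^2\log^2(eL)}} \,\gtrsim\, \frac{\delta^2}{\mu^2\log^2(eL)},
\end{equation*}
which is the claimed bound. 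The only delicate point beyond bookkeeping of constants is justifying the Gaussian-width estimate $\omega(\mathcal{E}_{\mu,\delta})\lesssim \sqrt{K+N}$; the rest is an essentially mechanical assembly of the previously established lemmas.
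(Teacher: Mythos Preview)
Your proposal is correct and follows essentially the same approach as the paper's proof: apply Lemma \ref{lemma:smallballmethod} with $\xi=\delta/(4\sqrt{L}\mu\log(eL))$, control the small-ball term via the chain Lemma \ref{lemma:BDpaleyzygmundapplied} $\to$ Lemma \ref{lemma:largeentries} $\to$ Lemma \ref{lemma:onenormdescentcone}, absorb the Gaussian-width term using the sample-complexity hypothesis, and choose $t\asymp \delta^2\sqrt{L}/(\mu^2\log^2(eL))$. The only cosmetic difference is that the paper bounds $\omega(\mathcal{E}_{\mu,\delta})$ by embedding $\mathcal{E}_{\mu,\delta}$ into the union of all rank-one descent cones intersected with the Frobenius sphere and citing \cite[Lemma 4.1]{kabanava2016stable} for the bound $2\sqrt{K+N}$, whereas you sketch the same bound directly via Lemma \ref{lemma:descentconechar} and the nuclear-norm/spectral-norm duality; these are the same argument.
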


\begin{proof}
Our goal is to apply Lemma \ref{lemma:smallballmethod}. In order to apply it we first derive a lower bound for the first term on the right-hand side of inequality (\ref{ineq:intern333333}). For that recall that by Lemma \ref{lemma:onenormdescentcone} it holds that
\begin{equation}\label{intern:bla123}
\underset{Z \in \mathcal{E}_{\mu,\delta}}{\inf} \Vert Z \Vert_{B_1} \ge \frac{\delta\sqrt{L}}{\mu}.
\end{equation}
Thus, for any $Z \in \mathcal{E}_{\mu,\delta} $ we obtain that
\begin{align*}
\Big\vert \left\{ \ell \in \left[L\right]: \ \Vert Z^* b_{\ell} \Vert \ge \ \frac{\delta}{ \mu \sqrt{L} \log \left(eL\right)}    \right\} \Big\vert  & \ge  \Big\vert \left\{ \ell \in \left[L\right]: \ \Vert Z^* b_{\ell} \Vert \ge \ \frac{\Bnorm{Z}}{L \log \left(eL\right)}    \right\} \Big\vert \\
& \ge \frac{\Bnorm{Z}^2}{\log^2 \left(eL\right)}\\
& \ge \frac{\delta^2 L}{\mu^2 \log^2 \left(eL\right)},
\end{align*}
where the first inequality follows from (\ref{intern:bla123}), the second one is due to Lemma \ref{lemma:largeentries}, and the third one follows again from (\ref{intern:bla123}). Hence, by Lemma \ref{lemma:BDpaleyzygmundapplied} applied with $ \xi =  \frac{\delta}{4 \sqrt{L} \ln \left(eL\right) \mu }  $ we finally obtain that
\begin{equation}\label{ineq:marginaltailbound}
\begin{split}
&\underset{Z \in \mathcal{E}_{\mu,\delta}}{\inf} \left(	 \sum_{\ell=1}^{L} \mathbb{P} \left( \vert \innerproduct{b_{\ell} c^*_{\ell}, Z }_F \vert \ge  \frac{\delta}{2 \sqrt{L} \ln \left(eL\right) \mu }  \right)  \right) \\
\ge & \frac{ 9}{32}  \underset{Z \in \mathcal{E}_{\mu,\delta}}{\inf}     \Big\vert \left\{ \ell \in \left[L\right] : \ \Vert Z^* b_{\ell} \Vert \ge  \frac{\delta}{\sqrt{L} \ln \left(eL\right) \mu }   \right\} \Big\vert\\
\ge &   \frac{ 9 \delta^2 L}{32 \mu^2 \log^2 \left(eL\right)}.
\end{split}
\end{equation}
Next, we need an upper bound for the Gaussian width. For that, we first observe that
\begin{align*}
\mathcal{E}_{\mu,\delta}\subset \left( \underset{h_0 \in \mathbb{C}^K, m_0 \in \mathbb{C}^N }{\bigcup}  \dcone{h_0 m^*_0}  \right) \cap \left\{ Z \in \mathbb{C}^{K \times N}: \ \Vert Z \Vert_F =1  \right\}=: \mathcal{E}.
\end{align*} 
The Gaussian width of $\mathcal{E}$ has been bounded in \cite[Lemma 4.1]{kabanava2016stable}, combined with the monotonicity of the Gaussian width their results yields that
\begin{equation}\label{ineq:gaussianwidthbound}
\omega \left( \mathcal{E}_{\mu,\delta}  \right) \le \omega \left( \mathcal{E} \right) \le 2 \sqrt{ \left(K+N\right)}.
\end{equation}
Thus for $\xi= \frac{\delta}{ 4\sqrt{L} \log \left(eL\right) \mu } $ we obtain from Lemma \ref{lemma:smallballmethod} together with (\ref{ineq:marginaltailbound}), (\ref{ineq:gaussianwidthbound}) that with probability at least $ 1 -\exp \left( -2t^2 \right) $ it holds that
\begin{align*}
\underset{Z \in \mathcal{E}_{\mu, \delta}}{\inf}  \Big\vert \left\{ \ell \in    \left[L\right]: \  \vert \langle b_{\ell} c^*_{\ell}, Z \rangle_F \vert \ge  \xi  \right\}   \Big\vert  
\ge   &\frac{9 L\delta^2}{32 \log^2 \left(eL\right) \mu^2 } - \frac{2 \log \left(eL\right) \mu \sqrt{ L \left(K+N \right)} }{\delta} -  t \sqrt{L} \\
\ge &  \frac{9 L\delta^2}{64 \log^2 \left( e L \right) \mu^2} - t \sqrt{L},
\end{align*}
where the second inequality follows from assumption (\ref{ineq:assumption1}), if the constant $C_1>0$ is chosen large enough. Consequently, setting $t= \frac{9 \delta^2 \sqrt{L} }{128 \log^2 \left(eL\right) \mu^2 } $ and recalling that $ \left(\mathcal{A} \left(Z\right)\right) \left(\ell\right) = \innerproduct{b_{\ell}c_{\ell}^*,Z}_F $ we have that with probability at least $ 1 -\exp \left( - \frac{L\delta^4 }{C_2 \log^4 \left(eL\right) \mu^4 }  \right) $ with $C_2$ chosen appropriately
\begin{equation*}
\underset{Z \in \mathcal{E}_{\mu,\delta}}{\inf}  \Big\vert \left\{ \ell \in    \left[L\right]: \ \vert \mathcal{A} \left(Z\right)\left(\ell\right)  \vert \ge  \frac{\delta}{ 4\sqrt{L} \log \left(eL\right) \mu }   \right\}   \Big\vert  \ge  \frac{9 L\delta^2}{128 \log^2 \left(L\right) \mu^2}.
\end{equation*}
Summing up we obtain that with probability at least $ 1 -\exp \left( - \frac{L\delta^4 }{C_2 \log^4 \left(eL\right) \mu^4 }  \right) $
\begin{align*}
\underset{Z \in \mathcal{E}_{\mu,\delta}}{\inf}  \Vert \mathcal{A} \left(Z\right) \Vert &\ge \underset{Z \in \mathcal{E}_{\mu,\delta}}{\inf}  \frac{\delta}{ 4\sqrt{L} \log \left(eL\right) \mu } \sqrt{ \Big\vert \left\{ \ell \in    \left[L\right]: \  \vert \mathcal{A} \left(Z\right)\left(\ell\right) \vert \ge  \frac{\delta}{ 4\sqrt{L} \log \left(eL\right) \mu }   \right\}   \Big\vert }\\
%&= \underset{Z \in \mathcal{E}_{\mu,\delta}}{\inf}  \frac{\delta}{ 4\sqrt{L} \log \left(eL\right) \mu } \sqrt{ \Big\vert \left\{ \ell \in    \left[L\right]: \  \vert \langle b_{\ell} c^*_{\ell}, Z \rangle_F \vert \ge  \frac{\delta}{ 4\sqrt{L} \log \left(eL\right) \mu }   \right\}   \Big\vert }\\
&\gtrsim   \frac{ \delta^{2}}{ \log^{2} \left( L\right) \mu^{2}}.
\end{align*}
This shows the claim.
\end{proof}

\subsection{Proof of Theorem \ref{thm:stabilityBD}}\label{section:proofmain}
As already mentioned in Section \ref{section:proofconic}, in order to control all matrices $Z\in \dcone{h_0 m^*_0}$, which are almost orthogonal to $h_0 m^*_0$, we need the following key lemma.
\begin{lemma}\label{lemma:maximaldescent}
	Let $ h_0 m^*_0 \in \mathbb{C}^{n_1 \times n_2} $ be a rank-1 matrix.  Assume that  $ Z \in  \dcone{h_0 m^*_0}  \setminus \left\{ 0 \right\} $. Then, whenever $ \Vert h_0 m^*_0 +Z \Vert_{\ast} \le \Vert h_0 m^*_0 \Vert_{\ast} $, it holds that 
	\begin{equation*}
	\Vert Z \Vert_F \le -2 \text{Re} \left(  \innerproduct{h_0 m^*_0, \frac{1}{\Vert Z \Vert_F} Z}_F \right).
	\end{equation*}
\end{lemma}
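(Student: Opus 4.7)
The proof should be very short because the key identity and the key inequality are both elementary and essentially immediate. The main observation is that nuclear and Frobenius norms interact particularly nicely for rank-one matrices.

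My plan is to chain two estimates. First, for any matrix $X$ with singular values $\sigma_1(X), \sigma_2(X), \ldots$ we have $\|X\|_\ast = \sum_i \sigma_i(X) \ge \bigl(\sum_i \sigma_i(X)^2\bigr)^{1/2} = \|X\|_F$, simply because the $\ell_1$-norm of a nonnegative sequence dominates its $\ell_2$-norm. Second, since $h_0 m_0^*$ has rank one, this inequality is an equality at $h_0 m_0^*$, i.e.\ $\|h_0 m_0^*\|_\ast = \|h_0 m_0^*\|_F$. Setting $X = h_0 m_0^* + Z$ and combining the hypothesis $\|h_0 m_0^* + Z\|_\ast \le \|h_0 m_0^*\|_\ast$ with these two facts yields the Frobenius inequality
\begin{equation*}
\|h_0 m_0^* + Z\|_F \;\le\; \|h_0 m_0^* + Z\|_\ast \;\le\; \|h_0 m_0^*\|_\ast \;=\; \|h_0 m_0^*\|_F .
\end{equation*}

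Then I would square both sides and expand the Frobenius norm on the left using the inner product identity
$\|h_0 m_0^* + Z\|_F^2 = \|h_0 m_0^*\|_F^2 + 2\,\mathrm{Re}\,\langle h_0 m_0^*, Z\rangle_F + \|Z\|_F^2$. The $\|h_0 m_0^*\|_F^2$ terms cancel, leaving $\|Z\|_F^2 \le -2\,\mathrm{Re}\,\langle h_0 m_0^*, Z\rangle_F$. Dividing by $\|Z\|_F > 0$ gives exactly the claimed bound $\|Z\|_F \le -2\,\mathrm{Re}\,\langle h_0 m_0^*, Z/\|Z\|_F \rangle_F$.

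There is really no obstacle; the only conceptual point is to recognize that the nuclear-versus-Frobenius gap vanishes at rank-one matrices, so the monotonicity assumption on the nuclear norm transfers without loss to the Frobenius norm, after which a Pythagorean-type expansion finishes the job. The assumption $Z \in \dcone{h_0 m_0^*}\setminus\{0\}$ is not even needed explicitly in the argument beyond ensuring $Z \neq 0$; the actual driver is the stronger step-size hypothesis $\|h_0 m_0^* + Z\|_\ast \le \|h_0 m_0^*\|_\ast$ stated in the lemma.
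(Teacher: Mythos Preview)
Your proof is correct and is essentially identical to the paper's own argument: both use that $\Vert h_0 m_0^* \Vert_{\ast} = \Vert h_0 m_0^* \Vert_F$ for rank-one matrices together with $\Vert \cdot \Vert_F \le \Vert \cdot \Vert_{\ast}$, then square, expand, and rearrange. Your observation that the descent-cone membership is not actually used beyond $Z \ne 0$ is also accurate.
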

\begin{proof}
	We observe that
	\begin{align*}
	\Vert h_0 m^*_0 \Vert^2_F &=\Vert h_0 m^*_0 \Vert^2_{\ast}\\
	&\ge \Vert h_0 m^*_0 +  Z \Vert^2_{\ast}\\
	&\ge \Vert h_0 m^*_0+  Z \Vert^2_{F}\\
	&= \Vert h_0 m^*_0 \Vert^2_F + \Vert Z \Vert^2_F +  2 \text{Re} \left(  \innerproduct{h_0 m^*_0,Z}_F \right).
	\end{align*}
	Rearranging terms yields the result.
	%
	%\begin{align*}
	%t \le -2 \text{Re} \left(  \innerproduct{X^0,Z}_F \right)
	%\end{align*}
\end{proof}
\noindent Now we have gathered all tools which are needed to prove Theorem \ref{thm:stabilityBD}.
\begin{proof}[Proof of Theorem \ref{thm:stabilityBD}]
Having introduced all necessary tools in the last two sections we can now give a proof of Theorem \ref{thm:stabilityBD}. We set $\delta:= \left(\log eL\right)^{2/3} \mu^{2/3} \alpha^{1/3}$. Throughout the proof we will assume that inequality (\ref{ineq:deconvconicbound}) holds, which by Lemma \ref{lemma:deconvconicbound} holds with probability at least 
\begin{equation*}
1 -\exp \left( - \frac{L\delta^4 }{C_2 \log^4 \left(eL\right) \mu^4 }  \right)  = 1 -\exp \left( - \frac{L \alpha^{4/3} }{C_2 \log^{4/3} \left(eL\right) \mu^{4/3 } }  \right).
\end{equation*}
Let $h_0 \in \mathcal{H}_{\mu} $ and $ m_0 \in \mathbb{C}^N$. Furthermore, let $\hat{X}$ be a minimizer of (\ref{opt:SDP}) and set $Z:= \hat{X} - h_0 m^*_0 $. Note that from the minimality of $\hat{X}$ it follows that $\Vert \hat{X} \Vert_{\ast} \le \Vert h_0 m^*_0 \Vert_{\ast} $. This implies that $ Z \in  \dcone{h_0 m^*_0} $. To prove the lemma it remains to derive an appropriate upper bound on $ \Vert Z \Vert_F $. For that we will distinguish two cases, namely $ \frac{Z}{\Vert Z \Vert_F} \in \mathcal{E}_{\mu,\delta} $ and $ \frac{Z}{\Vert Z \Vert_F} \notin \mathcal{E}_{\mu,\delta} $. If $\frac{Z}{\Vert Z \Vert_F} \in \mathcal{E}_{\mu,\delta} $, it follows from inequality (\ref{ineq:deconvconicbound}) that
\begin{equation}\label{ineq:intern1001}
	\begin{split}
		\Vert Z \Vert_F &\lesssim  \frac{\log^{2} \left( L\right) \mu^{2}}{\delta^2}   \Big\Vert  \mathcal{A} \left(   Z \right)  \Big\Vert\\
		%&\overset{Z=\hat{X} - h_0 m^*_0}{\le}  \frac{\log^{3/2} \left( L\right) \mu^{3/2}}{c}  \left( \Vert \mathcal{A} \left( \hat{X} \right) -y \Vert + \Vert e \Vert  \right) \\
		& \le  \frac{\log^{2} \left( L\right) \mu^{2}}{\delta^2}  \left( \Vert \mathcal{A} \left( \hat{X} \right) -y \Vert + \Vert e \Vert  \right) \\
		& \le 2\frac{\log^{2} \left( L\right) \mu^{2}}{\delta^2} \tau\\
		& = 2\frac{  \log^{2/3} \left( L\right)  \mu^{2/3}}{\alpha^{2/3}} \tau,
	\end{split}
% & \lesssim  \frac{\log^{3/2} \left( L\right) \mu^{3/2}}{c} \tau.
 \end{equation} 
where in the second inequality we used the triangle inequality as well as $ Z= \hat{X} -h_0 m^*_0 $ and $ y = \mathcal{A}\left( h_0 m^*_0 \right) + e $. In the third inequality we used that $\hat{X}$ is feasible and $ \Vert e \Vert \le \tau $. If   $\frac{Z}{\Vert Z \Vert_F}  \notin \mathcal{E}_{\mu,\delta} $, it follows directly from the definition of $ \mathcal{E}_{\mu,\delta} $ that $ - \text{Re} \left( \innerproduct{ \frac{ h_0 m^*_0}{\Vert h_0 m^*_0 \Vert_F} , \frac{Z}{\Vert Z \Vert_F} }_F \right) <  \delta $. By Lemma \ref{lemma:maximaldescent} we obtain that  
\begin{equation}\label{ineq:intern1002}
\begin{split}
\Vert Z \Vert_F &\le - 2 \text{Re} \left( \innerproduct{  h_0 m^*_0 , \frac{1}{\Vert Z \Vert_F} Z}_F \right)\\
&< 2\delta \Vert h_0 m^*_0 \Vert_F\\
&< 2  \left( \log L \right)^{2/3} \mu^{2/3} \alpha^{1/3}  \Vert h_0 m^*_0 \Vert_F.
\end{split}
\end{equation}
Combining the estimates (\ref{ineq:intern1001}) and (\ref{ineq:intern1002}) we obtain that 
\begin{align*}
	\Vert \hat{X} - h_0 m^*_0 \Vert_F = \Vert Z \Vert_F &\lesssim  \frac{ \mu^{2/3}  \log^{2/3} L}{\alpha^{2/3}} \max \left\{  \tau  ; \alpha \Vert h_0 m^*_0 \Vert_F  \right\}.
\end{align*}
which completes the proof.
\end{proof}
%\end{proof}

\section{Outlook}\label{section:outlook}
In this paper we have analyzed two important cases of structured low-rank matrix recovery problems, blind deconvolution and matrix completion, through an inspection of the descent cone of the nuclear norm and its interaction with the measurement operator $ \mathcal{A} $. We have shown that the conic singular value is typically quite small and, consequently, previous analysis approaches cannot give strong recovery guarantees. For the example of blind deconvolution we have presented a new approach based on a refined analysis of the descent cone, showing that the nuclear norm minimization approach is stable against adversarial noise in certain important parameter regimes and allows for uniform recovery guarantees in the presence of noise. In our opinion our results give rise to a number of interesting follow-up questions.
\begin{itemize}
\item \textbf{Stability for small noise-levels: } Until now, our stability result only covers the situation that the noise level $\tau$ is of constant order (up to logarithmic factors). For small $\tau$, Theorem \ref{thm:mainresult}, respectively Theorem \ref{thm:mainresultMC}, put some barriers on what performance can be expected. Nevertheless, it will be interesting to examine the transitional case, that $\tau$ is rather small, even further. For example, while the bad conditioning for small noise levels has been established, it remains open whether one can construct a noise vector $e$ such that the true minimizer behaves like the alternative (but non-minimal) solutions constructed in Theorems \ref{thm:mainresult} and \ref{thm:mainresultMC}. Also the transitional behavior of the minimum conic singular values between very small noise levels (where we established bad conditioning) and larger noise levels (where at least for randomized blind deconvolution, we proved stability) will be an interesting question to study.

% say that there is an admissible $\hat{X}$ such that $\Vert \hat{X} \Vert_{\ast} \le \Vert X_0 \Vert_{\ast} $ but $\Vert \hat{X} - X_0 \Vert_F$ is rather large compared to $\tau$. However, $\hat{X}$ does not need to be the minimizer of the SDP (\ref{opt:SDP}), respectively the SDP (\ref{SDP_MC}) and it could be that the minimizer of this SDP is much closer with respect to the $X_0$ in $ \Vert \cdot  \Vert_F$-distance. In our opinion an interesting future research direction is to examine, what one can say about $\Vert \tilde{X} - h_0 m^*_0 \Vert_F $, if $\tilde{X}$ denotes the actual minimizer of the SDP (\ref{opt:SDP}). Are the additional dimension factors still necessary in this case? 

\item \textbf{Extension to the rank $r$ case:} %So far, our analysis is restricted to rank-one matrices. But for example, in Matrix Completion the matrix to be recovered has often rank higher than one. 
Understanding nuclear norm recovery for matrix completion under adversarial noise remains an important open problem in the field. While our result established that recovery guarantees for arbitrary noise levels are not feasible, our considerations for the rank one scenario give hope that for sufficiently large noise levels, near optimal guarantees are within reach also for matrices of arbitrary rank.

Similarly, a natural generalization of blind deconvolution is the problem of blind demixing \cite{ling2017blind,jung2018blind}, where one observes a noisy superposition of several convolutions, that is, $y= \sum_{i=1}^{r} w_i \ast x_i +e$. The corresponding low-rank matrix formulation can be interpreted as a rank $r$ version of the randomized blind deconvolution problem.

We expect that a rank $r$ version of Theorem~\ref{thm:stabilityBD} will apply to both these scenarios, which is why we consider this a very promising direction for future research.

\item \textbf{Extension to other low-rank matrix recovery models}: Various other low-rank matrix models also involve incoherence in some way, for example,  robust PCA (\cite{candes2011robust}) and spectral compressed sensing via matrix completion \cite{chen2013spectral}. Also for these problems, recovery results are typically proven via the Golfing Scheme and lead to a seemingly suboptimal noise bound (see, e.g.,  \cite[Section VI]{zhou2010stable}). Can these problems be analyzed with the methods developed in this paper?\\
Moreover, \cite{krahmer2018phase} provided an incoherence based analysis of the phase retrieval problem under random Bernoulli measurements. It will be interesting to analyze this setup with similar methods as in this manuscript. 
\end{itemize}

\subsection*{Acknowledgements}

The authors would like to thank Richard Kueng for discussions related to this topic, which sparked their interest in the problem. Furthermore, the authors want to thank Peter Jung, Marius Junge, and Kiryung Lee for fruitful discussions. This work has been supported by the German Science Foundation (DFG) in the context of the project {\em Bilinear Compressed Sensing} (KR 4512/2-1) as a part of the Priority Program 1798, and the Emmy-Noether Junior Research Group {\em Randomized Sensing and Quantization of Signals and Images} (KR 4512/1-1).

\bibliographystyle{abbrv}
\bibliography{literature}

\appendix

\section{Proof of Lemma \ref{lemma:smallballmethod}}\label{section:auxlemma}
The proof of Lemma \ref{lemma:smallballmethod} will rely on the following two lemmas. The first lemma is a version of Mendelson's small-ball method for non-i.i.d. measurements. In order to state it let $X_1, \ldots, X_{L}$ be independent, matrix-valued random variables defined on a probability space $ \left(\Omega, \mu \right) $. For every measurable, real-valued function $f$ and for every $ \xi > 0 $ we define the quantity
\begin{align*}
%Q_{\ell,\xi} \left( f\right) &:= \mathbb{P} \left( \fmend \ge  \xi  \right) \\
%Q_{\xi} \left( f\right) &:= \sum_{\ell=1}^{L}  Q_{\ell,\xi} \left( f\right)    =\sum_{\ell=1}^{L} \mathbb{P} \left( \fmend \ge  \xi  \right) 
Q_{\xi} \left( f\right) & =\sum_{\ell=1}^{L} \mathbb{P} \left( \fmend \ge  \xi  \right).
\end{align*}

\begin{lemma}\label{lemma:smallballmethodgeneral}
	Let $ X_1, \ldots, X_L \in \mathbb{C}^{K \times N} $ be independent random variables and $  \mathcal{F} $ be a set of real-valued functions, which are measurable with respect to $ \left(\Omega, \mu \right) $. Let $t>0$ and $\xi>0$. Then with probability at least $ 1- \exp \left(-2t^2\right) $ it holds that 
	\begin{align*}
	\inff \Big\vert \left\{ \ell \in    \left[L\right]: \  \fmend \ge \xi   \right\}   \Big\vert  \ge   \inff Q_{2\xi} \left( f\right) - \frac{2}{\xi} \mathbb{E} \left[    \supf \sum_{\ell=1}^{L} \varepsilon_{\ell} \fmend  \right] -  t  \sqrt{L},
	\end{align*}
	where $ \varepsilon_1, \ldots, \varepsilon_{L} $ are independent Rademacher variables, i.e., random variables which take the two values $\pm 1 $ each with probability $ \frac{1}{2} $.
\end{lemma}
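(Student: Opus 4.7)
\textbf{Proof proposal for Lemma \ref{lemma:smallballmethodgeneral}.} The plan is the standard Mendelson small-ball recipe (indicator replacement, concentration, symmetrization, contraction), adapted to the non-i.i.d.\ setting by invoking McDiarmid's inequality rather than Talagrand's concentration for empirical processes. First I would introduce the truncation $\psi_\xi:\R\to[0,1]$ defined by $\psi_\xi(x)=0$ for $x\le\xi$, $\psi_\xi(x)=(x-\xi)/\xi$ for $x\in[\xi,2\xi]$, and $\psi_\xi(x)=1$ for $x\ge2\xi$. This function is $(1/\xi)$-Lipschitz, satisfies $\psi_\xi(0)=0$, and sandwiches the indicators: $\mathbf 1_{\{x\ge2\xi\}}\le\psi_\xi(x)\le\mathbf 1_{\{x\ge\xi\}}$. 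Consequently, for every $f\in\mathcal F$,
\begin{equation*}
\big|\{\ell:f(X_\ell)\ge\xi\}\big|\;\ge\;\sum_{\ell=1}^L\psi_\xi(f(X_\ell))\;=\;\sum_{\ell=1}^L\E\psi_\xi(f(X_\ell))\;-\;\Big(\sum_{\ell=1}^L\E\psi_\xi(f(X_\ell))-\psi_\xi(f(X_\ell))\Big),
\end{equation*}
and the first (deterministic) term is lower bounded by $\sum_\ell\mathbb P(f(X_\ell)\ge2\xi)=Q_{2\xi}(f)$.

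Taking the infimum over $f\in\mathcal F$, it then suffices to upper bound
\begin{equation*}
S\;:=\;\sup_{f\in\mathcal F}\sum_{\ell=1}^L\big(\E\psi_\xi(f(X_\ell))-\psi_\xi(f(X_\ell))\big)
\end{equation*}
with high probability. Since $\psi_\xi$ takes values in $[0,1]$ and the $X_\ell$ are independent, replacing a single $X_\ell$ changes each summand by at most $1$, so $S$ satisfies a bounded differences inequality with constants $c_\ell\le1$ and $\sum_\ell c_\ell^2\le L$. McDiarmid's inequality therefore yields
\begin{equation*}
\mathbb P\big(S\ge\E S+t\sqrt L\big)\;\le\;\exp(-2t^2).
\end{equation*}

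The remaining task is to bound $\E S$. Standard one-sided symmetrization of the empirical process (introducing independent Rademacher signs $\varepsilon_\ell$ and a ghost sample) gives
\begin{equation*}
\E S\;\le\;2\,\E\sup_{f\in\mathcal F}\sum_{\ell=1}^L\varepsilon_\ell\,\psi_\xi(f(X_\ell)).
\end{equation*}
Since $\psi_\xi$ is $(1/\xi)$-Lipschitz with $\psi_\xi(0)=0$, the Ledoux--Talagrand contraction principle for Rademacher averages yields
\begin{equation*}
\E\sup_{f\in\mathcal F}\sum_{\ell=1}^L\varepsilon_\ell\,\psi_\xi(f(X_\ell))\;\le\;\frac{1}{\xi}\,\E\sup_{f\in\mathcal F}\sum_{\ell=1}^L\varepsilon_\ell\,f(X_\ell),
\end{equation*}
conditionally on $(X_\ell)$ and hence in expectation. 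Chaining these estimates together gives exactly the bound claimed in the lemma, with the probability guarantee $1-\exp(-2t^2)$ inherited from the McDiarmid step.

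The routine but delicate point is to verify the constants: symmetrization in its one-sided form produces a factor $2$ (not $4$), and the contraction inequality in the Rademacher setting can be stated without the extra factor of $2$ that appears for absolute-value suprema, which is what allows the stated prefactor $2/\xi$. The substantive ingredient is the observation that the boundedness of $\psi_\xi$ makes McDiarmid available in the absence of independence between \emph{summands of different $f$'s} and in the absence of identical distribution of the $X_\ell$; everything else is bookkeeping.
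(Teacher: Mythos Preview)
Your proposal is correct and follows essentially the same route as the paper's proof: both use the piecewise-linear truncation $\psi_\xi$ sandwiched between the two indicators, the bounded differences (McDiarmid) inequality to concentrate the supremum, Gin\'e--Zinn symmetrization, and the Ledoux--Talagrand contraction principle to remove $\psi_\xi$. The only cosmetic difference is the order of operations---you introduce $\psi_\xi$ before the concentration step and apply McDiarmid to $\sup_f\sum_\ell(\E\psi_\xi(f(X_\ell))-\psi_\xi(f(X_\ell)))$, whereas the paper applies McDiarmid directly to the indicator version $\sup_f\sum_\ell(\mathbb P(f(X_\ell)\ge 2\xi)-\mathbf 1_{\{f(X_\ell)\ge\xi\}})$ and only then passes to $\Psi_\xi$ to bound the expectation; either ordering is valid since both quantities have bounded differences of size at most $1$ per coordinate.
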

The proof of Lemma \ref{lemma:smallballmethodgeneral} is exactly analogous as the proof of the original small-ball method \cite{koltchinskii2015bounding}, see Section \ref{section:auxlemma1}. The second auxiliary lemma, proved in Section \ref{section:auxlemma2}, relates the quantity $\mathbb{E} \left[    \supf \sum_{\ell=1}^{L} \varepsilon_{\ell} \fmend  \right]$ in the blind deconvolution framework to the Gaussian width (cf. Definition \ref{def:Gaussianwidth}). 
\begin{lemma}\label{lemma:meanempiricalwidth}
	Let $ \mathcal{E} \subset \mathbb{C}^{K \times N}$. Then it holds that
	\begin{equation*}
	\mathbb{E} \left[ \underset{X \in \mathcal{E}}{\sup} \ \real \left( \sum_{\ell=1}^{L} b^*_{\ell} X c_{\ell} \right)  \right] = \omega \left( \mathcal{E} \right).
	\end{equation*}
\end{lemma}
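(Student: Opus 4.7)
The plan is to identify $\sum_{\ell=1}^L b^*_\ell X c_\ell$ as a Frobenius inner product against a random matrix that turns out to have the same distribution as the standard iid complex Gaussian matrix appearing in Definition \ref{def:Gaussianwidth}. Once this is established, taking the real part, the supremum over $X\in\mathcal{E}$, and the expectation reproduces $\omega(\mathcal{E})$ by definition, and the lemma follows immediately.

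Using the identity $b^*_\ell X c_\ell = \innerproduct{b_\ell c^*_\ell, X}_F$ already exploited in Section \ref{section:backgroundBD}, I would first rewrite
$$\sum_{\ell=1}^L b^*_\ell X c_\ell = \innerproduct{G, X}_F, \qquad G := \sum_{\ell=1}^L b_\ell c^*_\ell \in \mathbb{C}^{K \times N}.$$
Because the entries $(c_\ell)_j$ are jointly iid with distribution $\mathcal{CN}(0,1)$, $G$ is a linear image of a joint complex Gaussian vector, so it is itself a mean-zero complex Gaussian random matrix. To conclude that $G$ has the same distribution as the iid $\mathcal{CN}(0,1)$ matrix used in Definition \ref{def:Gaussianwidth}, it is therefore enough to compute its covariance structure.

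The ``anti''-covariance $\mathbb{E}[G_{ij} G_{i'j'}]$ vanishes for all $i,i',j,j'$ by circular symmetry of the $(c_\ell)_j$. For the Hermitian covariance, the independence of the $(c_\ell)_j$ across $(\ell,j)$ yields
$$\mathbb{E}\bigl[G_{ij}\,\overline{G_{i'j'}}\bigr] \;=\; \delta_{jj'}\sum_{\ell=1}^{L}(b_\ell)_i\,\overline{(b_\ell)_{i'}}.$$
Writing $A = \overline{FB}$ so that $b_\ell$ is the $\ell$-th row of $A$, the remaining sum equals $(A^* A)_{i'i}$. A direct computation gives $A^* A = B^T F^T \overline{F}\,\overline{B} = B^T\overline{B} = \overline{B^* B} = \Id_K$, where I use that $F$ is symmetric and unitary (so $F^T\overline{F} = \Id_L$) together with the assumption $B^* B = \Id_K$. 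Hence $\mathbb{E}[G_{ij}\overline{G_{i'j'}}] = \delta_{ii'}\delta_{jj'}$, so $G \stackrel{d}{=} \tilde G$ for $\tilde G$ the standard iid complex Gaussian matrix from Definition~\ref{def:Gaussianwidth}, and the claim follows.

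No serious obstacle is anticipated; the calculation is essentially a covariance check. The only mildly delicate point is keeping the conjugations and transposes straight: because the rows $b_\ell$ are taken from $\overline{FB}$ rather than $FB$, one must invoke $F^T\overline{F} = \Id_L$ (which follows from $F$ being a symmetric unitary) instead of $F^*F = \Id_L$ directly.
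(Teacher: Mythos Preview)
Your proposal is correct and follows essentially the same route as the paper: both rewrite $\sum_\ell b^*_\ell X c_\ell$ as the Frobenius inner product of $X$ against $G=\sum_\ell b_\ell c^*_\ell$, then verify that $G$ has iid $\mathcal{CN}(0,1)$ entries via the identity $\sum_\ell b_\ell b^*_\ell = \Id_K$. Your version is slightly more explicit in two places---you check the pseudo-covariance vanishes (which the paper leaves implicit) and you spell out why $\sum_\ell b_\ell b^*_\ell = \Id_K$ via $A^*A$ with $A=\overline{FB}$---but these are elaborations of the same argument, not a different approach.
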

With these lemmas we can now prove Lemma \ref{lemma:smallballmethod}.
\begin{proof}[Proof of Lemma \ref{lemma:smallballmethod}]
	Set $X_{\ell} := b_{\ell} c^*_{\ell} $ for all $ \ell \in \left[L\right] $ and define 
	\begin{equation*}
	\mathcal{F} := \left\{ \vert \innerproduct{ M, \cdot }_F \vert  : M \in \mathcal{E}    \right\} .
	\end{equation*}
	Then, by a direct application of Lemma \ref{lemma:smallballmethodgeneral} we obtain that with probability at least $ 1-\exp \left(-2t^2\right) $ it holds that
	\begin{equation}\label{ineq:intern4444}
	\begin{split}
	&\underset{M\in \mathcal{E}}{\inf} \Big\vert \left\{ \ell \in    \left[L\right]: \  \vert \innerproduct{ M, b_{\ell} c^*_{\ell} }_F \vert \ge \xi   \right\}   \Big\vert \\ \ge &  \underset{M \in \mathcal{E}}{\inf} \left( \sum_{\ell=1}^{L} \mathbb{P} \left( \vert \langle b_{\ell} c^*_{\ell}, M \rangle_F \vert \ge  2\xi  \right)   \right) 
	- \frac{2}{\xi} \mathbb{E} \left[   \underset{M \in \mathcal{E}}{\sup} \sum_{\ell=1}^{L} \varepsilon_{\ell} \vert \innerproduct{  b_{\ell} c^*_{\ell}, M }_F \vert  \right] -  t  \sqrt{L}.
	\end{split}
	\end{equation}
	To bound the second summand, we observe that
	\begin{equation}\label{ineq:intern4445}
	\begin{split}
	&\mathbb{E} \left[   \underset{M \in \mathcal{E}}{\sup} \sum_{\ell=1}^{L} \varepsilon_{\ell} \vert \innerproduct{  b_{\ell} c^*_{\ell}, M }_F \vert  \right]\\
	\le & \mathbb{E} \left[   \underset{M \in \mathcal{E}}{\sup} \sum_{\ell=1}^{L} \varepsilon_{\ell} \vert \text{Re} \left( \innerproduct{  b_{\ell} c^*_{\ell}, M }_F \right) \vert  \right] + \mathbb{E} \left[   \underset{M \in \mathcal{E}}{\sup} \sum_{\ell=1}^{L} \varepsilon_{\ell} \vert \text{Im} \left( \innerproduct{  b_{\ell} c^*_{\ell}, M }_F \right) \vert  \right] \\
	= &  2\ \mathbb{E} \left[   \underset{M \in \mathcal{E}}{\sup} \sum_{\ell=1}^{L} \varepsilon_{\ell} \vert \text{Re} \left( \innerproduct{  b_{\ell} c^*_{\ell}, M }_F \right) \vert  \right]\\
	= &  2 \ \mathbb{E} \left[   \underset{M \in \mathcal{E}}{\sup} \sum_{\ell=1}^{L} \varepsilon_{\ell} \text{Re} \left( \innerproduct{  b_{\ell} c^*_{\ell}, M }_F \right)   \right]\\
	= & 2 \omega \left( \mathcal{E} \right)
	\end{split}
	\end{equation}
	where in the third line we used that $ \text{Re} \left( \innerproduct{  b_{\ell} c^*_{\ell}, X }_F \right)$  and $\text{Im} \left( \innerproduct{  b_{\ell} c^*_{\ell}, X }_F \right) $ have the same distribution. The fourth line follows from the symmetry of the set $ \mathcal{E} $ and the last line is due to Lemma \ref{lemma:meanempiricalwidth}. Combining (\ref{ineq:intern4444}) and (\ref{ineq:intern4445}) finishes the proof.
\end{proof}

%\subsection{Proof a version of Mendelson's small-ball method for non-i.i.d. measurements}\label{section:auxlemma1}
\subsection{Proof of Lemma \ref{lemma:smallballmethodgeneral}}\label{section:auxlemma1}
We directly trace the steps of the proof of Theorem 1.5 in \cite{koltchinskii2015bounding}. In the following $ \mathbbm{1}_A $ denotes the indicator function, which takes the value $1$, if the event $A$ occurs and the value $0$ otherwise. Note that
\begin{align*}
\xi \  \Big\vert \left\{ \ell \in    \left[L\right]: \  \fmend \ge \xi  \right\}   \Big\vert =   \xi  \sum_{\ell=1}^{L} \mathbbm{1}_{\left\{ \fmend  \ge \xi  \right\}}.
\end{align*}
Taking the infimum we observe that by the definition of $ Q_{2\xi} $
\begin{equation}\label{eq:estimate1}
\begin{split}
& \xi \inff  \  \Big\vert \left\{ \ell \in    \left[L\right]: \  \fmend \ge \xi \right\} \Big\vert\\
\ge  & \xi  \underset{f \in \mathcal{F}}{\inf} Q_{2\xi} \left( f \right)   - \xi \ \supf     \sum_{\ell=1}^{L} \left( \mathbb{P} \left( \fmend \ge 2  \xi  \right)   - \mathbbm{1}_{ \left\{ \fmend \ge \xi  \right\} }\right).
\end{split}
\end{equation}
The bounded difference inequality (see, for example, \cite{boucheron2013concentration}) implies that with probability at least $1 - \exp \left( -2t^2 \right) $ it holds that
\begin{equation}\label{eq:estimate2}
\begin{split}
& \supf   \sum_{\ell=1}^{L} \left( \mathbb{P} \left( \fmend \ge 2 \xi  \right)    - \mathbbm{1}_{\left\{ \fmend \ge  \xi  \right\} } \right)  \\
\le & \mathbb{E} \left[  \supf  \sum_{\ell=1}^{L} \left( \mathbb{P} \left( \fmend \ge 2 \xi  \right)     - \mathbbm{1}_{\left\{ \fmend \ge \xi  \right\} }  \right)   \right]   + t \sqrt{L}\\
=&\mathbb{E} \left[  \supf     \sum_{\ell=1}^{L} \left(  \mathbb{E} \left[ \mathbbm{1}_{\left\{ \fmend \ge 2 \xi  \right\} } \right]    - \mathbbm{1}_{\left\{ \fmend \ge \xi  \right\} } \right)    \right] + t \sqrt{L}
\end{split}
\end{equation}
To deal with the expectation we will use the function $ \Psi_{\xi}: [0, + \infty ) \longrightarrow \mathbb{R} $ defined by
\begin{equation*}
\Psi_{\xi} \left( u \right) = \begin{cases}
0  \quad & 0 \le u \le \xi \\
\frac{1}{\xi} \left(u - \xi \right) \quad & \xi \le u \le 2 \xi\\
1  \quad & u \ge 2 \xi
\end{cases}.
\end{equation*}
We observe that $\Psi_{\xi}$ is Lipschitz continuous with Lipschitz constant $1/\xi$. Furthermore, for all $u \in [0,+\infty )$ it holds that $ \mathbbm{1}_{\left\{ u \ge  2\xi \right\}}  \le \Psi_{\xi} \left( u \right) \le \mathbbm{1}_{\left\{ u \ge \xi \right\}} $. Combining this monotonicity relation with Gine-Zinn symmetrization (see, e.g., \cite[Lemma 2.3.1]{van1996weak}) and the Rademacher comparison principle for Lipschitz continuous functions (see, e.g., \cite[Equation (4.20)]{ledoux2013probability}), we obtain that
\begin{align*}
& \mathbb{E} \left[ \supf    \sum_{\ell=1}^{L} \left( \mathbb{E} \left[\mathbbm{1}_{\left\{ \fmend \ge 2 \xi  \right\}}  \right]  - \mathbbm{1}_{ \left\{ \fmend \ge \xi  \right\}}   \right)  \right] \\
\le &  \mathbb{E} \left[ \supf  \sum_{\ell=1}^{L} \left( \mathbb{E} \left[ \Psi_{\xi} \left( \fmend \right)  \right]  - \Psi_{\xi} \left( \fmend \right)  \right)   \right]\\
\le & 2 \ \mathbb{E} \left[   \supf \sum_{\ell=1}^{L} \varepsilon_{\ell}  \Psi_{\xi} \left(  \fmend \right)     \right]\\
\le & \frac{2}{\xi} \ \mathbb{E} \left[  \supf \sum_{\ell=1}^{L} \varepsilon_{\ell} \fmend   \right].
\end{align*}
Together with the inequality chains (\ref{eq:estimate1}) and (\ref{eq:estimate2}), this completes the proof.
\qed

\subsection{Proof of Lemma \ref{lemma:meanempiricalwidth}}\label{section:auxlemma2}
First, we observe that
\begin{equation*}
\mathbb{E} \left[ \underset{X \in \mathcal{E}}{\sup} \ \real \left( \sum_{\ell=1}^{L} b^*_{\ell} X c_{\ell} \right)  \right] = 	\mathbb{E} \left[ \underset{X \in \mathcal{E}}{\sup} \ \real \left( \innerproduct{X, \sum_{\ell=1}^{L} b_{\ell} c^*_{\ell} }_F \right)  \right]
\end{equation*}
Note that due to the definition of $ \mathcal{\omega} \left( \mathcal{E} \right) $ in order to finish the proof it is enough to show that the entries of the matrix $ X= \sum_{\ell =1}^L b_{\ell} c^*_{\ell} $ are independent and identically distributed with distribution $ \mathcal{CN} \left(0, 1  \right) $. For that, let $\left(i,j\right) \in \left[K\right] \times \left[N\right] $ and compute that
\begin{align*}
\mathbb{E} \left[ \vert e^*_i \left( \sum_{\ell=1}^{L} b_{\ell} c^*_{\ell} \right) e_j  \vert^2  \right] &= \sum_{\ell =1}^L e^*_i b_{\ell} b^*_{\ell} e_i \mathbb{E}  \left[ \vert c^*_{\ell} e_j \vert^2 \right]^2 = \sum_{\ell =1}^L e^*_i b_{\ell} b^*_{\ell} e_i =1.
\end{align*}
This implies that $   e^*_i \left( \sum_{\ell=1}^{L} b_{\ell} c^*_{\ell} \right) e_j  \in \mathcal{CN} \left(0,1\right) $. It remains to show that the individual entries of the matrix $ \sum_{\ell =1}^m b_{\ell} c^*_{\ell} $ are independent. For that, we set 
\begin{equation*}
X_{i,j}:= \left( \sum_{\ell =1}^m b_{\ell} c^*_{\ell} \right)_{i,j} = e^*_i \left( \sum_{\ell=1}^{L} b_{\ell} c^*_{\ell} \right) e_j.
\end{equation*}
Now let $ \left(i,j\right), \left(i',j'\right) \in \left[K\right] \times \left[N\right]  $ such that $ \left(i,j\right) \ne \left(i',j'\right) $. Our goal is to show that $\mathbb{E} \left[  X_{i,j} \overline{X}_{i',j'}  \right] = 0$.  %For that let additionally $ \left( i',j' \right) \in \left[K\right] \times \left[N\right] \backslash \left(i,j\right) $. Let $X= e^*_i \left( \sum_{\ell=1}^{L} b_{\ell} c^*_{\ell} \right) e_j $ and $ X' = e^*_{i'}\left( \sum_{\ell=1}^{L} b_{\ell} c^*_{\ell} \right) e_{j'} $. %Note that it is enough to show that $ \mathbb{E} \left[ X X' \right]=0 $, as the entries of $ \sum_{\ell =1}^m b_{\ell} c^*_{\ell} $ are jointly Gaussian. 
If $j \ne j' $ this follows immediately from the observation that $c^*_{\ell} e_j$ and $c^*_{\ell} e_{j'} $ are independent for all $\ell \in \left[L\right] $. Now assume that $j=j'$. Then we can compute that
\begin{equation*}
\mathbb{E} \left[  X_{i,j} \overline{X}_{i',j'}  \right] = \sum_{\ell=1}^{L} e^*_{i} b_{\ell} b^*_{\ell} e_{i'} \vert c^*_{\ell} e_j \vert^2  = \sum_{\ell=1}^{L} e^*_{i} b_{\ell} b^*_{\ell} e_{i'} =0.
\end{equation*}
Hence, we have shown that all entries of the matrix $X $ are uncorrelated. As the entries of $X$ are jointly Gaussian this implies that they are independent, which completes the proof.

\qed

\section{Proof of Lemma \ref{lemma:BDpaleyzygmundapplied}}\label{appendix:paleyzygmund}

\begin{proof}[Proof of Lemma \ref{lemma:BDpaleyzygmundapplied}]
	Let $ \ell \in \left[L \right] $ such that $ \Vert  X^* b_{\ell} \Vert \ge 4 \xi  $. Using the Paley-Zygmund inequality (see, e.g., \cite{decoupling}) we obtain that
	\begin{align*}
	\mathbb{P} \left( \vert \innerproduct{b_{\ell} c^*_{\ell}, X }_F \vert \ge 2 \xi  \right) & \ge  \mathbb{P} \left( \vert \innerproduct{b_{\ell} c^*_{\ell}, X }_F \vert \ge \frac{1}{2}  \Vert X^* b_{\ell} \Vert   \right)\\
	&\ge  \frac{ \left(  \E \left[ \vert \innerproduct{b_{\ell} c^*_{\ell}, X }_F \vert^2   \right]    - \frac{1}{4} \Vert X^* b_{\ell} \Vert^2 \right)^2 }{\E \left[ \vert \innerproduct{b_{\ell} c^*_{\ell}, X }_F \vert^4 \right]}\\
	& = \frac{  \left( \Vert X^* b_{\ell} \Vert^2  - \frac{1}{4} \Vert X^* b_{\ell} \Vert^2  \right)^2  }{  2 \Vert X^* b_{\ell} \Vert^4 }  = \frac{9}{32}.
	\end{align*}
	(We used that $ \mathbb{E} \vert \innerproduct{ b_{\ell} c_{\ell}^*, X }_F \vert^2 =  \Vert X^* b_{\ell} \Vert^2 $ and $ \mathbb{E} \vert \innerproduct{ b_{\ell} c_{\ell}^*, X }_F \vert^4 = 2 \Vert X^* b_{\ell} \Vert^4   $, which is due to $ \innerproduct{ b_{\ell} c_{\ell}^*, X }_F \sim \mathcal{CN} \left( 0, \Vert X^* b_{\ell} \Vert  \right) $.) %and $ \mathbb{E} g^4 =2$, if $g$ is a random variable with distribution $ \mathcal{CN} \left(0,1\right) $.) 
	Summing over all $ \ell \in \left[ L \right] $ such that $ \Vert X^* b_{\ell} \Vert \ge 4  \xi  $ yields the claim.
\end{proof}

\newpage

\end{document}